\newcommand{\boundellipse}[3]
{(#1) ellipse (#2 and #3)
}
\tikzset{squiggly/.style={decorate, decoration=snake}}
\tikzset{super thick/.style={line width=3pt}}
\tikzstyle{far>}=[decoration={markings, mark=at position 0.75 with {\arrow{>}}}, postaction={decorate}]
\tikzstyle{mid>}=[decoration={markings, mark=at position 0.55 with {\arrow{>}}}, postaction={decorate}]
\tikzstyle{mid<}=[decoration={markings, mark=at position 0.55 with {\arrow{<}}}, postaction={decorate}]
\tikzset{super thick/.style={line width=3pt}}
\tikzstyle{far>}=[decoration={markings, mark=at position 0.75 with {\arrow{>}}}, postaction={decorate}]
\tikzstyle{mid>}=[decoration={markings, mark=at position 0.55 with {\arrow{>}}}, postaction={decorate}]
\tikzstyle{mid<}=[decoration={markings, mark=at position 0.55 with {\arrow{<}}}, postaction={decorate}]
\tikzstyle{knot}=[preaction={super thick, white, draw}]
\tikzstyle{coupon}=[draw, very thick, rectangle, rounded corners=5pt]
\tikzset{Rightarrow/.style={double equal sign distance,>={Implies},->},
triplecd/.style={-,preaction={draw,Rightarrow}},
quadruplecd/.style={preaction={draw,Rightarrow,
shorten >=0pt
},
shorten >=1pt,
-,double,double
distance=0.2pt}}
\tikzset{
    tripleline/.style args={[#1] in [#2] in [#3]}{
        #1,preaction={preaction={draw,#3},draw,#2}
    }
}
\tikzstyle{triple}=[tripleline={[line width=.15mm,black] in
\tikzset{
    quadrupleline/.style args={[#1] in [#2] in [#3] in [#4]}{
        #1,preaction={preaction={preaction={draw,#4},draw,#3}, draw,#2}
    }
}
\tikzstyle{quadruple}=[quadrupleline={[line width=.3mm,white] in
\newcommand\arXiv[1]{\href{http://arxiv.org/abs/#1}{\nolinkurl{arXiv:#1}}}
\newcommand\MRnumber[1]{\href{http://www.ams.org/mathscinet-getitem?mr=#1}{\nolinkurl{MR#1}}}
\newcommand\DOI[1]{\href{http://dx.doi.org/#1}{\nolinkurl{DOI:#1}}}
\newcommand\MAILTO[1]{\href{mailto:#1}{\nolinkurl{#1}}}
\newcounter{mainthm}
\newtheorem{maintheorem}[mainthm]{Theorem}
\newtheorem{dummy}{Dummy}[section]
\newtheorem{lemma}[dummy]{Lemma}
\newtheorem{proposition}[dummy]{Proposition}
\newtheorem{corollary}[dummy]{Corollary}
\newtheorem{theorem}[dummy]{Theorem}
\theoremstyle{definition}
\newtheorem{definition}[dummy]{Definition}
\newtheorem{warning}[dummy]{Warning}
\newtheorem{construction}[dummy]{Construction}
\newtheorem{notation}[dummy]{Notation}
\newtheorem{claim}[dummy]{Claim}
\newtheorem{assumption}[dummy]{Premise}
\theoremstyle{definition}
\newtheorem{rem}[dummy]{Remark}
\newtheorem{example}[dummy]{Example}
\renewcommand\mathbb\mathds
\newcommand\Z{\mathbb Z}
\newcommand\cA{\mathcal A}
\newcommand\cC{\mathcal C}
\newcommand\cD{\mathcal D}
\newcommand\cF{\mathcal F}
\newcommand\cL{\mathcal L}
\newcommand\cM{\mathcal M}
\newcommand\cN{\mathcal N}
\newcommand\cT{\mathcal T}
\newcommand\cZ{\mathcal Z}
\newcommand\rB{\mathrm B}
\newcommand\rH{\mathrm H}
\newcommand\rL{\mathrm L}
\newcommand\rU{\mathrm U}
\newcommand\rZ{\mathrm Z}
\newcommand\Sp{\mathbf{Sp}}
\newcommand{\cn}{\mathrm{cn}}
\newcommand{\st}{\mathrm{st}}
\newcommand\pt{\mathrm{pt}}
\newcommand{\FZ}{\text{\usefont{U}{euf}{m}{n}Z}}
\DeclareMathOperator\homology{H}
\renewcommand\H{\homology}
\newcommand{\dev}[1]{{\bf \color{blue} [DS: #1]}}
\newcommand{\GL}{\mathrm{GL}}
\newcommand{\id}{\mathrm{id}}
\newcommand{\Mor}{\mathbf{Mor}}
\newcommand{\bTheta}{\mathbf{\Theta}}
\newcommand{\op}{\mathrm{op}}
\newcommand{\Bord}{\cat{Bord}}
\newcommand{\Aff}{\mathbf{Aff}}
\newcommand\longto\longrightarrow
\newcommand\mono\hookrightarrow
\newcommand\epi\twoheadrightarrow
\newcommand\<\langle
\renewcommand\>\rangle
\newcommand\sminus\smallsetminus
\DeclareFontFamily{U}{min}{}
\DeclareFontShape{U}{min}{m}{n}{<-> udmj30}{}
\newcommand\yo{\!\text{\usefont{U}{min}{m}{n}\symbol{'207}}\!}
\DeclareMathOperator\End{End}
\DeclareMathOperator\Spec{Spec}
\newcommand\Vect{\mathbf{Vect}}
\newcommand\tVect{\mathbf{2Vect}}
\newcommand\thVect{\mathbf{3Vect}}
\newcommand\nVect{\mathbf{nVect}}
\newcommand{\npVect}{\mathbf{(n+1)Vect}}
\newcommand\nQcoh{\mathbf{nQCoh}}
\newcommand\noQcoh{\mathbf{(n-1)QCoh}}
\newcommand{\npQcoh}{\mathbf{(n+1)QCoh}}
\newcommand\Rep{\cat{Rep}}
\newcommand\nRep{\mathbf{nRep}}
\newcommand{\npRep}{\mathbf{(n+1)Rep}}
\newcommand\Mod{\cat{Mod}}
\newcommand{\CAlg}{\mathbf{CAlg}}
\newcommand{\Alg}{\mathbf{Alg}}
\newcommand{\Bimod}{\mathbf{Bimod}}
\newcommand{\Sch}{\mathbf{Sch}}
\newcommand{\Stk}{\mathbf{Stk}}
\newcommand{\prstk}{\mathbf{PStk}}
\newcommand{\fib}{\mathrm{Fib}}
\newcommand{\cofib}{\mathrm{coFib}}
\newcommand{\Qcoh}{\mathbf{QCoh}}
\newcommand{\Corr}{\mathbf{2Corr}}
\newcommand{\nCorr}{\mathbf{nCorr}}
\newcommand{\Cat}{\mathbf{Cat}}
\renewcommand{\Pr}{\mathbf{Pr}}
\newcommand{\nPr}{\mathbf{nPr}}
\newcommand{\Ab}{\mathbf{Ab}}
\newcommand{\Fun}{\mathrm{Fun}}
\newcommand{\Set}{\mathbf{Set}}
\newcommand{\Fin}{\mathbf{Fin}}
\newcommand{\Spaces}{\infty\mathbf{Gpd}}
\DeclareMathOperator{\colim}{colim}
\let\lim\relax
\DeclareMathOperator{\lim}{lim}
\newcommand{\Sh}{\mathbf{Sh}}
\newcommand{\Hom}{\mathrm{Hom}}
\newcommand{\fd}{\mathrm{fd}}
\newcommand\cat[1]{\mathbf{#1}}
\title{Geometric Categories for Continuous Gauging}
\date{}
\begin{document}

\author{Devon Stockall}
\address{\textnormal{\url{stockall@imada.sdu.dk}}, University of Southern Denmark}

\author{Matthew Yu}
\address{\textnormal{\url{yumatthew70@gmail.com}}, University of Oxford}

\begin{abstract}

We develop a unified categorical framework for gauging both continuous and finite symmetries in arbitrary spacetime dimensions. Our construction applies to geometric categories i.e. categories internal to stacks.  This generalizes the familiar setting of fusion categories, which describe finite group symmetries, to the case of Lie group symmetries. Within this framework, we obtain a functorial Symmetry Topological Field Theory together with its natural boundaries, allowing us to compute associated endomorphism categories and Drinfeld centers in a uniform way. For a given symmetry group $G$, our framework recovers the electric and magnetic higher-form symmetries expected in $G$-gauge theory. Moreover, it naturally encodes electric breaking symmetry in the presence of charged matter, reproducing known physical phenomena in a categorical setting. 
\end{abstract}
\maketitle
\tableofcontents

\section{Introduction}\label{section:intro}

Finite symmetries of a QFT are well captured by higher fusion categories.  Many symmetries of interest in physics are not finite: Lie group symmetry is essential for gauge theory.  The inherently finite nature of fusion categories makes them ill-suited to capture these situations, where one expects a manifold of objects.  The apparent generalizations also fail: linear monoidal categories fail to capture the higher homotopy of Lie groups.  $\infty$-categories do not capture finer geometry: one can not distinguish $\infty$-categories with objects labeled by $\rU(1)$ or by $\rB \Z$.  To view continuous symmetries in a categorical setting, we must consider \emph{geometric} categories: categories whose collection of objects are equipped with geometric structure, as suggested in \cite{Schreiber:2013pra,Gwilliam:2025vdu}.  Working with geometric categories has two key advantages: objects and morphisms can be organized into families, rather than treated individually, and standard categorical manipulations manifestly respect the geometry, making them well suited to treat continuous symmetry.

Similar categories have been used to great effect in \cite{Freed:2009qp,Weis:2022egw,Jia:2025vrj}, where continuous symmetries are modeled by the category of skyscraper sheaves over a topological group $G$, and section 8 of \cite{DimofteSparks}, where symmetries in 3d $\cN=4$ gauge theory are modeled by the category of quasi-coherent sheaves.  Continuous Tambara-Yamagami categories were studied in \cite{Marin:2025stc}.  One would hope that continuous categories can help to recover dual symmetries which arise from dynamically gauging a continuous symmetry, similar to the fusion-categorical approach for finite groups \cite{Tachikawa:2017gyf}.  We initiate a series of works aimed at developing the categorical framework necessary to give a unified treatment of continuous and finite symmetries and their gauging.  

\begin{enumerate}
    \item We propose a category which can accommodate geometric aspects of continuous $G$ symmetries, including operators continuously supported on $G$. 
    \item We construct a SymTFT from this category, and study its topological boundaries to extract symmetries of $G$ gauge theories.
\end{enumerate}

Our model for categorical continuous symmetry will be $\nQcoh(X)$, the category of (higher) quasi-coherent sheaves on an algebraic stack $X$.  To justify this choice, we consider the classifying stack $\rB G$ of an algebraic group $G$, implementing a 0-form $G$-global symmetry.  This formalism has the following physical advantages: 
\begin{enumerate}
    \item An equivariantization procedure for continuous group $G$ is implemented in arbitrary dimension using the notion of \textit{higher affineness} \cite{Gaitsgory_2015,stefanich:thesis}. 
    \item Electric symmetry breaking is computed using the \emph{integral transform} formulas of \cite{ben2010integral,BFNMorita,stefanich2023,StefanichSheaf}. 
    \item Magnetic symmetry can be associated to invertible module categories of $\Qcoh(\rB G)$, and so are classified using \cite{StefanichLinear}. 
\end{enumerate}
Gauging in this setting recovers the dual electric and magnetic symmetries expected in flat $G$ gauge theory. 
  
\subsection{The External Perspective on Symmetries}\label{subsection:external}
Categories for continuous symmetries are unwieldy to study directly via their objects, morphisms, and fusion rules.  We instead present an ``external'' perspective on symmetries, whereby we study theories as objects in a ``category of QFTs'', in the spirit of  \cite{Kong:2015flk,Gaiotto:2015zna,Kong:2020iek,Kong:2021ups,GJF2019,Zeev:2022cnv,costello2023factorization,StockallCondensation,Stockall2025}.  We will denote the category of QFTs by $\bTheta$.  Its objects are theories, 1-morphisms are topological domain walls between theories, and composition given by fusion of domain walls.  Higher morphisms can also be considered, with $i$-morphisms given by codimension-$i$ topological defects supported on the interface between codimension-$(i-1)$ defects.  This makes the category of $(n+1)$d-theories $\bTheta$ into a \emph{geometric $(\infty,n+1)$-category}.  Given a theory $\nu \in \bTheta$, one can completely recover topological defects from that theory to itself by the endomorphism $n$-category $\Omega_\nu (\bTheta):=\End_\bTheta(\nu)$.  The category $\Omega_\nu (\bTheta)$ is the \emph{category of generalized global symmetries} of $\nu$ in the sense of \cite{GeneralizedGlobalSymm}.  To build familiarity with this perspective, we will compare with symmetries recovered from the SymTFT and the cobordism hypothesis.  


\subsubsection{From cobordism to Morita categories}
An $(n+1)$-dimensional framed linear TQFT is given by a symmetric monoidal functor $$\cZ:\Bord^{\mathrm{fr}}_{(n+1)}\to \npVect,$$ where $\npVect=\Mod_{\nVect}$ is defined inductively.  A codimension $1$-defect is given by a \emph{lax natural transformation} between TQFT functors.  Additionally, a codimension $i$-defect is given by a lax $i$-transform.  By cobordism hypothesis with singularities \cite[section 4.3]{LurieTFT} and \cite[Corollary 7.7]{JFS2017}, there is then an equivalence of categories between the category of TQFTs described in this way, and a subcategory $\npVect^{\mathrm{fd}}\subset \npVect$ on those objects and $i$-morphisms which are sufficiently dualizable.  Then operators in a TQFT are captured equally well by simply studying the target category $\npVect$.  There is a string of fully faithful inclusions\footnote{For the right inclusion, see \cite[Section 4.1]{StockallCondensation}.  The left inclusion will appear in upcoming work of Markus Zetto and David Reutter.}
\begin{equation}
    \npVect^{\fd} \subset \Mor(\nVect) \subset  \npVect,
\end{equation}
 where $\Mor(\nVect)$ is the \emph{Morita category} of $\nVect$, whose objects are algebras in $\nVect$, morphisms are bimodules, 2-morphisms are bimodule morphisms, etc.  Using these inclusons, fully dualizable objects of $\thVect$ are identified with \emph{fusion categories} \cite{DSPS}.  Higher dimensional analogs of this construction are explored in \cite{BJS2018,DouglasReutter2018,GJF2019,JohnsonFreydTO,Johnson-Freyd:2021tbq, Decoppet:2024htz,StockallCondensation}.

\subsubsection{From Morita categories to SymTFT}
For the sake of comparison with SymTFT, we will focus on dimension $3$.  Consider a fusion category $\cC\in \Mor(\tVect)$, and denote by $\nu$ the $3$d TQFT constructed out of this category via the cobordism hypothesis.  Using cobordism hypothesis with singularities, we have the following identifications.  A \emph{boundary} for $\nu$ is determined by an object 
\begin{equation}
    \text{bdry}(\nu)\simeq \Hom_{\Mor(\tVect)}(\cC,\Vect)\simeq \Mod_\cC
\end{equation}
The collection of surface operators in $\nu$ is given by 
\begin{equation}
\text{surf}(\nu)\simeq \End_{\Mor(\tVect)}(\cC)\simeq \Bimod_{\cC|\cC}. 
\end{equation}
The collection of line operators in $\nu$ is given by 
\begin{equation}\label{eq:codim2}
\text{line}(\nu)\simeq \End_{\Bimod_{\cC|\cC}}(\cC)\simeq \FZ(\cC), 
\end{equation}
where the final equivalence with the Drinfeld center follows from \cite[Theorem 5.3.1.30]{LurieHA}.

This gives the translation between the cobordism hypothesis and SymTFT/SymTO perspective on boundaries, consistent with \cite[Section 4.3]{FreedMooreTeleman}\footnote{By TO, we mean ``topological order'', which is defined in \cite[Definition 2.21]{Kong:2015flk} and \cite[Definition I.1]{JohnsonFreydTO}. See also where this perspective is detailed  \cite{kong2020algebraic,Chatterjee:2022kxb,Chatterjee:2022jll}.}.  In this way, both settings capture the same symmetries arising from Dirichlet boundary conditions of codimension-2 operators in the bulk.  
Consider a $1+1$d boundary $M\in \text{bdry}(\nu)\simeq \Mod_\cC$ for the $2+1$d theory $\nu$.  Line operators in the boundary $M$ are \emph{endomorphisms} of this boundary, and so correspond objects of $\End_{\Mod_\cC}(M)$.  Then $M$ has fusion categorical symmetry given by the category $\End_{\Mod_\cC}(M)$.  A (condensable) algebra object $A\in \Alg(\End_{\Mod_\cC})$\footnote{Such an algebra object is a \emph{monad} in $\Mod_\cC$.  This motivates the condensation construction given in \cite{GJF2019}. } determines a family of condensable boundary lines.  The fusion categorical symmetry after condensing is given by 
\begin{equation}
    \Bimod_{A|A}(\cC)\simeq \End_{\Mod_\cC}(\Mod_A(\cC)), 
\end{equation}
where the final equivalence follows from \cite[Remark 4.8.4.9]{LurieHA}.  This gives an alternative perspective on condensations, where one
\begin{enumerate}
    \item Begins with a condensable algebra object $A$. 
    \item Constructs a new theory $\Mod_A(\cC)\in \Mod_\cC$ from it. 
    \item Computes the endomorphisms $\End_{\Mod_\cC}(\Mod_A(\cC))$ of the new theory. 
\end{enumerate}
This perspective was suggested in arbitrary dimension by \cite{Kong:2024ykr}, and extended in \cite{StockallCondensation}.  It is often practically easier to calculate in terms of module categories, and then translate back to statements about the Drinfeld center and SymTFT.  For this reason, we will construct an $(n+2)$d SymTFT from a category $\cC$ via the cobordism hypothesis as above.  We will then identify module categories as boundary conditions, and compute their endomorphisms to determine the boundary symmetries. 



We conclude this section with remarks on higher form symmetries.  One can view an $i$-form $G$ symmetry in a QFT as a family of codimension $(i+1)$ defects parametrized by $G$, with appropriate fusion.  This data is encoded by functors from a classifying stack $B^{i+1}G$.  
\begin{definition}\label{def:iform}
Let $\bTheta$ be a (geometric) $(\infty, n)$-category of $n$-dimensional QFTs.  Let $G$ be a group. A theory with $i$-form $G$ symmetry is a functor $\rB^{i+1}G\to \bTheta$.\footnote{Note that $G$ must be sufficiently monoidal to be delooped $(i+1)$-times. A similar version of this definition appears in \cite{Lan:2023uuq} and \cite[Remark 5.3.15]{Kong:2024ykr}.}
\end{definition}

\begin{example}\label{ex:zeroform}
    A $0$-form $G$-symmetry is a functor $F:\rB G\to \bTheta$.  This picks a theory $F(\pt)\in \bTheta$, and, for each $g\in G$, a codimension-1 defect $F(g)$ from this theory itself.  Functoriality imposes that fusion of the defects agrees with group multiplication: $F(g)F(h)\simeq F(gh)$. 
\end{example}
\begin{example}
     Let $G$ be an abelian group.  A $1$-form $G$-symmetry is a functor $F:\rB^2 G\to \bTheta$ that picks out a theory $F(\pt)$, the invisible defect $\id_{F(\pt)}\simeq F(\id_\pt)$ from $F(\pt)$ to itself, and a genuine codimension-2 defect $F(g)$ for each $g\in G$. Each genuine codimension-2 is viewed as a morphism from the invisible codimension-1 defect to itself. 
\end{example}
\begin{example}
Let $X$ be a geometric space.  A $(-1)$-form $X$-symmetry is a functor $X\to \bTheta$, or equivalently, a family of theories parameterized by $X$.
\end{example}
These examples offer a concrete criterion for identifying $i$-form symmetries in higher-dimensional gauge theories, which we will exploit in later sections.

\subsection{The Gauging Prescription}\label{subsection:mainresults}
 We now introduce general results on the gauging of symmetries, applicable in arbitrary spacetime dimensions and suitable for constructing the symmetry structures of gauge theories. Our approach proceeds from an external perspective, which emphasizes that gauging can be formulated as the manipulation of symmetries within an ambient category of physical theories.  This will use the following notation.  Let $\mathds{K}$ be a characteristic 0 field. 
\begin{itemize}
    \item Let $\Vect$ denote the $(\infty,1)$-category of (not necessarily finite) chain complexes of $\mathds{K}$-vector spaces.
    \item Let $\nVect$ denote the $(\infty,n)$-category of linear $(\infty,n-1)$-categories.
    \item Let $G$ be an affine algebraic group over $\mathds{K}$, and $\rB G = \pt/G$ the classifying stack. 
    \item The analog of $\nVect(G)$ is $\nQcoh^*(G)$, the category of higher quasi-coherent sheaves with the convolution monoidal product\footnote{One can also consider versions of $\nQcoh^*(G)$ with monoidal structure twisted by $\omega\in \H^3(\rB G, \mathbb{G}_m)$.  We will not do this here.}.
    \item Let $\nRep(G):=\Fun(\rB G,\underline{\nVect})$ denote the category of $n$-vector spaces with $G$-action.
\end{itemize}

\begin{claim}\label{claim:mainclaim}
Symmetries and symmetry breaking of $G$-gauge theories (possibly with charged matter) are encoded, in $(n+1)$-dimensions, by choices of $\nRep(G)$-module categories.
\end{claim}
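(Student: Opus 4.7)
The plan is to derive the claim from the external SymTFT perspective of Section~\ref{subsection:external}, applied to the category $\nRep(G) \simeq \nQcoh(\rB G)$, which models a $0$-form $G$-symmetry in the sense of Example~\ref{ex:zeroform}.  First I would construct the $(n+2)$-dimensional SymTFT $\nu_G$ associated to $\nRep(G)$, viewed as an object of $\Mor(\npVect)$, via the cobordism hypothesis with singularities.  By the same cobordism hypothesis, topological boundaries of $\nu_G$ are identified with objects of $\Mod_{\nRep(G)}$, and codimension-two bulk defects with objects of $\FZ(\nRep(G))$, in direct analogy with the dimension-three calculation \eqref{eq:codim2}.

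Next I would perform the sandwich construction: given a boundary $\cM \in \Mod_{\nRep(G)}$, one obtains an $(n+1)$d theory whose global symmetry category is $\End_{\Mod_{\nRep(G)}}(\cM)$.  Specializing $\cM$ to the regular module $\nRep(G)$ should recover pure $G$-gauge theory, with endomorphism category $\nRep(G)$ itself encoding the electric Wilson-line symmetry, and, via the integral transform formulas of \cite{ben2010integral,BFNMorita,stefanich2023,StefanichSheaf}, a dual magnetic higher-form symmetry classified along the lines of \cite{StefanichLinear}.  Charged matter is then encoded by a condensable algebra $A \in \Alg(\nRep(G))$, and gauging in its presence corresponds to the module category $\Mod_A(\nRep(G))$; its residual symmetry is
\[
\End_{\Mod_{\nRep(G)}}\bigl(\Mod_A(\nRep(G))\bigr) \simeq \Bimod_{A|A}(\nRep(G)),
\]
reproducing the expected pattern of electric symmetry breaking.

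The hard part will be this last step: establishing a sharp dictionary between matter content in a $G$-gauge theory and algebra objects of $\nRep(G)$, and then verifying that the endomorphism computation reproduces the physically expected broken symmetry in enough cases (including continuous $G$) to justify the claim.  Higher affineness of $\rB G$, in the sense of \cite{Gaitsgory_2015,stefanich:thesis}, is essential here: it is what makes the sheaf-theoretic manipulations on $\rB G$ behave well in arbitrary dimension and keeps the computation of $\End_{\Mod_{\nRep(G)}}(\cM)$ tractable via integral transforms.  Because the claim functions as an organizing principle meant to subsume both the finite and continuous cases uniformly, the bulk of the verification will consist of unpacking examples---pure gauge theory, and gauge theory with a charged multiplet---and matching the resulting endomorphism categories with the expected higher-form and broken symmetries.
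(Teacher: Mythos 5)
Your overall skeleton matches the paper's: the Claim is an organizing principle rather than a theorem with a single proof, and the paper supports it exactly as you outline---build an $(n+2)$d SymTFT from $\nRep(G)$ (Theorem~\ref{thm:generalTQFT}, via mapping stacks $\mathrm{Maps}(-,\rB G)$ and correspondence categories rather than directly via dualizability in a Morita category, though Proposition~\ref{prop:MoritaGeneralTFT} shows the two agree), identify boundaries with $\Mod_{\nRep(G)}$ using equivariantization (Proposition~\ref{prop:equiv} / Lemma~\ref{lem:affinetomod}), read off boundary symmetries as $\End_{\Mod_{\nRep(G)}}(\cM)$, and classify the magnetic sector via invertible module categories and \cite{StefanichLinear}.

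Where you genuinely diverge is the mechanism for charged matter and electric symmetry breaking. You propose encoding matter by a condensable algebra $A\in\Alg(\nRep(G))$ and computing $\Bimod_{A|A}(\nRep(G))\simeq \End_{\Mod_{\nRep(G)}}(\Mod_A(\nRep(G)))$; the step you flag as hard---a sharp dictionary between matter content and algebra objects---is precisely the step the paper declines to take. In \S\ref{subsection:electricsym2d} the authors state that the algebra-object procedure ``is difficult to replicate in the continuous case'' and instead parametrize symmetry-breaking patterns directly by the unbroken closed subgroup $H\subset G$ (and a twist $\alpha$), taking the boundary to be $\nRep^\alpha(H)$. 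The residual electric symmetry is then computed geometrically as $\Fun^{\rL}_{\nRep(G)}(\nRep(H),\nRep(H))\simeq \nQcoh^\star(H\backslash G/H)$ via integral transforms (Theorem~\ref{mainthm:Wilson}), and the unbroken $1$-form center symmetry is extracted not from $\Bimod_{A|A}$ but from the fiber of $\FZ(\nQcoh^\star(H\backslash G/H))\to \nQcoh^\star(H\backslash G/H)$, identified with $\nQcoh^\star(H)$ by a pullback computation on stacks (Theorem~\ref{thm:fiber}). What the paper's route buys is that everything reduces to descent, $n$-affineness of $\rB G$, and explicit double-coset and loop-space geometry ($\FZ(\nRep(G))\simeq\nQcoh^\star(G/G)$), with no need to exhibit or control a condensable algebra for a continuous group; what your route would buy, if the dictionary could be established (e.g.\ $A=\mathcal{O}(G/H)$ with $\Mod_A(\Rep(G))\simeq\Rep(H)$ by Matsushima), is a more direct physical interpretation of the matter content itself. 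As written, though, your last step is an open gap for continuous $G$, and you should either supply that dictionary or switch to the subgroup parametrization as the paper does.
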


\noindent The intuition behind this claim is to view \(\nRep(G)\) as defining a symmetry topological field theory (SymTFT) \cite{Kong:2015flk,Kong:2017hcw,Apruzzi:2021nmk,FreedMooreTeleman}, where a choice of module category for \(\nRep(G)\) specifies a topological boundary condition for the SymTFT. 
The symmetries on the boundary can be identified with endomorphisms of the chosen module category.  In this work, we will use this perspective to reproduce the global symmetry data of flat $G$-gauge theories.  

Upon adding in the data of the ``local quantum symmetries'', which encode dynamical information on the physical boundary (see \cite{Kong:2019byq,Kong:2019cuu,Kong:2020iek} for the terminology), compactifying the SymTFT will produce a particular physical theory.  We expect it is possible to choose local quantum symmetries on the physical boundary to recreate electric and magnetic symmetry of dynamical gauge theories.

The following theorem constructs a partially-defined $(n+2)$-dimensional symmetry TQFT from $\nRep(G)$.
\begin{maintheorem}[{\Cref{thm:generalTQFT}}]\label{mainthm:tqft}
 Let $G$ be an affine algebraic group over characteristic $0$ field $\mathbb{K}$.  Then there is a fully extended once-categorified $(n+1)$-dimensional TQFT associated to $\nRep(G)$, for all $n\geq 1$.\footnote{The reader is directed to read \Cref{warning:shiftindex} before comparing \Cref{thm:generalTQFT} with the contents of this theorem, as there is a shift of index.}
\end{maintheorem}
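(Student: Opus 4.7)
The plan is to construct the TQFT via the cobordism hypothesis. First I would identify the target: a fully extended once-categorified $(n+1)$-dimensional TQFT should be a symmetric monoidal functor $\cZ:\Bord^{\mathrm{fr}}_{(n+1)}\to \npVect$, and so by the cobordism hypothesis (Lurie, and Johnson-Freyd--Scheimbauer for the lax-transformation version used in \Cref{subsection:external}) it suffices to exhibit $\nRep(G)$ as a fully dualizable object of $\npVect$, together with the framing-killing data. Equivalently, using the inclusion $\npVect^{\fd}\subset\Mor(\nVect)$ recalled in the introduction, one can realize $\nRep(G)$ as an $\nVect$-module category and verify the requisite dualizability conditions there.

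The key input for this step is the higher affineness of $\rB G$. Because $G$ is affine algebraic over a characteristic $0$ field, the results of Gaitsgory \cite{Gaitsgory_2015} and their extension by Stefanich \cite{stefanich:thesis} assert that $\rB G$ is $n$-affine for every $n$. This gives a canonical equivalence
\begin{equation}
    \nRep(G)\;=\;\Fun(\rB G,\underline{\nVect})\;\simeq\;\nQcoh(\rB G)
\end{equation}
of symmetric monoidal $\nVect$-module categories, where the symmetric monoidal structure comes from pullback along the diagonal $\rB G\to \rB G\times \rB G$. Higher affineness also identifies modules over $\nRep(G)$ with sheaves of higher categories on $\rB G$, which is exactly the framework one needs to control higher adjoints via the integral transform formalism of \cite{ben2010integral,BFNMorita,stefanich2023,StefanichSheaf}.

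Full dualizability then follows from a bootstrap. For $n=0$ this is classical: $\Rep(G)$ is rigid symmetric monoidal, compactly generated by the finite dimensional representations (using reductivity, or, in the non-reductive case, the usual local-cohomology argument in characteristic zero that keeps $\Rep(G)$ well-behaved). For the inductive step one uses the $n$-affineness of $\rB G$ to rewrite $\End_{\nRep(G)}(\nRep(G))\simeq \nRep(G\times G)\simeq \nQcoh(\rB G\times \rB G)$ via integral transforms, and, iterating, identifies every $i$-fold endomorphism category with quasi-coherent sheaves on a power of $\rB G$. Each such category is again $(n-i)$-affine and dualizable as an $(n-i)\Vect$-module, and the counit/unit of the duality is realized by pullback--pushforward along the diagonal, which admits all the higher adjoints one needs because $\rB G\to\pt$ is proper in the derived-algebraic sense (ambidexterity). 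Assembling these adjunctions level by level exhibits $\nRep(G)$ as fully dualizable in $\npVect$.

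The main obstacle is the last step: book-keeping the tower of adjoints and verifying their coherent compatibilities across levels, which is what the integral-transform and higher-affineness machinery is designed for but which must be checked explicitly to meet the hypotheses of the cobordism hypothesis. A secondary subtlety, flagged by the footnote to the theorem, is the indexing convention: one must confirm that the object $\nRep(G)\in \npVect$ produces an $(n+1)$-dimensional theory in the ``once-categorified'' sense (i.e.\ the point is sent to an object of $\npVect$ rather than $\nVect$), which is what determines the shift alluded to in \Cref{warning:shiftindex}. Once these are in place the cobordism hypothesis delivers the desired functor $\cZ$.
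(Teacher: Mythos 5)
Your route --- verify full dualizability of $\nRep(G)$ in $\npVect$ and invoke the cobordism hypothesis --- is not the one the paper takes, and it has a genuine gap at its central step. The paper never proves (and does not need) that $\nRep(G)$ is fully dualizable in $\npVect$. Instead, the proof of \Cref{thm:generalTQFT} factors the theory as
\begin{equation}
\Bord^{\mathrm{fr}}_{n+1}\xrightarrow{\mathrm{Maps}(-,\rB G)}\mathbf{(n+2)Corr}(\prstk_{\mathrm{rep}})\xrightarrow{\npQcoh}\mathbf{(n+1)}{\Pr^{\mathrm{L}}_{\st}}^{(n+2)\op},
\end{equation}
where the first functor exists by \cite[Corollary 4.4.1]{AKSZ2022} and lands in representable correspondences because $G$ affine implies $\rB G$ has affine diagonal, and the second is Stefanich's extension of the sheaf theory to higher correspondences \cite[Theorem 14.2.9]{stefanich:thesis}. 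In the correspondence category every object is fully dualizable essentially for free (iterated spans), and the symmetric monoidal functor $\npQcoh$ transports this; the only hypothesis to check is the affine-schematic condition. Your ``bootstrap'' replaces this with the assertion that all higher adjoints of the evaluation/coevaluation data exist because $\rB G\to\pt$ is ``proper in the derived-algebraic sense (ambidexterity)''. That is precisely the hard, unestablished claim: the paper only invokes smoothness and properness of $\Rep(G)$ \cite[Corollary 4.1.6]{stefanich2023} to get \emph{2}-dualizability in $\Mor_1(\Pr^{\mathrm{L}}_{\st})$ (\Cref{prop:buildTQFT}), and $(n+1)$-dualizability of $\nRep(G)$ in $\npVect$ for general affine algebraic $G$ and general $n$ is not known and is not a consequence of $n$-affineness of $\rB G$. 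Your induction also misidentifies the relevant endomorphism categories: for dualizability of $\nRep(G)$ as an object of $\npVect$ one must control bimodule categories and their higher morphisms, not $\Fun_{\nVect}(\nRep(G),\nRep(G))\simeq\nRep(G\times G)$.

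There is a second, structural mismatch. If your dualizability claim did hold, the cobordism hypothesis would produce an honest fully extended $(n+1)$-dimensional theory $\Bord^{\mathrm{fr}}_{n+1}\to\npVect$ sending the point to $\nRep(G)$ and closed $(n+1)$-manifolds to scalars. The theorem instead asserts a \emph{once-categorified} theory: as \Cref{warning:shiftindex} explains, the constructed functor sends the point to $\mathbf{(n+1)}\Rep(G)\simeq\Mod_{\nRep(G)}$, an object of the $(\infty,n+2)$-category $\mathbf{(n+1)}\Pr^{\mathrm{L}}_{\st}$, and assigns to closed $(n+1)$-manifolds objects one categorical level up. You flag this shift as a ``secondary subtlety'' but your construction, as written, produces a theory living at the wrong categorical level; resolving it requires exactly the passage through $\Mod$ (or through correspondences) that the paper uses, together with the observation that the resulting $(n+2)$-dimensional data is only partially defined --- which is why the theorem is stated in the once-categorified form in the first place.
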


To support \Cref{claim:mainclaim}, we wish to extract an $\nRep(G)$-module category from every $(n+1)$-dimensional TQFT with $G$-symmetry.  Using cobordism hypothesis, we can construct $(n+1)$-dimensional linear TQFTs from fully dualizable objects of $\npVect$, and so, for our purpose, we take $\npVect$ to be ``a category of $(n+1)$-dimensional QFTs''.  Recall from \Cref{def:iform} that an $i$-form $G$-symmetry is implemented by a functor $F:\rB^{i+1} G\to\npVect$.  We obtain the following equivalence of categories in arbitrary dimension, which can be interpreted as an equivalence of theories equipped with 0-form $G$-symmetry, and theories relative to the $\nRep(G)$ theory defined above.  
\begin{proposition}\label{prop:equiv}
  Let $G$ be an affine algebraic group over characteristic $0$ field.  Equivariantization gives an equivalence of categories $$\npRep(G):=\Fun(\rB G, \underline{\npVect})\simeq \Mod_{\nRep(G)},$$
  where $\underline{\npVect}$ is the geometric category associated to $\npVect$. 
\end{proposition}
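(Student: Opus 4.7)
The plan is to prove \Cref{prop:equiv} by recasting it as the statement that $\rB G$ is $(n+1)$-affine in the sense of Gaitsgory--Rozenblyum and Stefanich, so that global sections identifies sheaves on $\rB G$ with modules over the algebra of functions. First I would note that, by definition, $\Fun(\rB G,\underline{\npVect})$ is the category of $\npVect$-valued (higher) quasi-coherent sheaves on the stack $\rB G$, i.e.\ an $(n+1)$-vector space equipped with a $G$-action. This is the geometric incarnation of a ``theory with $0$-form $G$-symmetry'' in the sense of \Cref{ex:zeroform}, now viewed internally to $\npVect$.

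The central input is the higher affineness of $\rB G$ for affine algebraic $G$ in characteristic zero, proved in \cite{Gaitsgory_2015} and extended to all higher levels in \cite{stefanich:thesis}. Applied at level $n+1$, it asserts that the global-sections functor
$$
\Gamma : \Shv_{\npVect}(\rB G) \xrightarrow{\ \sim\ } \Mod_{\Gamma(\rB G,\mathcal O)}(\npVect)
$$
is an equivalence, where the right side consists of modules, internal to $\npVect$, over the global-functions algebra of $\rB G$. Since $\Gamma$ sends a $G$-equivariant object to its $G$-invariants, this is exactly the equivariantization functor. To finish I would identify the coefficient algebra: $\Gamma(\rB G,\mathcal O)$ is the endomorphism algebra of the unit sheaf, which unwinds to $\Fun(\rB G,\underline{\nVect}) = \nRep(G)$, equipped with its natural $\nVect$-linear monoidal structure and regarded as an algebra object in $\npVect$. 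Combining these two steps yields the desired equivalence $\npRep(G)\simeq \Mod_{\nRep(G)}$.

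The main obstacle is invoking higher affineness for $\rB G$ at arbitrary level $n+1$: the classical $n=0$ case is essentially monadicity of $\pt\to\rB G$ via Barr--Beck--Lurie (using that $G$ is affine and we are in characteristic zero, so the forgetful functor $\Rep(G)\to\Vect$ is conservative and preserves the relevant colimits), but iterating to all $n$ requires the machinery of presentable $(\infty,n)$-categories and their relative tensor products developed in \cite{stefanich:thesis}. A secondary point is to check that the algebra identification is functorial and matches the $\nVect$-linear monoidal structure on $\nRep(G)$; this is a direct computation once one recognises $\nRep(G)$ as sitting inside $\npVect$ as the image of the unit of $\Gamma$. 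Granted these inputs, the proposition follows formally.
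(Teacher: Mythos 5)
Your proposal is correct and follows essentially the same route as the paper: the paper proves this via \Cref{lem:affinetomod}, which first identifies $\Fun(\rB G,\underline{\npVect})$ with $\npQcoh^\otimes(\rB G)$ using the cotensor property of $\nQcoh$ (\Cref{lemma:QCohXY}), and then invokes higher affineness of $\rB G$ (\Cref{cor:BGnAff}, proved by the fppf-descent/monadicity argument you sketch) to identify this with $\Mod_{\nQcoh^\otimes(\rB G)}=\Mod_{\nRep(G)}$. Your identification of the coefficient algebra with $\nRep(G)$ and your flagging of the iterated affineness as the main input match the paper's argument precisely.
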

This is a consequence of \Cref{lem:affinetomod}.  The trivial $(n+1)$d theory is identified with the monoidal unit $\nVect\in \npVect$, endowed trivial action of $G$.  Under the above equivalence, equivariantizing the $G$-action of the trivial theory results in $\nRep(G)$, which should be interpreted as a pure gauge theory.

Consider a subgroup $H\subset G$, which we interpret as the \emph{unbroken subgroup} for $G$.  The boundary determined by $\nRep(H)$ implements symmetry breaking on a physical boundary after compactifying.  The symmetry on the $\nRep(H)$ boundary is given by the category of $\nRep(G)$-module endofunctors of $\nRep(H)$.  The following theorems encapsulate the electric and magnetic symmetries of $G$-gauge theory in $(n+1)$-dimensions with unbroken subgroup $H$, supporting \Cref{claim:mainclaim}.

\begin{theorem}\label{mainthm:Wilson}
   Suppose that $G$ is an affine algebraic group over characteristic $0$ field, and $H\subset G$ is a closed subgroup.  There is an equivalence of monoidal categories 
   \begin{equation}
       \Fun^\mathrm{L}_{\nRep(G)}(\nRep(H),\nRep(H))\simeq \nQcoh^\star(H\backslash G/H). 
   \end{equation}
   In particular, there is a functor $\rB^{n-1}\Rep(H)\to \End^{\mathrm{L}}_{\nRep(G)}(\nRep(H))$. 
\end{theorem}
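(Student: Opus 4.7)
The plan is to reduce the statement to the integral transform formula for module categories over $\nQcoh$ of nice stacks, and then identify the resulting stack as the double coset $H\backslash G/H$.

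First, I would invoke the higher affineness of classifying stacks of affine algebraic groups over a characteristic $0$ field \cite{Gaitsgory_2015,stefanich:thesis} to identify $\nRep(G)\simeq \nQcoh(\rB G)$ and $\nRep(H)\simeq \nQcoh(\rB H)$ as symmetric monoidal $n$-categories. Under these identifications, the $\nRep(G)$-module structure on $\nRep(H)$ corresponds to the $\nQcoh(\rB G)$-module structure on $\nQcoh(\rB H)$ induced by pullback along the map $p\colon \rB H\to \rB G$ coming from the inclusion $H\hookrightarrow G$. This places us squarely in the setting where the machinery of \cite{ben2010integral,BFNMorita,stefanich2023,StefanichSheaf} is available.

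Second, I would apply the integral transform equivalence
\[
\Fun^{\mathrm{L}}_{\nQcoh(X)}(\nQcoh(Y),\nQcoh(Z))\;\simeq\;\nQcoh(Y\times_X Z),
\]
specialized to $X=\rB G$ and $Y=Z=\rB H$. A direct torsor computation identifies the fiber product $\rB H\times_{\rB G}\rB H$ with the quotient stack $[G/(H\times H)]$ for the action $(h_1,h_2)\cdot g=h_1 g h_2^{-1}$, i.e.\ with the double coset stack $H\backslash G/H$. This yields the underlying equivalence of $n$-categories. To promote it to a monoidal equivalence, I would check that composition of $\nRep(G)$-linear endofunctors corresponds under the integral transform to convolution of kernels along the outer and middle projections $\rB H\times_{\rB G}\rB H\times_{\rB G}\rB H\rightrightarrows \rB H\times_{\rB G}\rB H$; under the identification with $H\backslash G/H$ this is precisely the Hecke convolution $\star$.

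For the ``in particular'' statement, the identity endofunctor of $\nRep(H)$ corresponds to the unit of convolution, namely the pushforward along the diagonal $\Delta\colon \rB H\to H\backslash G/H$ (the identity coset). The natural action of $\Rep(H)$ on $\nRep(H)$ by tensoring is tautologically $\nRep(G)$-linear and commutes with the identity up to higher coherences, so it factors through the iterated endomorphisms of $\id_{\nRep(H)}$ in the ambient $n$-category; equivalently, pushforward along $\Delta$ sends the $(n-1)$-fold delooping sitting inside $n\Rep(H)=\nQcoh(\rB H)$ (as the $E_{n-1}$-structure on $\Rep(H)$ at the base of the tower) to the delooping sitting at the unit of $\nQcoh^\star(H\backslash G/H)$.

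The main obstacle is step two: carefully justifying the higher integral transform formula in the $n$-categorical setting with the correct monoidal structure. The original BFN statement is for $1$-categorical $\QCoh$ of perfect stacks, and the upgrade to $\nQcoh$ requires the full strength of the higher affineness and dualizability statements in \cite{stefanich:thesis,StefanichSheaf}, together with the verification that $\rB H\to \rB G$ is sufficiently well-behaved (e.g.\ that $\nQcoh(\rB H)$ is dualizable as a $\nQcoh(\rB G)$-module) for the equivalence to hold at every categorical level. The remaining identification of the fiber product with $H\backslash G/H$ and the matching of convolution structures are then formal.
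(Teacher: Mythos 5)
Your proposal follows essentially the same route as the paper: identify $\nRep(G)\simeq\nQcoh^\otimes(\rB G)$ and $\nRep(H)\simeq\nQcoh^\otimes(\rB H)$ via higher affineness, apply the higher integral transform formula (the paper's \Cref{corr:IntegralFunctor}, built on \cite[Proposition 4.8]{StefanichSheaf}) to get $\nQcoh^\star(\rB H\times_{\rB G}\rB H)$ with its convolution monoidal structure, and obtain the $\rB^{n-1}\Rep(H)$ map by including $\nRep(H)$ along the identity component of the double coset and delooping. The one technical point you flag as the main obstacle --- upgrading the BFN formula to the $n$-categorical level --- is exactly what the paper resolves via the $n$-affineness of $\rB G$ (\Cref{cor:BGnAff}) and descent along $\rB H\to\rB G$ (\Cref{lemma:HGDescent}), so your plan is sound as stated.
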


\begin{theorem}\label{prop:linkingelectric}
Suppose that G is an affine algebraic group over characteristic $0$ field, and let $H \subset G$ be a closed subgroup.  Consider the canonical functor 
\begin{equation}
    \FZ(\nQcoh^\star(H\backslash G/H))\xrightarrow{\varpi} \nQcoh^\star(H\backslash G/H). 
\end{equation}
There is a faithful monoidal functor $\rZ(H)\to \fib(\varpi)$.
\end{theorem}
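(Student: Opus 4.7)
The plan is to leverage \Cref{mainthm:Wilson} to recast the Hecke monoidal category as $\End^{\rL}_{\nRep(G)}(\nRep(H))$, identify $\fib(\varpi)$ via the bimodule description of the Drinfeld center, and then construct the functor together with verification of faithfulness using the natural action of $\rZ(H)$ on $H$-representations.

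First, I would apply \Cref{mainthm:Wilson} to replace $\nQcoh^\star(H\backslash G/H)$ by the Hecke endomorphism category $\End^{\rL}_{\nRep(G)}(\nRep(H))$. Under this identification, the Drinfeld center admits a bimodule presentation $\FZ(\End^{\rL}_{\nRep(G)}(\nRep(H))) \simeq \End^{\rL}_{\nRep(G) \boxtimes \nRep(G)^{\op}}(\nRep(H))$, with $\nRep(H)$ realized as an $\nRep(G)$-bimodule by restriction on both sides. The forgetful functor $\varpi$ then corresponds to remembering only one of the two module structures, so $\fib(\varpi)$ classifies right-$\nRep(G)$-module structures on the left-$\nRep(G)$-linear identity endofunctor $\id_{\nRep(H)}$. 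Making this bimodule presentation rigorous is the principal obstacle of the proof: it rests on the integral transform and higher affineness technology of \cite{ben2010integral,BFNMorita,stefanich2023}, which must be set up compatibly with the present geometric-categorical framework.

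Next, I would construct the functor via central automorphisms of $\id_{\nRep(H)}$. For $z \in \rZ(H)$, define $\tau_z(V) := \rho_V(z)$; centrality of $z$ in $H$ ensures $\tau_z(V)$ is $H$-equivariant, and naturality in $V$ follows because any $H$-equivariant map commutes with $\rho_V(z)$. Interpreting $\tau_z$ as a twist of the right $\nRep(G)$-module structure on $\id_{\nRep(H)}$ produces an object of $\fib(\varpi)$, and the assignment $z \mapsto \tau_z$ is monoidal because $\rho_V(zz') = \rho_V(z)\rho_V(z')$ in $\End(V)$. Faithfulness then follows from the faithfulness of the regular representation: since $\rZ(H) \hookrightarrow H$ acts faithfully on $\cO(H)$, distinct $z$ and $z'$ produce distinct $\tau_z$ and $\tau_{z'}$, so the functor is injective on objects. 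Once the bimodule identification of the first step is in place, these remaining verifications are direct.
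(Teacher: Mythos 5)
Your proposal contains two genuine gaps, one structural and one in the construction of the functor itself. The bimodule presentation you invoke is not correct as stated: for a module category $M$ over $\cA$, Morita invariance together with \cite[Theorem 5.3.1.30]{LurieHA} gives $\FZ(\End^{\rL}_{\cA}(M))\simeq \FZ(\cA)\simeq \End^{\rL}_{\cA\otimes\cA^{\op}}(\cA)$, with the \emph{diagonal} bimodule $\cA$ in the last slot, not $M$. With $\cA=\nRep(G)$ and $M=\nRep(H)$, your formula $\End^{\rL}_{\nRep(G)\boxtimes\nRep(G)^{\op}}(\nRep(H))$ computes $\nQcoh^\star(\rB H\times_{\rB G\times\rB G}\rB H)$ by the integral-transform formula, which is not the center: already for $H=\{1\}$ it gives $\nQcoh^\star(G\times G)$ rather than $\nQcoh^\star(G/G)$. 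The paper's route is instead to identify $\FZ(\nQcoh^\star(H\backslash G/H))\simeq \nQcoh^\star(\mathcal{L}\rB G)$ and the bulk-to-boundary map with $\delta_*\pi^*$ (\Cref{Thm:BFNCenter}); this is what makes the fiber computable.

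Second, your candidate objects live at the wrong categorical level. The assignment $z\mapsto \tau_z=\rho_{(-)}(z)$ produces a monoidal automorphism of $\id_{\Rep(H)}$, i.e.\ an \emph{endomorphism of the unit object} of $\End^{\rL}_{\nRep(G)}(\nRep(H))$, whereas $\fib(\varpi)=\nVect\otimes_{\nQcoh^\star(H\backslash G/H)}\FZ(\nQcoh^\star(H\backslash G/H))$ is a category whose \emph{objects} you must exhibit. The sentence ``interpreting $\tau_z$ as a twist \dots produces an object of $\fib(\varpi)$'' is exactly where the content of the theorem sits and is left unjustified; it is also the step the paper explicitly flags as difficult to replicate in the continuous case. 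The paper instead identifies $\fib(\delta_*)\simeq \nQcoh^\star(H)$ geometrically (\Cref{lemma:Hfiber} together with \Cref{Prop:BFN}), sends $z\in\rZ(H)$ to the skyscraper sheaf at $z$, checks compatibility with the skyscraper at the class $[z]\in\mathcal{L}\rB G$ under $\pi^*$, and lifts to $\fib(\delta_*\pi^*)$ by the universal property of the iterated pullback; faithfulness is then the statement that distinct points carry non-isomorphic skyscrapers. Your regular-representation argument correctly shows $\tau_z\neq\tau_{z'}$, but that establishes injectivity of a different map. Repairing the proof requires supplying the geometric identification of the fiber and the skyscraper construction, at which point you have essentially reproduced the paper's argument.
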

Here, $H\backslash G/H:=\rB H\times_{\rB G}\rB H$ denotes the (stacky) \emph{double coset}.  \Cref{mainthm:Wilson} is realized as a consequence of \Cref{corr:GroupMorita} and \Cref{lemma:thmBpart1}, and \Cref{prop:linkingelectric} is proven in \Cref{thm:fiber}.  

  \Cref{mainthm:Wilson} can be interpreted as picking out a family of $1$-dimensional $\Rep(H)$-operators arising from laying $1$-dimensional bulk operators on the boundary, which become \emph{Wilson lines}.
\Cref{prop:linkingelectric} can be interpreted as picking out a family of codimension-$2$ $\rZ(H)$ operators, coming from terminating the operators which link with the 1-dimensional bulk operators.

The $\rZ(H)$ operators are consistent with symmetry breaking  by matter charged in a $G$ representation where $H$ acts trivially: they correspond to the remaining unbroken part of the 1-form symmetry.   


We now relate modules that implement 1-form symmetry breaking symmetry breaking to those which leave the symmetry unbroken.  This theorem can be interpreted as giving an equivalence of bulk SymTFTs for each of the boundary theories presented above.  This results in a uniform gauging procedure for theories with different charged matter.

\begin{maintheorem}[\Cref{thm:MoritaEquivalence}]\label{thm:moritaequivforn}
       Let $G$ be an affine algebraic group of finite type, and $H\subset G$ a closed subgroup.  Suppose that \underline{\textbf{either}}: $n=1$ and $H,G$ are reductive, \underline{\textbf{or}} $n\geq 2$.  Then there is a Morita equivalence 
   \begin{equation}
       \Mod_{\nRep(G)}\xrightarrow{\,\,\sim\,\,}\Mod_{\nQcoh^\star(H\backslash G/H)}\,. 
   \end{equation}
  As a consequence, we have an equivalence of Drinfeld centers
 \begin{equation}
\FZ(\nRep(G))\simeq \FZ(\nQcoh^\star(H\backslash G/H)). 
 \end{equation} 
\end{maintheorem}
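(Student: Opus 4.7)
The plan is to realize the Morita equivalence as coming from the natural $\nRep(G)$-module $\nRep(H)$ induced by restriction along $H\subset G$. By \Cref{mainthm:Wilson}, the monoidal category of $\nRep(G)$-linear endofunctors of $\nRep(H)$ is precisely $\nQcoh^\star(H\backslash G/H)$, so $\nRep(H)$ carries a canonical $\bigl(\nRep(G),\nQcoh^\star(H\backslash G/H)\bigr)$-bimodule structure. It therefore suffices to show this bimodule is invertible, i.e.\ that tensoring with $\nRep(H)$ induces mutually inverse equivalences between $\Mod_{\nRep(G)}$ and $\Mod_{\nQcoh^\star(H\backslash G/H)}$.

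The key input is higher affineness of the stacks $\rB G$ and $\rB H$ in the sense of \cite{Gaitsgory_2015,stefanich:thesis}. For $n\geq 2$, Stefanich's work yields $n$-affineness of $\rB G$ for any affine algebraic group of finite type, while for $n=1$ one invokes the Ben-Zvi--Francis--Nadler theorem on $1$-affineness of $\rB G$ for reductive $G$. Higher affineness is exactly the statement that $\nQcoh$ converts faithfully flat descent of stacks into Morita descent of their module $n$-categories. The atlas $\rB H\to \rB G$ is faithfully flat since its base change $G/H\to\pt$ is an fppf cover, so higher affineness furnishes a Morita equivalence $\Mod_{\nQcoh(\rB G)}\simeq \Mod_{\nQcoh^\star(\rB H\times_{\rB G}\rB H)}$. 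Identifying $\nQcoh(\rB G)\simeq \nRep(G)$ and $\rB H\times_{\rB G}\rB H\simeq H\backslash G/H$ produces the desired Morita equivalence.

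The equivalence of Drinfeld centers then follows formally from Morita invariance of $\FZ$ \cite{LurieHA}: once the bimodule $\nRep(H)$ is shown to be invertible, it induces an equivalence between the $2$-categories of bimodules over the two sides, and $\FZ$, being the endomorphisms of the monoidal unit in that $2$-category, is transported across.

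The main obstacle will be organizing the higher-affineness input cleanly and matching the hypotheses of the theorem to the ranges covered by \cite{Gaitsgory_2015,stefanich:thesis}: the dichotomy between $n=1$ with $G,H$ reductive and $n\geq 2$ with $G,H$ affine of finite type exactly mirrors the corresponding dichotomy in available affineness results. Additional care is needed to verify that the descent equivalence respects the \emph{convolution} monoidal structure $\nQcoh^\star$ on $H\backslash G/H$ rather than the pointwise tensor product; this compatibility is precisely the content of the integral transform formulas \cite{ben2010integral,BFNMorita,stefanich2023} which underpin \Cref{mainthm:Wilson}, so the work largely amounts to threading those identifications through the descent statement in a $\nRep(G)$-linear manner.
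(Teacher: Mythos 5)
Your outline coincides with the paper's: the equivalence is implemented by the $(\nRep(G),\nQcoh^\star(H\backslash G/H))$-bimodule $\nRep(H)$, the identification of its $\nRep(G)$-linear endomorphism category comes from the integral transform formula (\Cref{corr:IntegralFunctor}), and the center statement follows from Morita invariance via \cite[Theorem 5.3.1.30]{LurieHA}. However, two of your justifications do not hold up. First, you locate the reductivity hypothesis in the wrong place: $1$-affineness of $\rB G$ is Gaitsgory's theorem (not Ben-Zvi--Francis--Nadler) and holds for \emph{every} affine algebraic group of finite type over a characteristic-$0$ field, reductive or not (\Cref{cor:BGnAff}); so ``available affineness results'' cannot be what forces the $n=1$ case to require $G$ \emph{and} $H$ reductive. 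The actual reason is that for $n=1$ the adjointability of the descent diagram is only available along \emph{affine-schematic} morphisms, and $\rB H\to\rB G$ is affine-schematic precisely when $G/H\simeq \Spec(\mathbb{K})\times_{\rB G}\rB H$ is an affine scheme, which is Matsushima's criterion and is where reductivity of $H$ enters (see the proof of \Cref{corr:GroupMorita}). Your argument as written assigns no role to the reductivity of $H$.

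Second, the pivotal step --- invertibility of the bimodule, equivalently monadicity of $\Fun_{\nRep(G)}(\nRep(H),-)$ --- is asserted via the slogan that higher affineness ``converts faithfully flat descent of stacks into Morita descent of their module $n$-categories.'' That is not what higher affineness says (it is monadicity of global sections over a point), and the slogan is essentially the statement you are trying to prove. The paper's proof of \Cref{thm:MoritaEquivalence} establishes the Barr--Beck hypotheses by hand: colimit-preservation of $\Fun_{\nRep(G)}(\nRep(H),-)$ via the self-duality of $\nQcoh^\otimes(\rB H)$ over $\nRep(G)$ (\Cref{cor:nQCohDual}, itself a consequence of \Cref{Prop:BFN}), and conservativity by writing $\nRep(G)$ as a colimit of the adjointed \v{C}ech diagram of $\rB H\to\rB G$, using descent (\Cref{lemma:HGDescent}), adjointability of that diagram (which is exactly where the $n\geq 2$ versus affine-schematic dichotomy bites), and the passage-to-adjoints property of $\mathbf{(n+1)}\Pr^\mathrm{L}_\st$. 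Without supplying these inputs your proof does not close.
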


\noindent \Cref{thm:MoritaEquivalence} provides a more general result for any $n$-affine stack, and \Cref{thm:moritaequivforn} is a straightforward consequence of this. The Drinfeld center also admits a description in terms of the (stacky) conjugation quotient $G/G=\rB G\times_{\rB G\times \rB G} \rB G$ as in \cite[Theorem 1.7]{ben2010integral}:
\begin{maintheorem}[{\Cref{prop:center}}]
    Let $G$ be an affine algebraic group over characteristic $0$ field, and $H\subset G$ a closed subgroup. There is an equivalence of braided categories
   \begin{equation}
       \FZ(\nRep(G)) \simeq \FZ(\nQcoh^\star(H\backslash G/H))\simeq \nQcoh^\star(G/G). 
   \end{equation}
\end{maintheorem}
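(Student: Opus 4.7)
The plan is to assemble both equivalences from the tools already deployed in the paper: the Morita equivalence of \Cref{thm:moritaequivforn} handles the first equivalence, while the integral transform formula for Drinfeld centers of $\nQcoh$ on $n$-affine stacks handles the second.

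For $\FZ(\nRep(G)) \simeq \FZ(\nQcoh^\star(H \backslash G / H))$, the argument is formal: the Drinfeld center is a Morita invariant of a monoidal $(\infty,n)$-category, since $\FZ(\cC)$ can be identified with $\End_{\Mod_\cC}(\cC)$, and a Morita equivalence $\Mod_{\cC} \simeq \Mod_{\cD}$ matches the canonical regular modules on both sides. \Cref{thm:moritaequivforn} supplies the underlying Morita equivalence and already records the resulting equivalence of centers as a corollary, so this half of the theorem requires only invoking that statement.

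For $\FZ(\nRep(G)) \simeq \nQcoh^\star(G/G)$, I would proceed geometrically. Identify $\nRep(G)$ with $\nQcoh(\rB G)$ (with the pointwise tensor), and use that $\rB G$ is $n$-affine for affine algebraic $G$ over a field of characteristic $0$, as in \cite{Gaitsgory_2015,stefanich:thesis}. The integral transform formula of \cite{ben2010integral,BFNMorita} in the $n=1$ case, together with its higher-categorical extension via \cite{stefanich2023,StefanichSheaf}, then identifies the Drinfeld center with $\nQcoh$ of the self-fiber product of $\rB G$ over $\rB G \times \rB G$:
\begin{equation}
\FZ(\nQcoh(\rB G)) \simeq \nQcoh\bigl(\rB G \times_{\rB G \times \rB G} \rB G\bigr).
\end{equation}
The right-hand stack is the inertia stack of $\rB G$, and a direct groupoid computation identifies it with $G/G$. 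The $\star$-convolution monoidal structure arises from composition of loops in the inertia groupoid over $\rB G$, and matches the braided monoidal structure on the center coming from its description as $\End_{\nRep(G) \otimes \nRep(G)^{\op}}(\nRep(G))$.

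The main obstacle is the integral transform step in higher categorical dimensions: for $n=1$ this is \cite[Theorem 1.7]{ben2010integral} directly, but for $n \geq 2$ one must appeal to Stefanich's higher-affineness machinery to ensure that $\nQcoh$ carries fiber products of $n$-affine stacks to relative tensor products of $(n+1)$-module categories. Once that input is in place, the geometric identification $L\rB G \simeq G/G$ and the verification that the induced braided structure coincides with the convolution product are routine.
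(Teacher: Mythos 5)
Your proposal matches the paper's route: the first equivalence is Morita invariance of the Drinfeld center applied to \Cref{thm:moritaequivforn} (this is the paper's corollary following \Cref{corr:GroupMorita}, via \cite[Theorem 5.3.1.30]{LurieHA}), and the second is the loop-space/integral-transform computation, which is exactly \Cref{Thm:BFNCenter} applied to $\rB H\to \rB G$ (using \Cref{lemma:HGDescent} and the $n$-affineness of $\rB G$), together with the identification $\mathcal{L}\rB G\simeq G/G$. One small correction: the Morita-invariant description of the center is $\FZ(\cC)\simeq \End_{\Bimod_{\cC|\cC}}(\cC)$ (equivalently, endomorphisms of the identity functor of $\Mod_\cC$), not $\End_{\Mod_\cC}(\cC)$, which would instead recover $\cC$ with the reversed monoidal structure.
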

This gives the expected decomposition of the Drinfeld center in terms of conjugacy classes and centralizers (see \Cref{eqn:DrinfeldDecomp}).  The following classification result, combined with our gauging procedure, recovers the expected magnetic symmetries of $G$-gauge theory. 

 \begin{proposition}
     Let $H$ be an affine algebraic group.  Every invertible $\Rep(H)$-module category is given by $\Rep^\alpha(H)$ where $
     \alpha$ is a class in $\rH^2(\rB H; \mathbb{G}_m)$.
 \end{proposition}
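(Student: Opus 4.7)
The plan is to pass to the geometric side via the symmetric monoidal equivalence $\Rep(H)\simeq\Qcoh(\rB H)$, apply the classification of invertible $\Qcoh$-module categories from \cite{StefanichLinear}, and then match the resulting Brauer classes with the twisted representation categories $\Rep^\alpha(H)$. This reduces the statement to the standard identification of $\mathbb{G}_m$-gerbes on a stack with degree-$2$ cohomology classes.

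First I would use the monoidal equivalence $\Rep(H)\simeq\Qcoh(\rB H)$ to recast an invertible $\Rep(H)$-module $\cM$ as an invertible $\Qcoh(\rB H)$-module. By \cite{StefanichLinear}, once $\rB H$ is in the regime where Stefanich's classification applies, the Picard group of invertible $\Qcoh(\rB H)$-modules is identified with the Brauer group of $\rB H$. Since $\mathbb{G}_m$-gerbes on $\rB H$ are classified up to equivalence by $\rH^2(\rB H;\mathbb{G}_m)$, each invertible module corresponds to a class $\alpha$, realized concretely as the category $\Qcoh(\cG_\alpha)$ of sheaves on the associated gerbe.

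Next I would identify $\Qcoh(\cG_\alpha)$ with $\Rep^\alpha(H)$ directly. Pulling the gerbe back along the atlas $\pt\to\rB H$ trivializes it, and the resulting descent datum along the $H$-torsor $\pt\to\rB H$ is precisely a vector space equipped with an $H$-action twisted by the $2$-cocycle representing $\alpha$ --- this is the definition of $\Rep^\alpha(H)$. Naturality of descent then shows that this equivalence intertwines the $\Rep(H)$-module structures on both sides.

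The hard part will be confirming that $\rB H$ for a general affine algebraic group $H$ falls within the hypotheses of \cite{StefanichLinear}, so that the Picard $2$-groupoid of $\Qcoh(\rB H)$-modules has no extraneous summands beyond the Brauer part (for instance line bundles on $\rB H$, which for reasonable $H$ get absorbed into the trivial class). For affine $H$ of finite type this reduces to the $1$-affineness of $\rB H$ established in \cite{Gaitsgory_2015}, which is the same foundational input used throughout the paper, e.g. in \Cref{prop:equiv} via \Cref{lem:affinetomod}. Once this is in place, the gerbe and descent computations above are routine.
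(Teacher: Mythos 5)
Your proposal is correct and follows essentially the same route as the paper: the paper's \Cref{contruction:invertible} also invokes \cite[Corollary 5.4.4]{StefanichLinear} to identify the (\'etale) classifying stack of invertible module categories with $\rB^2\GL_1$, uses $1$-affineness of $\rB H$ to compute $\Mod_{\Rep(H)}^\times$ by descent along the atlas, and concludes $\Mod_{\Rep(H)}^\times\simeq \rH^2(\rB H;\mathbb{G}_m)$. The only cosmetic difference is that in the paper $\Rep^\alpha(H)$ is \emph{defined} as the image of $\alpha$ under this equivalence, so your explicit gerbe/descent identification of $\Qcoh$ of the gerbe with twisted representations is not a separate step there.
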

\noindent This gives a functor $\rH^2(\rB H; \mathbb{G}_m)\rightarrow \Mod_{\Rep(G)}$ which takes a class $\alpha$ to $\Rep^\alpha(H)$.  This can be viewed as a $(-1)$-form magnetic symmetry for (1+1)d theories.

For a general $(n+1)$-dimensional theory that is the boundary of a  $(n+2)$-dimensional SymTFT, we find the magnetic symmetry is still implemented by a codimension-2 operator by the following result:
\begin{theorem}\label{thm:electricmagnetic}
There is a functor which implements the magnetic symmetry of the (n+1)d boundary determined by $\nRep(H)$, of the TQFT constructed in \Cref{mainthm:tqft},  which takes the form
\begin{align}
\rB^{n-1}(\rH^2(\rB H;\mathbb{G}_m))\to \Mod_{\nRep(G)}&&\alpha\mapsto \Rep^\alpha(H)
\end{align}
\end{theorem}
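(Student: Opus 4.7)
The plan is to produce the functor by unwinding the iterated-delooping structure of the source $\rB^{n-1}(\rH^2(\rB H;\mathbb{G}_m))$. A pointed functor out of an $(n-1)$-fold delooping $\rB^{n-1}A$ of an abelian group $A$ into a pointed higher category $(\cD,d)$ amounts to a group homomorphism
$$A \;\longrightarrow\; \pi_{n-1}(\cD, d),$$
so the problem reduces to producing such a homomorphism from $A := \rH^2(\rB H;\mathbb{G}_m)$ into the invertible $(n-1)$-fold automorphisms of the identity at $\nRep(H) \in \Mod_{\nRep(G)}$. Here $\nRep(H)$ inherits its $\nRep(G)$-module structure by pullback along $\rB H \to \rB G$.

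To produce this homomorphism, I would iteratively apply \Cref{mainthm:Wilson} together with the inductive description of $\nRep(H)$ in terms of module categories at each level of categorification. At the first step, one identifies $\Omega_{\nRep(H)}\Mod_{\nRep(G)}$ with the invertibles of the monoidal $n$-category $\End_{\nRep(G)}(\nRep(H)) \simeq \nQcoh^\star(H\backslash G/H)$; subsequent loopings produce invertibles at lower levels. After $n-1$ loopings, we reach a group containing the Picard group of invertible $\Rep(H)$-module categories, which by the preceding Proposition is precisely $\rH^2(\rB H;\mathbb{G}_m)$, with class $\alpha$ represented by $\Rep^\alpha(H)$. The abelian group structure of the resulting map is verified using the identity
$$\Rep^\alpha(H) \boxtimes_{\Rep(H)} \Rep^\beta(H) \simeq \Rep^{\alpha+\beta}(H),$$
which follows directly from the description of $\Rep^\alpha(H)$ as representations twisted by the central extension classified by $\alpha$.

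The main obstacle is keeping track of the coherence data at each level of the delooping tower: one must verify that the iterated-loop identification embeds the one-categorical Picard data of invertible $\Rep(H)$-module categories into the $(n-1)$-fold looping in a manner compatible with the additive group law on $\rH^2(\rB H;\mathbb{G}_m)$. Concretely, this requires an inductive verification that, at each step of categorification, the embedding of $\Rep(H)$ into $\nRep(H)$ produces an inclusion of Picard groupoids compatible with the delooping, so that the homomorphism obtained at the top is exactly the one sending $\alpha$ to the $(n-1)$-morphism class of $\Rep^\alpha(H)$. Once this coherence is secured, the functor is uniquely determined, and its underlying object-level map $\alpha \mapsto \Rep^\alpha(H)$ implements the magnetic symmetry as claimed.
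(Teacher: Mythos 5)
Your strategy is sound and reaches the right answer, but it is the $\Omega^{n-1}$-side of an adjunction that the paper runs from the $\rB^{n-1}$-side, and the two packagings distribute the work differently. The paper's proof never computes the based loop space of $\Mod_{\nRep(G)}$ at $\nRep(H)$: it takes the already-constructed $n=1$ functor $\rH^2(\rB H;\mathbb{C}^\times)\to \Mod_{\Rep(G)}$, $\alpha\mapsto\Rep^\alpha(H)$ of \Cref{eq:maglabels}, applies the functor $\Mod^{n-1}$ (using $1$-affineness of $\rB G$, via \Cref{cor:BG1affine}, to identify the target with $\Mod_{\nRep(G)}$), and precomposes with the canonical map $\rB^{n-1}(\rH^2(\rB H;\mathbb{C}^\times))\to\Mod^{n-1}(\rH^2(\rB H;\mathbb{C}^\times))$. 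This buys two things that your route leaves open. First, your reduction of a pointed functor out of $\rB^{n-1}A$ to a group homomorphism $A\to\pi_{n-1}(\cD,d)$ is an oversimplification: such a functor is an $E_{n-1}$-monoidal functor $A\to\Omega^{n-1}_d\cD$ landing in invertibles, and verifying the fusion rule $\Rep^\alpha(H)\otimes_{\Rep(H)}\Rep^\beta(H)\simeq\Rep^{\alpha+\beta}(H)$ only pins down the $\pi_0$-level homomorphism, not its coherent lift — you correctly flag this as the main obstacle but do not close it, whereas the paper inherits all coherences for free because \Cref{contruction:invertible} supplies the equivalence $\rH^2(\rB H;\mathbb{C}^\times)\simeq\Mod^\times_{\Rep(H)}$ already as an identification of (Picard) spaces, and $\Mod^{n-1}$ is applied to that coherent datum. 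Second, your loop-space computation needs a step you gloss over: to find $\mathrm{Pic}(\Mod_{\Rep(H)})$ inside $\Omega^{n-1}_{\nRep(H)}\Mod_{\nRep(G)}$ you must know that the convolution unit of $\nQcoh^\star(H\backslash G/H)$ sits in the image of a unit-preserving monoidal inclusion $\nRep(H)\hookrightarrow\nQcoh^\star(H\backslash G/H)$, so that looping $n-2$ more times at that unit retains a copy of $\Mod_{\Rep(H)}$; this is exactly the inclusion along the identity component of the double coset produced in \Cref{lemma:thmBpart1}, which you should cite explicitly. (Also a small slip: $\Omega_{\nRep(H)}\Mod_{\nRep(G)}$ is the full endomorphism category $\End_{\nRep(G)}(\nRep(H))$, not only its invertibles; invertibility is only required of the image of the top-level morphisms $\alpha\in A$.) With those two points supplied, your argument works and would even yield slightly more information (a lower bound on the whole iterated loop space), but the paper's formulation is the shorter path to the existence statement.
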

We describe the relationship between $\rH^2(\rB H; \mathbb{G}_m)$ and magnetic symmetries in dynamical gauge theories in \S\ref{subsection:2dtwistedsectors}. We note that the magnetic symmetry is trivially present in flat $G$-gauge theory, as it is obtained by the integral of the curvature, which is trivial when the background connection is flat. In dynamical gauge theories they can act non-trivially on 't Hooft operators, see \S\ref{subsection:symgauge}.

\subsection{Outline}\label{subsection:outline}
We dedicate \S\ref{section:symTFT} to recovering symmetries of (1+1)d $G$-gauge theories.  Technical details are offloaded to later sections. In \S\ref{subsection:symgauge} we review the symmetries of standard gauge theories.  In \S\ref{subsection:buldTFT} we construct a (2+1)d SymTFT, and use it to recover symmetries in \S\ref{subsection:2dsymmetries}.
In \S\ref{section:geometry} we introduce algebraic geometry background.  In \S\ref{section:dynamicalgauging} we prove our main results regarding gauging in general dimensions. We construct a higher dimensional SymTFT in \S\ref{subsection:generalTQFT} and identify electric symmetries in \S\ref{subsection:extractingelectric}, and magnetic symmetries in \S\ref{subsection:twistedsectors}.  Proofs of technical resuts are postponed until \S\ref{section:technicalproofs}.

\subsection{Glossary}\label{subsection:glossary}
We usually write categories in bold, to distinguish them from spaces and stacks which are written in normal font.
\begin{itemize}
\item Denote the multiplicative group by $\mathbb{G}_m$.  Over $\mathbb{C}$, we have the identification $\mathbb{G}_m\simeq \mathbb{C}^\times$.
\item Denote the group of \emph{n-th roots of unity} by $\mu_n\subset \mathbb{G}_m$. 
\item An \textit{affine algebraic group} is a group-scheme $G$ such that the underlying scheme is affine. 
\item An algebraic group $G$ is \textit{reductive} if every finite-dimensional representation of $G$ is completely reducible. Examples of reductive affine groups include: finite groups, $\mathbb{G}_m$, $\mathrm{GL}_n$, $\mathrm{SL}_n$, $\mathrm{SO}_n$, $\mathrm{Sp}_n$, exceptional Lie groups, tori. 
\item $\Spaces$: The $(\infty,1)$-category of small $\infty$-groupoids.
\item $\CAlg$: The $(\infty,1)$-category of (spectral) commutative algebras. 
\item $\Mod_A(\cC)$: The category of $A$-modules in $\cC$.
\item $\Aff/\Stk$: The $(\infty,1)$-categories of affine schemes, and stacks, respectively. 
\item $\Qcoh^*(X)/\Qcoh^\otimes(X)$: The $(\infty,1)$-categories of quasi-coherent sheaves on stack $X$ with convolution, and symmetric monoidal product, respectively.
     \item $\Pr^{\mathrm{L}}/\ \Pr^{\mathrm{L}}_\mathrm{St}$: the $(\infty,2)$-category of presentable/presentable-stable categories and colimit-preserving functors.
\end{itemize}

\section{Warm-up: (1+1)-Dimensional Gauging and Symmetries}\label{section:symTFT}
In this section, we consider the case of (1+1)d gauge theories and study how to recover their symmetries.  We relate invertible $\Rep(G)$-module categories to magnetic symmetries, and endomorphisms of module categories to broken electric symmetries for matter transforming in $G$-representations.

We will also use this section to introduce some background material which will be useful in later sections.  To guide the reader, we give an overview of the results pertaining specifically to (1+1)d gauging:
\begin{itemize}
    \item Using $\Rep(G)$, we construct a once-categorified (1+1)d SymTFT valued in the Morita 3-category $\Mor_1(\Pr^\mathrm{L}_{\st})$\, in \Cref{prop:buildTQFT}. 
    \item We construct invertible modules of $\Rep(G)$, which lead to magnetic $(-1)$-form $\rH^2(\rB G;\mathbb{C}^\times)$  symmetry.
    \item We recover topological Wilson line operators in \Cref{lemma:symflat} and 1-form $\rZ(G)$-symmetry in \Cref{cor:endingelectric}, and explain the relationship with the usual magnetic symmetry exhibited in flat $G$ gauge theory. 
    \item We show how to construct a module with both electric and magnetic symmetry in \Cref{prop:2dmagnetic}. 
\end{itemize}

\subsection{Symmetries of Gauge Theories}\label{subsection:symgauge}

We first discuss symmetries of a flat $G$-gauge theory, obtained from the empty theory by summing over only flat connections in the path integral.  This does not allow local propagating degrees of freedom.  Flat gauging of a global symmetry can be implemented by inserting a network of topological operators \cite{Frohlich:2009gb,Bhardwaj:2017xup}.  This procedure can not recreate non-flat connections, and so there is no analogous dynamical gauging procedure.  

Since multiple equivalent presentations of flat gauge theory exist, we refrain from selecting one particular formulation and instead emphasize several illustrative examples in low dimensions
which include: the (2+1)d and (3+1)d Toric code \cite{Kong:2019brm,Kong:2020wmn,Delcamp:2020rds,Johnson-Freyd:2020twl}, (anomalous) topological order in (3+1)d \cite{Decoppet:2025eic}, and BF theory and certain string-net models \cite{Levin:2004mi}.
Each of these theories share an important collection of symmetry operators, given by:
\begin{itemize}
    \item codimension-$2$ symmetries implemented by Gukov-Witten operators, labeled by conjugacy classes of $G$.
    \item $1$-dimensional topological Wilson lines labeled by representations of $G$, (which are $(d-2)$-form symmetries).  
\end{itemize}
These two symmetries link with each other in the case of flat $G$-gauge theory, where the topological Wilson lines should be thought of as  the generator of dual symmetry from gauging $G$, and the Gukov-Witten operators generate the electric symmetry.

\begin{rem}\label{rem:ondynamical}
  While dynamical gauge theories are not our main focus, we remark that the symmetries of a dynamical theory are a subset of the symmetries in flat $G$-gauge theory.  To capture dynamical gauging, it would also be necessary to encode local data. 
  
  We start with a review of the symmetries of $\rU(1)$ Maxwell theory on a $d$-dimensional manifold $\mathcal{M}_d$.  Pure $\rU(1)$ Maxwell theory is a gauge theory obtained by dynamically gauging a 0-form $\rU(1)$ symmetry on the vacuum (i.e. summing over all connections in the path integral).  The action is given by\footnote{In general there is freedom to add a  topological $\theta$-term into the Lagrangian. Such topological terms do not affect the equations of motion. We will not consider such terms for our discussion.} 
\begin{equation}
    S = \frac{1}{2e^2 }\int_{\mathcal{M}_d} \,F \wedge \star F\,.
\end{equation}
where $e \in \mathbb{R}^+$ is the gauge coupling, and $F$ is the field strength of a $\rU(1)$ connection $A$.  The equation of motion and Bianchi identity induce two symmetries:
\begin{itemize}
    \item An ``electric'' 1-form $\rU(1)$ symmetry generated by $\exp\left(i \frac{\alpha}{e^2} \int_X \star F\right)$, where $\alpha \in \rU(1)$, acting on  Wilson line operators $\exp\left(i n \int_\gamma A\right)$ where $n \in \mathbb Z$ and $\gamma$ is a curve. 
    \item A ``magnetic'' $(d-3)$-form $\rU(1)$ symmetry generated by $\exp\left(i \frac{\alpha}{2\pi} \int_{\cM_2} F\right)$, acting on  `t Hooft lines \footnote{In (1+1)d the 't Hooft lines are $(-1)$-dimensional and hence not present; nevertheless the magnetic symmetry still exists as a $(-1)$-form in the theory.}. 
\end{itemize}
    More generally, for a $G$ Yang--Mills theory in $d$-dimensions, one has 
    \begin{itemize}
        \item $1$-form electric $\rZ(G)$ and dimension-$1$ electrically charged Wilson lines labeled by $\Rep(G)$.  
        \item $(d-3)$-form magnetic $\widehat{\pi_1(G)}$ and dimension-$(d-3)$ magnetically charged `t Hooft operators labeled by $\pi_1(G)$, where $\widehat{\pi_1(G)}$ is the Pontryagin dual. 
    \end{itemize}
See \cite{Bhardwaj:2023kri,Schafer-Nameki:2023jdn} for a more detailed review. It is important to note that the magnetic symmetry in the above case acts nontrivially in dynamical theories, due to the presence of monopole operators. This is not the case in flat $G$-gauge theory; while the magnetic symmetry is the same, it acts trivially in this case.
One can consider adding in matter to pure Yang--Mills, which transforms in some representation $R$ of the gauge group $G$. If $R$ transforms  non-trivially under the electric symmetry, the matter will break the $\mathrm{Z}(G)$ symmetry to a subgroup $\mathrm{Z}(H)\subset \mathrm{Z}(G)$ that acts trivially on the representation $R$.
\end{rem}

Recent studies of continuous gauging have been conducted from the perspective of SymTFT.  The group $\rU(1)$ is considerd in \cite{Antinucci:2024zjp,Brennan:2024fgj,Arbalestrier:2025poq}.  The case of non-abelian continuous group was considered in \cite{Bonetti:2024cjk,Mignosa:2025cpg}.  The authors in \cite{Apruzzi:2024htg,Mignosa:2025cpg} use the ``SymTh'' approach to do dynamical gauging as it allows for a modified Neumann condition; the latter takes a more holographic perspective of this problem.  In \cite{Apruzzi:2025hvs} the SymTFT was generalized to be compatible with continuous spacetime symmetries. Several mathematical models were also discussed for continuous categorical symmetries, in particular \cite{Freed:2009qp,Jia:2025vrj} discusses skyscraper sheaves, and their gauging in \cite{Jia:2025uun}. Weis \cite{Weis:2022egw} discusses manifolds tensor categories, which involve a smooth family of simple objects. These models do not allow for objects
with continuous support, as only sheaves with support were considered. To ameliorate this \cite{Marin:2025stc}  proposes the use of continuous tensor categories defined from representations of $\mathrm{C}^*$-algebras.

\subsection{Constructing the (2+1)d SymTFT}\label{subsection:buldTFT}

We are now ready to build a once-categorified (1+1)d TQFT from $\Rep(G)$, which was obtained by equivariantizing the trivial symmetry on $\Vect$.  By once-categorified, we mean that this TQFT assigns to a 2-dimensional manifold a 2-morphism in the $3$-category $\Mor_1(\Pr^\mathrm{L}_\st)$.  In the case that $G$ is finite, this TQFT can be extended to a (2+1)d TQFT.  One should view such a once-categorified theory as a (2+1)-dimensional TQFT which is only defined on a subclass of three-manifolds.

We must first define the Morita category associated to an $(\infty,1)$-category $\cC$ as in \cite{HaugsengMorita,JFS2017}.  This will be the target category for our functorial TQFT.  

\begin{definition}
Let $\cC$ be a monoidal $(\infty,n)$-category\footnote{With good relative tensor products.  This is necessary to define composition of morphisms.}.  Denote by $\Mor(\cC)$ the \emph{Morita category} of $\cC$:  The $(\infty,n+1)$-category whose objects are algebras in $\cC$, whose 1-morphisms given by bimodule objects in $\cC$, and $i$-morphisms for $i\geq 2$ are given by bimodule $(i-1)$-morphisms. 
\end{definition}

\begin{proposition}\label{prop:buildTQFT}
     Let $G$ be an affine algebraic group over characteristic $0$ field.  The category $\Rep(G)$ is 2-dualizable in $\Mor_1(\Pr^\mathrm{L}_\st)$, and so can be used to construct a once-categorified fully-extended (1+1)-dimensional TQFT valued in $\Mor_1(\Pr^\mathrm{L}_\st)$.
\end{proposition}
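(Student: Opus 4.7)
The plan is to recognize $\Rep(G) \simeq \Qcoh(\rB G)$ as an $E_\infty$-algebra in $\Pr^{\mathrm{L}}_{\st}$, obtain $1$-dualizability immediately from symmetric monoidality, and then verify the adjointability conditions for the evaluation and coevaluation $1$-morphisms by translating them into affineness and compactness statements about $\rB G$.

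First I would observe that $\Rep(G) \simeq \Qcoh(\rB G)$ carries a symmetric monoidal structure from the tensor product of representations, making it an $E_\infty$-algebra in $\Pr^{\mathrm{L}}_{\st}$. For any $E_\infty$-algebra $A$ in a symmetric monoidal $(\infty,2)$-category $\cC$ admitting relative tensor products, the object $A$ is self-dual in $\Mor(\cC)$: both unit and counit are implemented by $A$ itself, equipped with the bimodule structure induced by its multiplication. This is formal coherence and gives $1$-dualizability of $\Rep(G)$ in $\Mor_1(\Pr^{\mathrm{L}}_{\st})$ without using anything about $G$ beyond symmetric monoidality.

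Next, to upgrade to $2$-dualizability I would verify that the evaluation and coevaluation $1$-morphisms admit adjoints in $\Pr^{\mathrm{L}}_{\st}$, equivalently that $\Rep(G)$ is smooth and proper as a module over $\Rep(G) \otimes \Rep(G) \simeq \Rep(G \times G)$ (the classical criterion from Lurie's classification of fully extended $2$d TQFTs, refined by Gwilliam-Scheimbauer). Under the identification with quasi-coherent sheaves, evaluation corresponds to pullback $\Delta^*$ along the diagonal $\Delta\colon \rB G \to \rB G \times \rB G$, whose right adjoint $\Delta_*$ is colimit-preserving because $\Delta$ is affine (since $G$ is). This gives smoothness. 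Properness reduces to $\Rep(G)$ being compactly generated by objects whose images under the forgetful functor $\Rep(G) \to \Vect$ are perfect; this is standard, as finite-dimensional algebraic $G$-representations compactly generate $\Qcoh(\rB G)$ and are visibly perfect over the ground field.

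The main obstacle I anticipate is making the properness step uniform for non-reductive $G$, where derived $G$-invariants fail to be exact and one cannot argue from complete reducibility. In characteristic zero one nonetheless has that compact objects of $\Qcoh(\rB G)$ coincide with perfect complexes and that the structure map $\rB G \to \pt$ preserves perfectness; isolating and citing this geometric input (e.g.\ via perfect-stacks results of Ben-Zvi-Francis-Nadler, or Gaitsgory's $1$-affineness of $\rB G$) is the heart of the argument. Once smoothness and properness are both established, the cobordism hypothesis with singularities produces the desired fully-extended once-categorified $(1{+}1)$d TQFT valued in $\Mor_1(\Pr^{\mathrm{L}}_{\st})$.
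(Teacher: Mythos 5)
Your proposal follows essentially the same route as the paper's proof: establish that $\Rep(G)$ is smooth and proper as an algebra in $\Pr^{\mathrm{L}}_{\st}$, convert this into left/right dualizability of the evaluation and coevaluation bimodules via Lurie's Propositions 4.6.4.4 and 4.6.4.12, and then invoke the cobordism hypothesis. The only difference is that the paper imports smoothness and properness wholesale from Stefanich's Corollary 4.1.6, whereas you sketch the geometric inputs behind that citation (affineness of the diagonal of $\rB G$, compact generation of $\Qcoh(\rB G)$ by finite-dimensional representations, and $1$-affineness of $\rB G$), correctly identifying these as the heart of the matter.
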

\begin{proof}
     By \cite[Corollary 4.1.6]{stefanich2023}, $\Rep(G)$ is smooth and proper as an algebra in $\Mor_1(\Pr^\mathrm{L}_{\mathrm{St}})$. By \cite[Proposition 4.6.4.12]{LurieHA}, smooth implies that the coevaluation module is right dualizable, equivalently, if the evaluation module is left dualizable. By \cite[Proposition 4.6.4.4,]{LurieHA}, proper implies that coevaluation module is left dualizable, equivalently, the evaluation module is right dualizable.
       Hence, $\Rep(G)$ is $2$-dualizible in $\Mor_1(\Pr^{\mathrm{L}}_{\mathrm{st}})$, which is itself an $(\infty,3)$-category.  By the cobordism hypothesis \cite{LurieTFT}, this defines a once-categorified fully-extended framed (1+1)-dimensional TQFT valued in $\Mor_1(\Pr^{\mathrm{L}}_{\mathrm{st}})$, whose value on a point is $\Rep(G)$. 
\end{proof}

\begin{rem}
   A TQFT of the kind defined in \Cref{prop:buildTQFT}, and higher dimensional defects for it, were constructed in \cite[Section 4.2]{Kinnear2024} and called the ``classical $G$-gauge theory''. 
\end{rem}

By cobordism hypothesis with singularities \cite[Theorem 4.3.11]{LurieTFT}, any dualizable $\Rep(G)$-module category defines a (1+1)d boundary for the (2+1)d bulk theory defined in \Cref{prop:buildTQFT}.  For every closed subgroup $H\subset G$ and $\alpha\in \H^2(\rB G, \mathbb{C}^\times)$, there is a dualizable module category $\Rep^\alpha(H)$ for $\Rep(G)$ by \Cref{prop:moritaforn}.  This gives a (1+1)d boundary for every $H$ and $\alpha$.  This is depicted in \Cref{fig:2dSymTFT}, right.  See also \cite{Cattaneo:1996pz,Cattaneo:2000mc,Cattaneo:2002tk,Jia:2025jmn} for a BF theory treatment of the TQFT. 

\begin{figure}
    \centering
\begin{tikzpicture}[thick]
  \fill[shade, top color=blue!65, bottom color=blue!8, shading angle=180, opacity=1.2]
    (3,1) -- (8,1) -- (9,2) -- (4,2) -- cycle;
  \draw[black, dotted] (3,1) -- (8,1) -- (9,2) -- (4,2) -- cycle;

  \node at (6,1.5) {$\text{Vac}$};
  \node at (6,2.5) {$\mathbf{Vect}$};

  \draw[<-,decorate] (6,.9) .. controls (7.25,-.25) and (4.2,-.25) .. (5.5,.9);
  \node at (5.8,-0.28) {$G$};

  \draw[->,decorate,decoration={snake,amplitude=.4mm,segment length=2mm,post length=1mm}]
    (9.6,2.5) -- (11.6,2.5) node[midway, above] {$G$-Equiv.};

  \fill[shade, top color=green!65, bottom color=green!8, shading angle=-90, opacity=1.2]
    (13,1) -- (17,1) -- (18,2) -- (14,2) -- cycle;

  \fill[shade, top color=red!65, bottom color=red!8, shading angle=-90, opacity=1.2]
    (13,3.5) -- (17,3.5) -- (18,4.5) -- (14,4.5) -- cycle;

  \draw[black] (13,1)-- (17,1);
  \draw[black] (14,2)-- (18,2);
  \draw[black] (13,3.5) -- (17,3.5);
  \draw[black] (14,4.5) -- (18,4.5);
  \draw[white] (17,3.5)--(18,4.5);
  \draw[white] (17,1)--(18,2);

  \node at (15.5,1.5) {$\mathfrak{B}_{\mathrm{Phys}}$};
  \node at (15.5,2.7) {$\Rep(G)$};
  \node at (15.5,4) {$\mathfrak{B}_{\text{Top}}$};
\end{tikzpicture}
    \caption{Left: The (1+1)d empty theory is equipped with the trivial action of a group $G$.  Right: The bulk-boundary system obtained from equivariantizing the $G$-action. 
    The bulk contains the (2+1)d TQFT constructed from $\Rep(G)$.  The physical boundary $\mathfrak{B}_{\mathrm{Phys}}$ contains only local degrees of freedom.  The topological boundary $\mathfrak{B}_{\mathrm{Top}}$ contains only topological degrees of freedom, and corresponds to a $\Rep(G)$-module category.  The TQFT is a once-categorified (1+1)d theory, and we denote this by not capping off the ends in the 3d bulk. A detailed physical account of this ``non-compact'' TQFT, in the sense that the objects are non-compact, is descibed in \cite[Appendix B]{Antinucci:2024bcm}.
    }
    \label{fig:2dSymTFT}
\end{figure}
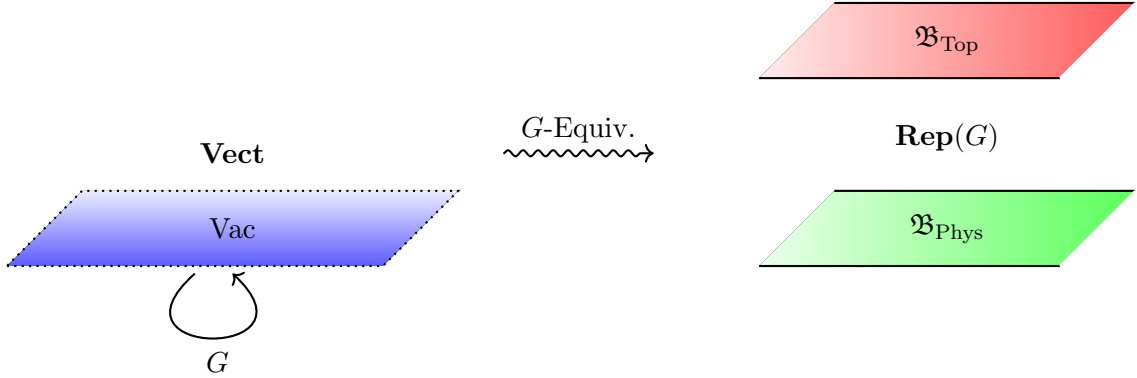

\subsection{Symmetries of Gauge Theories in (1+1)d}\label{subsection:2dsymmetries}
The primary goal of this section is to present concrete physical applications of our machinery without delving into more abstract generalities.  For this reason, we identify the electric and magnetic symmetries of (1+1)d gauge theories but  will choose to call upon to the results and background from \S\ref{subsection:extractingelectric} and \S\ref{section:geometry} as needed. 

\subsubsection{Identifying  the Electric Symmetry
}\label{subsection:electricsym2d}
We begin by identifying the electric symmetry that appears in flat $G$-gauge theory, which is captured by the following lemma. 

\begin{lemma}\label{lemma:symflat}
Let $H\subset G$ be a closed subgroup and $\alpha\in \H^2(\rB H;\mathbb{C}^\times)$.  The (1+1)d boundary theory determined by $\Rep^\alpha(H)$ exhibits a 0-form $\Rep(H)$-symmetry.
\end{lemma}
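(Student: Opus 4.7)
The plan is to interpret the statement through the external perspective of \S\ref{subsection:external}. By \Cref{def:iform}, a 0-form $\Rep(H)$-symmetry on the boundary determined by $\Rep^\alpha(H)$ is a functor $\rB\Rep(H)\to \bTheta$ picking out that boundary. Since the line operators on the boundary (and their fusion) are organized by the monoidal category $\End_{\Rep(G)}(\Rep^\alpha(H))$ of $\Rep(G)$-linear endofunctors, the claim reduces to constructing a monoidal functor $\Rep(H) \to \End_{\Rep(G)}(\Rep^\alpha(H))$.

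The main construction will be to exhibit $\Rep^\alpha(H)$ as a $(\Rep(G),\Rep(H))$-bimodule. The left $\Rep(G)$-action comes from restriction along $H\hookrightarrow G$ followed by the tensor product in $\Rep^\alpha(H)$. For the right action, I would observe that for $V\in \Rep(H)$ and $W\in \Rep^\alpha(H)$, the tensor product $W\otimes V$ still lies in $\Rep^\alpha(H)$, since twisting by $\alpha\in \H^2(\rB H;\mathbb{C}^\times)$ only affects the cocycle data and is preserved under tensoring with an honest representation. Equivalently, $\Qcoh(\rB H)$ acts on $\Qcoh^\alpha(\rB H)$ by tensoring, and the two actions commute by associativity of the tensor product. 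The resulting bimodule structure yields the desired monoidal functor $\Rep(H)\to \End_{\Rep(G)}(\Rep^\alpha(H))$.

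As a sanity check, in the untwisted case $\alpha=0$ one can recover this map from \Cref{mainthm:Wilson}: the endomorphism category is $\Qcoh^\star(H\backslash G/H)$, and pushforward along the diagonal inclusion $\rB H \hookrightarrow \rB H\times_{\rB G}\rB H = H\backslash G/H$ produces a monoidal functor $\Rep(H)\simeq \Qcoh(\rB H)\to \Qcoh^\star(H\backslash G/H)$, which is monoidal because the convolution product on $H\backslash G/H$ restricts to the tensor product on the unit stratum $\rB H$. The twisted version amounts to applying the same reasoning with the appropriate gerbe on the double coset.

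The main obstacle is verifying at the level of presentable $(\infty,1)$-categories that the bimodule structure on $\Rep^\alpha(H)$ is genuinely commuting (that is, that left $\Rep(G)$- and right $\Rep(H)$-actions assemble into an $E_1$-algebra map $\Rep(H)\to \End_{\Rep(G)}(\Rep^\alpha(H))$ rather than merely a functor of underlying categories). Once the bimodule structure is identified, however, the monoidal functor and hence the desired functor $\rB\Rep(H)\to \bTheta$ follows from the standard bimodule-to-endomorphism correspondence reviewed in \S\ref{subsection:external}.
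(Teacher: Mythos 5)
Your proof is correct, but it takes a more direct route than the paper's. The paper deduces this lemma from \Cref{lemma:thmBpart1} (via \Cref{corr:GroupMorita}): it first computes the \emph{entire} endomorphism category $\End^{\mathrm{L}}_{\Rep(G)}(\Rep(H))\simeq \Qcoh^\star(H\backslash G/H)$ by the integral-transform machinery, and then locates $\Rep(H)$ inside it as the unit stratum $\rB H\subset H\backslash G/H$ of the double-coset decomposition \Cref{eq:doublecoset} --- exactly the picture you offer only as a sanity check. You instead build the monoidal functor $\Rep(H)\to \End^{\mathrm{L}}_{\Rep(G)}(\Rep^\alpha(H))$ by hand from the $(\Rep(G),\Rep(H))$-bimodule structure on $\Rep^\alpha(H)$, which needs only the observation that tensoring an $\alpha$-twisted sheaf with an untwisted one stays $\alpha$-twisted. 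The coherence issue you flag at the end is not a genuine obstacle: $\Qcoh^\alpha(\rB H)$ is a module over the $E_\infty$-algebra $\Qcoh^\otimes(\rB H)$ in $\Pr^{\mathrm{L}}_\st$, any module over a commutative algebra is canonically a bimodule over itself, and the left $\Rep(G)$-action is obtained by restricting along the symmetric monoidal functor $\Rep(G)\to\Rep(H)$, so the $E_1$-map into the endomorphism algebra comes for free. Your argument is more elementary and, notably, treats general $\alpha$ uniformly, whereas \Cref{lemma:thmBpart1} as stated only covers the untwisted module $\Rep(H)$ (the twisted case is handled in the paper only implicitly, via the reasoning of \Cref{ex:dyonic}). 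What the paper's heavier route buys is the identification of the \emph{full} boundary symmetry category $\Qcoh^\star(H\backslash G/H)$, which is needed for \Cref{mainthm:Wilson} and the center computations, not merely for the $\Rep(H)$ subcategory exhibited here.
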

\noindent The contents of this Lemma is a special case of \Cref{lemma:thmBpart1}, and the contents will be spelled out in the proof therein.

We now discuss how to identify the center 1-form electric symmetry. 
The lines in the bulk TQFT are given by the Drinfeld center $\FZ(\Rep(G))$ (see \Cref{def:drinfeld}).  We will now determine which bulk lines can terminate on the $\Rep(H)$-boundary.  These are the lines that are permitted to end on the boundary, and thus label end points of the blue lines in \Cref{fig:ending}, which are codimension-2 boundary operators.

In the case where the bulk TQFT is constructed from a fusion category $\cC$, a (1+1)d boundary is determined by a commutative algebra object $\mathcal{A} \in \FZ(\cC)$.  The category $\Bimod_{\cA|\cA}(\cC)$ describes bulk lines that satisfy a Dirichlet boundary condition on top boundary of \Cref{fig:ending}.  The lines permitted to end on the boundary are those contained in the algebra object, and as a result, those that contain the vacuum line as a summand on the boundary.  The `ending' operator on the boundary is given by the projection to the vacuum line.  The symmetry breaking in \Cref{lemma:symflat} has been studied in the case of finite $G$, and used to classify gapped/gapless phases in (1+1)d and higher (2+1)d 
\cite{Bhardwaj:2023idu,Bhardwaj:2023fca,Bhardwaj:2023bbf,Bhardwaj:2024qrf,Bhardwaj:2024qiv,Bhardwaj:2025piv,Wen:2023otf,Wen:2024qsg,Wen:2025thg}.     
      
This procedure is difficult to replicate in the continuous case.  Instead, we notice that if a line is trivialized on the boundary, then it is certainly endable\footnote{This is not a necessary condition: a line may be endable, in the sense that it has a morphism to the vacuum line, even if it is not trivialized by the bulk-boundary map.}.  Lines which are trivialized on the boundary are equivalently those that appear in the fiber of the map $\FZ(\cC)\to \Bimod_{\cA|\cA}(\cC)$.  Then we instead proceed by showing that the expected broken electric center $1$-form symmetry appears in the fiber of $\FZ(\Rep(G))\rightarrow \End_{\Rep(G)}(\Rep(H))$, induced by the Morita equivalence of $\cC$ and $\Bimod_{\cA|\cA}(\cC)$.  This result follows from \Cref{thm:fiber}.  This agrees with the $1$-form symmetry breaking that flat $G$-gauge theory.

\begin{corollary}\label{cor:endingelectric}
$Z(H)$ is contained faithfully within the bulk lines of $\FZ(\Rep(G))$ that may be ended on the boundary determined by $\Rep(H)$. 
\end{corollary}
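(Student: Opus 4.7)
The plan is to realize this corollary as the $n=1$ specialization of the machinery already assembled in the preceding main theorems, with the only new ingredient being the identification of ``endable bulk lines'' with the fiber of the natural functor from the Drinfeld center to the boundary endomorphisms. First I would recall, following the discussion before the statement, that given a boundary $\mathfrak{B}_{\mathrm{Top}}$ corresponding to a $\Rep(G)$-module category $\Rep(H)$, a bulk line $\ell\in\FZ(\Rep(G))$ is endable on $\mathfrak{B}_{\mathrm{Top}}$ precisely when its image under the canonical ``push to the boundary'' functor
\begin{equation*}
    \Phi:\FZ(\Rep(G))\longrightarrow \End_{\Rep(G)}(\Rep(H))
\end{equation*}
is trivialized, i.e.\ when $\ell$ lies in $\fib(\Phi)$. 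Physically, lines become the vacuum on the boundary iff they appear as termination endpoints of codimension-2 operators, and this is exactly the image/fiber content depicted in \Cref{fig:ending}.

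Next I would use \Cref{mainthm:Wilson} at $n=1$ to identify the target of $\Phi$ with the double coset category:
\begin{equation*}
    \End^{\mathrm{L}}_{\Rep(G)}(\Rep(H))\simeq \Qcoh^\star(H\backslash G/H).
\end{equation*}
Under the Morita equivalence of \Cref{thm:moritaequivforn} (which applies in $n=1$ under the reductivity hypothesis assumed throughout this subsection), the Drinfeld centers match,
\begin{equation*}
    \FZ(\Rep(G))\simeq \FZ(\Qcoh^\star(H\backslash G/H)),
\end{equation*}
and the canonical functor $\Phi$ is identified with the forgetful-to-underlying-object functor
\begin{equation*}
    \varpi:\FZ(\Qcoh^\star(H\backslash G/H))\longrightarrow \Qcoh^\star(H\backslash G/H)
\end{equation*}
appearing in \Cref{prop:linkingelectric}. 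This is a formal consequence of the construction of the center via the bar resolution: the Morita equivalence intertwines the two forgetful functors out of the center.

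Having matched $\fib(\Phi)\simeq \fib(\varpi)$, the corollary becomes a direct application of \Cref{prop:linkingelectric} at $n=1$, which produces a faithful monoidal functor $\rZ(H)\to \fib(\varpi)$. Composing with the equivalence identifies $\rZ(H)$ faithfully inside the endable bulk lines of $\FZ(\Rep(G))$ on the $\Rep(H)$-boundary, as claimed.

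The main obstacle in this plan is the compatibility step: showing that under the two equivalences produced by \Cref{mainthm:Wilson} and \Cref{thm:moritaequivforn}, the physical ``endability'' functor $\Phi$ really is identified with $\varpi$ (and not, say, with a twist thereof). I would handle this by tracing through the construction of the Morita equivalence in \Cref{thm:MoritaEquivalence}: the equivalence is implemented by tensoring with the $\Rep(G)$-$\Qcoh^\star(H\backslash G/H)$-bimodule $\Rep(H)$, and the induced map on centers is precisely the one that fits into the commutative square comparing the two forgetful functors. Once this naturality is in hand, the rest is a bookkeeping exercise.
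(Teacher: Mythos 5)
Your proposal is correct and follows essentially the same route as the paper: the text preceding the corollary identifies the endable bulk lines with the fiber of $\FZ(\Rep(G))\to \End_{\Rep(G)}(\Rep(H))$, matches this map with $\varpi$ via the Morita equivalence of \Cref{corr:GroupMorita}, and then invokes \Cref{thm:fiber} at $n=1$ to embed $\rZ(H)$ faithfully into that fiber. Your explicit attention to the compatibility of $\Phi$ with $\varpi$ under the Morita equivalence is a point the paper leaves implicit, so that extra care is welcome but not a departure from the paper's argument.
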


 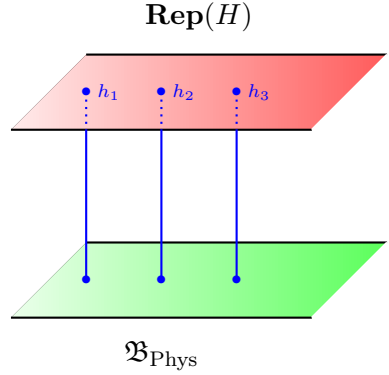
\begin{figure}
    \centering
\begin{tikzpicture}[thick]
  \fill[shade, top color=green!65, bottom color=green!8, shading angle=-90, opacity=1.2]
    (13,1) -- (17,1) -- (18,2) -- (14,2) -- cycle;

  \fill[shade, top color=red!65, bottom color=red!8, shading angle=-90, opacity=1.2]
    (13,3.5) -- (17,3.5) -- (18,4.5) -- (14,4.5) -- cycle;

  \draw[black] (13,1)-- (17,1);
  \draw[black] (14,2)-- (18,2);
  \draw[black] (13,3.5) -- (17,3.5);
  \draw[black] (14,4.5) -- (18,4.5);
  \draw[white] (17,3.5)--(18,4.5);
  \draw[white] (17,1)--(18,2);

  \node at (15.5,5) {$\Rep(H)$};
  \draw[blue,dotted] (16,4)-- (16,3.5);
  \draw[blue] (16,3.5)-- (16,1.5);
  \draw[blue,dotted] (15,4)-- (15,3.5);
  \draw[blue] (15,3.5)-- (15,1.5);
  \draw[blue,dotted] (14,4)-- (14,3.5);
  \draw[blue] (14,3.5)-- (14,1.5);

  \node[blue, font=\tiny] at (16,4) {$\bullet$};
  \node[blue, font=\tiny] at (15,4) {$\bullet$};
  \node[blue,font=\tiny] at (14,4) {$\bullet$};
  \node[blue,font=\tiny] at (14.3,4) {$h_1$};
  \node[blue, font=\tiny] at (15.3,4) {$h_2$};
  \node[blue,font=\tiny] at (16.3,4) {$h_3$};

  \node[blue,font=\tiny] at (15,1.5) {$\bullet$};
  \node[blue,font=\tiny] at (14,1.5) {$\bullet$};
  \node[blue, font=\tiny] at (16,1.5) {$\bullet$};

  \node at (15,.5) {$\mathfrak{B}_{\mathrm{Phys}}$};
\end{tikzpicture}
    \caption{$\Rep(H)$ determines a topological boundary for the bulk $\Rep(G)$ TQFT.  The blue lines are objects in $\FZ(\Rep(G))$ which can end on the topological boundary. The end points  carry a label $h_i\in \rZ(H)$.}
    \label{fig:ending}
\end{figure}

\subsubsection{Identifying the Magnetic Symmetry}\label{subsection:2dtwistedsectors}
  
  The magnetic symmetry referred to in this section is the one discussed in \Cref{rem:ondynamical} in the context of flat $G$-gauge theory.  This is in contrast to the terminology used in \cite[Corollary 2.3.3]{Johnson-Freyd:2020twl}  \cite{Bhardwaj:2023wzd}, where magnetic symmetry refers to the $1$-form operators linking with $\Rep(H)$ (which we called electric in the previous section).  
  

  
  We show how such a symmetry can appear from our gauging procedure. Specifically, we show that a single $\Rep(G)$-module can exhibit both the 1-form $\rZ(H)$-symmetry and $(-1)$-form magnetic symmetry. To do this, we require the following lemma, whose details are given in \Cref{contruction:invertible}.
\begin{lemma}\label{lemma:invertiblemodule}
       Invertible $\Rep(G)$-modules are classified by classes in $\rH^2(\rB G; \mathbb{C}^\times)$. 
  \end{lemma}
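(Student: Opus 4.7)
The plan is to reduce the classification to a cohomological computation via Gaitsgory's 1-affineness of $\rB G$, followed by a computation of the Picard 2-group of $\Pr^{\mathrm{L}}_{\st}$. In one line, an invertible $\Rep(G)$-module is a $G$-equivariant structure on an invertible $\Vect$-module, and since invertible $\Vect$-modules are Morita trivial, the $G$-equivariant structures are classified by $\rH^2(\rB G; \mathbb{G}_m)$.

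First I would apply 1-affineness of $\rB G$, which holds for $G$ affine algebraic over a characteristic $0$ field, to obtain a symmetric monoidal equivalence
\[
    \Mod_{\Rep(G)}(\Pr^{\mathrm{L}}_{\st}) \;\simeq\; \mathbf{ShvCat}(\rB G),
\]
where the right-hand side is the $(\infty,2)$-category of sheaves of presentable stable categories on $\rB G$. Under this equivalence an invertible $\Rep(G)$-module corresponds to a sheaf of categories on $\rB G$ whose pullback to the atlas $\mathrm{pt} \to \rB G$ is an invertible object of $\Pr^{\mathrm{L}}_{\st}$.

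Next I would compute the Picard 2-group of $\Pr^{\mathrm{L}}_{\st}$. The only invertible presentable stable linear category, up to equivalence, is $\Vect$ itself; its linear autoequivalences are given by tensoring with one-dimensional vector spaces and assemble into the groupoid $\rB \mathbb{G}_m$; and the natural automorphisms of the identity functor are scalars $\mathbb{G}_m$. Hence $\mathrm{Pic}(\Pr^{\mathrm{L}}_{\st}) \simeq \rB\mathbb{G}_m$ as a 2-group. By descent, invertible sheaves of categories on $\rB G$ are then classified by maps $\rB G \to \rB\,\mathrm{Pic}(\Pr^{\mathrm{L}}_{\st}) \simeq \rB^2 \mathbb{G}_m$, whose connected components compute $\rH^2(\rB G; \mathbb{G}_m) = \rH^2(\rB G; \mathbb{C}^\times)$.

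To make the classification explicit, I would identify the class $\alpha \in \rH^2(\rB G; \mathbb{G}_m)$ with the category $\Rep^\alpha(G)$ of $\alpha$-projective $G$-representations, by unwinding the descent datum produced from the cocycle; this recovers the well-known answer for finite $G$. The main obstacle is the rigorous verification that $\mathrm{Pic}(\Pr^{\mathrm{L}}_{\st}) \simeq \rB \mathbb{G}_m$, which amounts to showing that every invertible presentable stable $\mathbb{K}$-linear category is equivalent to $\Vect$ (for instance by recognising a compact generator, so that the category is modules over $\mathbb{K}$), and then ruling out any higher homotopy contributions to the Picard spectrum above degree one.
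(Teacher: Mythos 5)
Your overall route is the same as the paper's: use $1$-affineness of $\rB G$ to identify $\Rep(G)$-module categories with sheaves of categories on $\rB G$, identify the classifying object of the invertible ones with $\rB^2\mathbb{G}_m$, and read off the classification as $\rH^2(\rB G;\mathbb{G}_m)$. The paper (in \Cref{contruction:invertible}) packages this by introducing the \'etale stack $\mathcal{F}\colon A\mapsto \Mod^\times_{\Mod_A}$ of invertible module categories and citing \cite[Corollary 5.4.4]{StefanichLinear} for the pointed equivalence $\mathcal{F}\simeq\rB^2\GL_1$. The only place you genuinely deviate is in proposing to verify this input by a bare-hands computation of the Picard $2$-group of $\Pr^{\mathrm{L}}_{\st}$, and that computation, as sketched, has two gaps.

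First, in the stable setting $\Vect$ is the category of chain complexes, so its $\mathbb{K}$-linear colimit-preserving self-equivalences are given by tensoring with the invertible complexes $\mathbb{K}[n]$, not only with one-dimensional vector spaces in degree $0$: the Picard groupoid of $\Vect$ is $\mathbb{Z}\times\rB\mathbb{G}_m$, not $\rB\mathbb{G}_m$. A priori this feeds an extra $\rH^1(\rB G;\mathbb{Z})$ factor into your classification; it vanishes here (an \'etale $\mathbb{Z}$-torsor on $\rB G$ is a homomorphism $G\to\mathbb{Z}$, which is trivial since it factors through the finite group $\pi_0(G)$), but that must be argued rather than elided. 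Second, the claim that every invertible presentable stable $\mathbb{K}$-linear category is equivalent to $\Vect$ is precisely the vanishing of the derived Brauer group of $\mathbb{K}$; this is a theorem (To\"en, Antieau--Gepner) over $\mathbb{C}$, but it is false for general characteristic-$0$ fields (over $\mathbb{R}$, modules over the quaternions give a nontrivial invertible object), so it cannot be dispatched by recognising a compact generator alone --- that only shows the category is modules over some derived Azumaya algebra. Since you yourself flagged the Picard computation as the main obstacle, the honest summary is that your proposal is the paper's proof minus its key citation; to make it self-contained you would need to supply the Brauer-group vanishing and account for the shift factor explicitly.
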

Using this result we can construct a module which exhibit both electric and magnetic symmetry.  One can show that $\Fun_{\Rep(G)}(\Rep^\alpha(H),\Rep^\beta(H))$ contains $\Rep^{\beta-\alpha}(H)$ in \Cref{ex:dyonic}.  We may interpret these operators as dyonic, with magnetic charge $\beta-\alpha$.  In particular, when $\beta-\alpha$ is trivial, we recover the usual Wilson line operators, which act as operators $\Rep^\alpha(H)\to \Rep^\alpha(H)$ preserving magnetic sector.  The following proves the (1+1)d case of \Cref{thm:electricmagnetic}. 
\begin{proposition}\label{prop:2dmagnetic}
    The module $\Rep(H)\in \Mod_{\Rep(G)}$, can be extended to a module with both $(-1)$-form $\rH^2(\rB H; \mathbb{C}^\times)$ magnetic symmetry and codimension-1 dyonic $\Rep^\alpha(H)$ operators.
\end{proposition}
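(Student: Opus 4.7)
The plan is to decompose the claimed extension into two pieces corresponding to the two structures asked for: the family of modules indexed by magnetic charge (which supplies the $(-1)$-form symmetry), and the hom-spaces between its members (which supply the dyonic codimension-1 operators).

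First, I would construct the family. By \Cref{lemma:invertiblemodule}, each cohomology class $\alpha \in \rH^2(\rB H;\mathbb{G}_m)$ determines an invertible $\Rep(H)$-module $\Rep^\alpha(H)$, built as a twisting of $\Rep(H)$. The restriction functor $\Rep(G)\to \Rep(H)$ is monoidal, so every $\Rep(H)$-module is canonically a $\Rep(G)$-module, and in particular each $\Rep^\alpha(H)$ lives in $\Mod_{\Rep(G)}$. Assembling these into a map
\begin{equation*}
\rH^2(\rB H;\mathbb{G}_m) \longrightarrow \Mod_{\Rep(G)}, \qquad \alpha \longmapsto \Rep^\alpha(H),
\end{equation*}
exhibits $\Rep(H)=\Rep^0(H)$ as the fiber over $0$ of a family of $\Rep(G)$-module categories parameterized by $\rH^2(\rB H;\mathbb{G}_m)$. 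By the example of $(-1)$-form symmetries in \S\ref{subsection:external}, this functor \emph{is} a $(-1)$-form $\rH^2(\rB H;\mathbb{G}_m)$-symmetry, supplying the first piece of structure.

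Second, I would extract the dyonic operators. Using \Cref{ex:dyonic} (which records $\Rep^{\beta-\alpha}(H)\subset \Fun_{\Rep(G)}(\Rep^\alpha(H),\Rep^\beta(H))$), specialization to $\alpha=0$ yields an embedding
\begin{equation*}
\Rep^\alpha(H) \hookrightarrow \Fun_{\Rep(G)}(\Rep(H),\Rep^\alpha(H)).
\end{equation*}
Each object of $\Rep^\alpha(H)$ then provides a codimension-1 interface from $\Rep(H)$ to the magnetic-sector $\alpha$ member of the family, carrying a $\Rep(H)$ electric label together with a magnetic jump $\alpha$. This is precisely the meaning of ``dyonic'' in the statement, and identifies the codimension-1 operators demanded by the proposition.

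The main obstacle is not any single computation but the bookkeeping needed to glue the two structures together: one must verify that, under the identification $\Rep(H)\simeq \Rep^0(H)$ used to view the module as sitting in the family, the dyonic endomorphisms of \Cref{ex:dyonic} are genuinely realized as morphisms in the domain $\rH^2(\rB H;\mathbb{G}_m)$ of the symmetry functor. This reduces to writing $\Rep^\alpha(H)$ via the explicit twisted-convolution construction behind \Cref{lemma:invertiblemodule} and tracking the relative tensor with $\Rep(H)$ through the equivalence $\Rep^\alpha(H)\otimes_{\Rep(H)}\Rep^\beta(H)\simeq \Rep^{\alpha+\beta}(H)$. Full functoriality of the family with respect to the group law on $\rH^2$ is not needed for this proposition — only the pointwise assignment is required by \Cref{def:iform} — but the same identifications would upgrade the statement to the assertion that the magnetic symmetry is implemented by fusion of defects, foreshadowing \Cref{thm:electricmagnetic}.
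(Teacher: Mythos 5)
Your proposal is correct and follows essentially the same route as the paper: the paper's proof simply exhibits the family $\alpha\mapsto\Rep^\alpha(H)$ of \Cref{eq:maglabels} as the $(-1)$-form symmetry and points to the preceding discussion (i.e.\ \Cref{lemma:invertiblemodule} and \Cref{ex:dyonic}) for the dyonic $\Rep^{\beta-\alpha}(H)$ operators, which are exactly the two ingredients you assemble. Your version is somewhat more careful than the paper's one-line proof in spelling out how the two structures are glued at $\Rep(H)\simeq\Rep^0(H)$, but no new idea is involved.
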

  \begin{proof}

Consider the following family of modules, exhibiting $(-1)$-form $\H^2(\rB H;\mathbb{C}^\times)$ symmetry, and broken electric symmetry by the preceeding discussion:
\begin{align}\label{eq:maglabels}
    \rH^2(\rB H;\mathbb{C}^\times)&\rightarrow \Mod_{\Rep(G)} \\ \notag
    \alpha &\mapsto \Rep^\alpha(H)\,.
\end{align}\end{proof}

\begin{rem}\label{rem:MagneticYM}
     Starting from the classes in $\rH^2(\rB G; \mathbb{C}^\times)$ we now make contact with the description for the magnetic symmetry given in \Cref{rem:ondynamical}.  When $G$ is a connected group object in the category of manifolds, there is a short exact sequence
\begin{equation}\label{eq:cover}
    1 \longrightarrow \pi^{}_1(G) \longrightarrow \widetilde G \longrightarrow G \longrightarrow 1,
\end{equation}
where  $\widetilde G$ is the universal covering space of $G$.\footnote{In the algebraic case, technically $\widetilde{G}$ should be the finite étale covering space of $G$, and $\pi_1(G)$ becomes the étale fundamental group of $G$.} We get a long exact sequence in cohomology: 
\begin{equation}
     \ldots \rightarrow \rH^1(\rB \widetilde{G}; \mathbb{C}^\times)\rightarrow \rH^1(\rB \pi_1(G); \mathbb{C}^\times) \rightarrow \rH^2(\rB G; \mathbb{C}^\times)\rightarrow \rH^2(\rB \widetilde{G}; \mathbb{C}^\times) \rightarrow \ldots\,. 
\end{equation}
In the case where $G$ has no torus components, we get $\rH^1(\rB \widetilde{G}; \mathbb{C}^\times) = 0 =\rH^2(\rB \widetilde{G}; \mathbb{C}^\times)$, since $\widetilde{G}$ is the covering space of $G$.  This gives an equivalence 
\begin{equation}
    \widehat{\pi_1(G)}:=\rH^1(\rB \pi^{}_1(G); \mathbb{C}^\times) \xrightarrow{\sim} \rH^2(\rB G; \mathbb{C}^\times), 
\end{equation}
between $\rH^2(\rB G; \mathbb{C}^\times)$ and the Pontryagin dual of ${\pi}^{}_1(G)$. Based on \S\ref{subsection:symgauge}, this is why we dub the $\rH^2(\rB G; \mathbb{C}^\times)$-symmetry ``magentic''.

When $G$ has torus components, we do not have access to the above exact sequence in the algebraic world.  Consider the multiplicative group $\mathbb{C}^\times$ over $\mathbb{C}$.  In this case, one has a short exact sequence 
\begin{equation}
    1\to \pi_1(\mathbb{C}^\times)\to \mathbb{C}\xrightarrow{\mathrm{exp}} \mathbb{C}^\times\to 1\,.
\end{equation}
However, the covering map $\mathrm{exp}:\mathbb{C}\to \mathbb{C}^\times$ is \textbf{not} algebraic, since its expansion has terms in arbitrary degree.  We see that $\rH^1(\rB \widetilde{G};\mathbb{C}^\times) \neq 0$, and so we do not have an isomorphism between $\widehat{\pi_1(G)}$ and $\H^2(\rB G;\mathbb{C}^\times)$, which classifies invertible module categories. 

Here, we see the first failure of an algebro-geometric approach to dynamical gauging.  We could choose to replace algebraic stacks with \emph{analytic} or \emph{smooth} stacks, and perform a similar categorical analysis with categories internal to an analytic or smooth topos.  We will refrain in this work, since many of the geometric and higher-categorical results necessary for our constructions exist only in the algebro-geometric setting.  We intend to pursue the necessary generalizations in future work.

\end{rem}

\section{Geometric Gadgets}\label{section:geometry}
While we previously referred to these objects of derived algebraic geometry somewhat informally, a more precise treatment is required for the proofs of our main results.  We begin with some motivation.

Consider a finite group $G$.  Since $G$ is discrete, for each $g\in G$, there is a vector bundle supported only over $g$.  Consequently, there is an equivalence between $G$-graded vector spaces and vector bundles over $G$.  This is not true for continuous groups.  For example, there is a unique rank-1 vector bundle on the space $\mathbb{C}$, so one can not hope to capture the all properties of continuous symmetry using vector bundles.  
\emph{Skyscraper sheaves} are linear objects concentrated at discrete points.  There is a unique simple skyscraper sheaf associated to each point $g$ in a continuous group $G$, and these sheaves have been used to describe continuous symmetries in \cite{Freed:2009qp,Jia:2025vrj}.  However, one also expects operators which can be continuously supported on the entirety of the group $G$.  

\begin{figure}
\begin{tikzpicture}
  \shade[left color=red!10, right color=red!60]
     (0,0) arc (90:180:2.5cm and 0.5cm)
    -- (-2.5,-2)
    arc (180:90:2.5cm and 0.5cm)
    -- cycle;

   (0,0) arc (90:180:2.5cm and 0.5cm);
  \draw[thick] (0,-2.5) ellipse (2.5cm and .5cm);
  \draw[blue,thick] (2.5,-2)--(2.5,-.5);
  \node[font = \tiny] at (-2.6,-2.7) {$\pi$};
  \node[font = \tiny] at (0,-2.2) {$\frac{\pi}{2}$};
  \node[font = \tiny] at (2.6,-2.7) {$0$};
\end{tikzpicture}
    \caption{Sheaves on $S^1$: a skyscraper supported at $0$, and a sheaf supported on $[\frac{\pi}{2},\pi]$.}
    \label{fig:sheavesS1}
\end{figure}
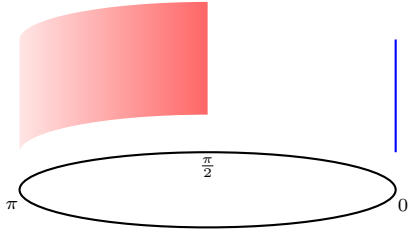
It is necessary to introduce the notion of \emph{quasi-coherent sheaves} which interpolates between these two extremes (see \Cref{fig:sheavesS1}).  The eager reader will be well served by reading only \S\ref{subection:geometry}, \S\ref{subsection:QCoh}, and \S\ref{subsection:Convolution} below, and then skipping to \S \ref{section:dynamicalgauging}.  We refer the reader to \cite{GaitsgoryRozenblyum, LurieSAG} for more complete treatments.

\subsection{Presheaves and Prestacks}\label{subection:geometry}
Let $\Sch$ be a category of geometric spaces,\footnote{The notation $\Sch$ is short for ``scheme'', but the arguments presented here apply also for (analytic/smooth/topological/super) manifolds with an appropriate notion of commutative algebra.} and $\CAlg$ be the category of commutative algebras.  To any $X\in \Sch$, one associates the algebra of global functions $\Gamma(X)$, and functions can be pulled back along morphisms $Y\to X$ in $\Sch$.  One can define a ``global functions'' functor $\Gamma:\Sch\to \CAlg^{\mathrm{op}}$.

Some spaces $X$ can be reconstructed entirely from their algebra of functions $\Gamma(X)$.  This is the case, for example, when $X$ has global coordinate functions.  Conversely, \emph{every} commutative algebra $A \in \CAlg$ can be thought of as coordinates on a space, which we call $\Spec(A)$.  Together, this assembles into an adjunction
\begin{equation}\label{eq:adjunction}
     \begin{tikzcd}
        \CAlg^{\mathrm{op}}\ar[r,shift left=2,"{\Spec}"{name=A}]& \Sch\,\ar[l,shift left=2,"{\Gamma}"{name=B}].\ar[phantom, from=A,to=B,"\dashv"rotate=90]
    \end{tikzcd}
\end{equation}
It turns out that the functor $\Spec$ is fully faithful, so the above adjunction gives an equivalence between $\CAlg^{\mathrm{op}}$ and a full subcategory $\Aff\subset\Sch$ on those spaces which are ``globally trivial''.  These spaces are called \emph{affine}.  

\begin{example}
The Euclidian spaces $\mathbb{R}^n$, considered as real manifolds, are affine.  
\end{example}

Every $X\in \Sch$ is locally affine, so it can be reassembled by gluing together affines.  This allows us to think of affines as ``coordinate patches''.  There is a natural categorical way of encoding gluing.  For every $X\in \Sch$, there is a functor  
\begin{equation}
   \Aff^{\mathrm{op}}\subset \Sch^{\mathrm{op}}\xrightarrow{\Hom_{\Sch}(-,X)} \Set, 
\end{equation}
which should be thought of as mapping every affine $S\in \Aff$ to the list of all copies of $S$ used to assemble $X$.  Doing this for everything in $\Sch$, we obtain a functor 
\begin{align}
    \Sch\xrightarrow{\yo} \Fun(\Sch^{\mathrm{op}},\Set)\xrightarrow{\mathrm{res}}\Fun(\Aff^{\mathrm{op}},\Set)&&X\mapsto \Hom_{\Sch}(-,X). 
\end{align}
By Yoneda lemma, the assignment $\yo$ is fully faithful.  In fact, one did not need to remember the value on every space: it was enough to remember the value on affines, since $\mathrm{res}\circ \yo$ also turns out to be fully faithful.  In this sense \emph{every} $F\in \Fun(\Aff^{\mathrm{op}},\Set)$ can be thought of as a rule of how to glue affines.  Such a functor is called a \emph{presheaf}.  

A shortcoming of presheaves is that they do not encode quotient spaces with stabilizers\footnote{This is true in differential geometry also: the quotient of a manifold is not generally a manifold, but an \emph{orbifold}.}: the prototypical example is the classifying stack $\rB G = \pt /G$.  To combat this deficiency, we consider presheaves which remember their points, but also \emph{automorphisms} of each point.  This is achieved by replacing \emph{sets} with \emph{groupoids}, which are exactly ``sets with automorphisms''.  A functor $F\in \Fun(\Aff^{\op},\Spaces)=:\prstk$ is called a \emph{prestack}, and can be thought of as assigning to each $S\in \Aff$ the list of copies of $S$ used to assemble a geometric space, \emph{together with their automorphisms}. In particular  there is a chain of fully faithful inclusions 
\begin{equation}
    \Aff\to \Sch \xrightarrow{\yo}\prstk. 
\end{equation}

\subsection{Chain Complexes, Stable Categories, and Spectra}
By replacing sets with $\infty$-groupoids, we have given the appropriate homotopy-coherent notion of \emph{target} for prestacks.  To arrive at derived algebraic geometry, we must also replace $\CAlg$ with an $\infty$-analog.  We give an overview of the $\infty$-analog of \emph{abelian categories}, the category of \emph{abelian groups} $\Ab$, and the ring of integers $\mathbb{Z}$.  The these objects are replaced by \emph{stable $\infty$-categories}, the category of \emph{spectra} $\Sp$, and the \emph{sphere spectrum} $\mathbb{S}$, respectively. 

\begin{table}[H]
    \centering
\begin{tabular}{c||c|c|c}
         Classical & Abelian categories& $\Ab$& $\mathbb{Z}$\\\hline
         Derived & Stable $\infty$-categories& $\Sp$& $\mathbb{S}$ 
    \end{tabular}
    \caption{Classical objects and their (derived) $\infty$-analog. }
    \label{tab:placeholder}
\end{table}
Spectra provide a homotopical upgrade of chain complexes.  Classical objects are recovered by passing to the degree-$0$ component.  In particular, $\pi_0(\mathbb{S})=\mathbb{Z}$.  A spectrum is \emph{connective} if its homotopy groups in negative degree are trivial.  Denote by $\Sp^{\cn}\subset \Sp$ the full subcategory on the connective spectra. 

Just as an ordinary commutative algebra is an algebra object in abelian groups, a \emph{derived} commutative algebra is a commutative algebra object in spectra, and we can consider them as the ``coordinate patches'' from which we assemble geometry. 
\begin{definition}
The category of \emph{connective $E_\infty$-ring spectra} is $\CAlg(\Sp^{\cn})$. 
\end{definition}
\begin{definition}
The category of \emph{derived prestacks} is $\Fun(\CAlg(\Sp^{\cn})^{\mathrm{op}},\Spaces)$. 
\end{definition}
In the remainder of the paper, $\CAlg$ and $\prstk$ will refer to the categories of $E_\infty$-rings and derived prestacks, respectively.  The categories of classical commutative rings and classical schemes embed fully faithfully into their derived analogs, and so a reader uninterested in derived behaviour can just as easily pretend that $\CAlg$ \emph{is} its classical analog.

\begin{definition}
 For $\infty$-category $\cC$ with terminal object $\pt\in \cC$, and morphisms $f:X\to Y$ and $\pt\to Y$ in $\cC$, a fiber (resp. cofiber) is a homotopy pullback (resp. pushout) diagram 
 
      $$\begin{tikzcd}
       \fib(f)\ar[dr,phantom,very near start,"\lrcorner"]\ar[r]\ar[d]&X\ar[d,"f"]\\
       \pt\ar[r]&Y
    \end{tikzcd}
\hspace{3cm}
    \begin{tikzcd}
        X\ar[r,"f"]\ar[d]&Y\ar[d]\\
        \pt\ar[r]&\cofib(f)\,.\ar[ul,phantom, very near start, "\ulcorner"]
    \end{tikzcd}
  $$ 
\end{definition}

An \emph{abelian} category is an ordinary category which has sums, kernels, and cokernels which behave nicely.  The following is \cite[Definition 1.1.1.9]{LurieHA}. 
\begin{definition}
Let $\cC$ be an $\infty$-category.  $0\in \cC$ is called a \emph{zero object} if it is both initial and terminal.  $\cC$ is called \emph{stable} if it has a zero object, every morphism in $\cC$ has a (co)fiber, and a square is a fiber square if and only if it is a cofiber square. 
\end{definition}

\begin{example}
Let $R \in \CAlg$ be a commutative ring.  Then the $\infty$-category of chain complexes of $R$-modules is stable, and agrees with $\Mod_R(\Sp)$ by \cite[Theorem 7.1.2.13]{LurieHA}.  In particular, for a field $\mathbb{K}$, the derived category of $\mathbb{K}$-vector spaces is stable.
\end{example}

The following is defined in \cite[Section 5.5]{LurieHTT}.  An alternate characterization is given by \cite[Proposition 12.1.4]{stefanich:thesis}. 
\begin{definition}
A category is called \emph{presentable} if it has all small colimits and is generated by a small collection of objects, in a suitable sense\footnote{By this we mean that it is the completion of a $\kappa$-small category under $\kappa$-filtered colimits, for some regular cardinal $\kappa$.  See \cite[Proposition 5.4.2.2]{LurieHTT}.  This is a set-theoretic condition which ensures that the adjoint functor theorem can be applied. \cite[Corollary 5.5.2.9]{LurieHTT}.}.  Denote by $\Pr^\mathrm{L}$ the category whose objects are presentable categories, and morphisms are colimit-preserving functors.  Denote by $\Pr^{\mathrm{L}}_{\st}\subset \Pr^\mathrm{L}$ the full subcategory on those presentable categories which are also stable. 
\end{definition}

\begin{example}
    If $\cC$ is a small $\infty$-category, then $\Fun(\cC^{\op},\Spaces)$ is presentable. 
\end{example}

The category $\Pr^\mathrm{L}_{\st}$ has a natural symmetric monoidal structure, with monoidal unit given by $\Sp$.  Due to Stefanich, one can inductively define an $n$-dimensional version, the category of \emph{presentable stable $n$-categories} $\nPr^\mathrm{L}_\st$, whose monoidal unit is $\mathbf{(n-1)}\Pr^\mathrm{L}_\st$.  The definition of $\nPr^\mathrm{L}_\st$ is quite subtle, and we will not address here, instead referring to \cite[Chapter 12]{stefanich:thesis} for details. 

\subsection{Quasi-coherent Sheaves}\label{subsection:QCoh}
Given a vector bundle $V$ on $X\in \Sch$, the space of \emph{global sections of $V$}, given by $\Gamma(X,V)$, is naturally a module over the algebra of functions $\Gamma(X)$.  In fact, the \emph{Serre-Swan theorem} states that there is an \emph{equivalence} between finite-rank vector bundles on $X$ and finitely generated projective $\Gamma(X)$-modules.  This is naturally generalized by allowing non-projective modules, and allowing the module to vary between coordinate patches.  This results in the notion of \emph{quasi-coherent sheaf}, which can be supported on subspaces of $X$.  We begin by defining quasi-coherent sheaves on affine schemes, and then globalize by gluing. 
\begin{definition}
For an affine scheme $\Spec(A)$, the \emph{category of quasi-coherent sheaves} on $\Spec(A)$ is defined to be $\Qcoh(\Spec(A)):=\Mod_A$, the category of $A$ modules in $\Sp^{\cn}$.  
\end{definition}
For any $X\in \prstk$, quasi-coherent sheaves over $X$ should be determined by consistently gluing sheaves on the coordinate patches of $X$. 
\begin{equation}
   \Qcoh(X)\simeq \lim_{S\in \Aff^{\op}_{/X}}\Qcoh(S)\,.
\end{equation}
This gives a way of extending the functor $\Mod:\Aff^{\op}\to \Pr^\rL_\st$ to all of $\prstk$.  Extending a functor from a subcategory by gluing in this way is called \emph{right Kan extension}.  Then the category of quasi-coherent sheaves is \emph{defined} by this extension property.  This definition of quasi-coherent sheaves has a higher-dimensional analog, which we will simultaneously give here.  We refer to \cite{stefanich:thesis,StefanichSheaf} for further details in the case $n>1$. 
\begin{definition}
   Define the functor $\Mod^n:\CAlg(\Sp)\to \nPr^\mathrm{L}_{\st}$ taking a commutative algebra to its $n$-iterated category of modules\footnote{$\Mod^n$ can also be defined by delooping and cocompleting.}.  Define the functor $\nQcoh:\prstk^{\op}\to \nPr^\mathrm{L}_{\st}$ to be the right Kan extension\footnote{This extension exists by \cite[Proposition 14.2.2]{stefanich:thesis}.} of $\Mod^n$ given in the following diagram 
\begin{equation}
\begin{tikzcd}
     \prstk^{\op} \arrow[rd,"\nQcoh",dotted]& \\
    \CAlg \arrow[r,"\Mod^n",swap] \arrow[u,"\yo"]& \nPr_\st^{\mathrm{L}}\,.
\end{tikzcd}
\end{equation}
The functor $\nQcoh$ takes a stack $X$ to the category of \emph{quasi-coherent sheaves of $(n-1)$-categories on $X$. }This upgrades to a functor $\prstk^{\op}\to \CAlg(\nPr^\rL_\st)$: for each $X\in \prstk$, $\nQcoh(X)$ inherits a symmetric monoidal structure, which we shall denote by $\nQcoh^\otimes(X)$.   
\end{definition}

\begin{example}\label{example:RepQcohBG}
   Let $G$ be an algebraic group, and consider the associated classifying stack $\rB G$.  There is an equivalence of symmetric monoidal categories 
   \begin{equation}
      \Rep(G)\simeq \Qcoh^\otimes(\rB G).  
   \end{equation}
\end{example}
\noindent For a morphism $f:X\to Y$ in $\prstk$, functoriality of $\nQcoh$ gives an adjunction
\begin{equation}
    \begin{tikzcd}
        \nQcoh^\otimes(Y)\ar[r,shift left=2,"f^*"{name=A}]&\nQcoh^\otimes(X)\ar[l,shift left=2,"f_*"{name=B}]\ar[phantom,from=A, to=B, "\dashv"rotate=-90], 
    \end{tikzcd}
\end{equation}
Where the morphisms $f^*$ and $f_*$ are called \emph{pullback} and \emph{pushforward}, respectively.  In the case of the canonical map $X\to \Spec(\mathbb{S})$, this induces a morphism 
\begin{equation}
    \Gamma(X,-):\nQcoh^\otimes(X)\to \nQcoh^\otimes(\Spec(\mathbb{S}))\simeq \nPr^\mathrm{L}_{\st}
\end{equation}
which we call the \emph{global sections} functor.  Just as affine schemes were completely recovered from their global functions, one can ask when a stack can be recovered from global sections of its ``categorified functions''.  The following definition is due to \cite{Gaitsgory_2015} in the case $n=1$ and \cite{stefanich:thesis} for arbitrary $n$. 

\begin{definition}
   A stack $X$ is called \emph{n-affine} if the global sections functor $\Gamma(X,-):\nQcoh^\otimes(X)\to \nPr^\mathrm{L}_\st$
   is monadic in $\mathbf{(n+1)}\Pr^\mathrm{L}_\st$.  
\end{definition}
This is to say, $X$ is $n$-affine if and only if $\nQcoh^\otimes(X)\simeq \Mod_{\noQcoh^\otimes(X)} (\nPr^\mathrm{L}_\st)$, where we think of $\noQcoh^\otimes(X)$ as a categorified algebra.

\begin{example}
    Suppose that $G$ is an affine algebraic group over characteristic $0$ field.  Then $\rB G$ is $1$-affine by \cite[Theorem 2.2.2]{Gaitsgory_2015}.  
\end{example}

\subsection{Descent}\label{subsection:descent}
Not every prestack can be thought of as geometric.  Geometric spaces satisfy an additional locality property: they can be reconstructed from an open cover.  For a functor $F$ out of $\Aff^{\op}$ and a map of affines $f:X\to Y$, we would like to understand when $F(Y)$ can be reconstructed from the potential cover $f$.  This is called \emph{descent}. 

Let $\Fin$ be the category of non-empty finite sets, and $\Delta$ be the category of non-empty finite \emph{ordered} sets and order preserving maps.  There is a canonical functor $\Delta\to \Fin$.  Let $\cC$ be a category with limits of finite sets, consider $c\in \cC$.  For any finite set $I\in \Fin$, we can define the \emph{cotensor} $c^I$ which is the limit of the constant diagram valued at $c$.  This gives a functor $\Delta^{\op}\to \Fin^{\op}\to \cC$.  

\begin{definition}
Let $\cD$ be a category with pullbacks. For $Y\in \cD$ and $f:X\to Y$ in $\cD_{/Y}$, applying the above construction gives a functor $\Delta^{\op}\to \cD_{/Y}\to \cD$, called the \emph{\v{C}ech nerve} of $f$, and denoted by $X^\bullet/Y$. 
\end{definition}

\begin{definition}
We say that a functor $F:\prstk^{\op}\to \cD$ \emph{satisfies descent} along $f:X\to Y$ (or alternatively, that $f$ satisfied descent along $F$), if the canonical limit map is an equivalence
\begin{equation}
F(Y)\xrightarrow{\sim}\lim_{\Delta^{\op}}F(X^\bullet/Y). 
\end{equation}
\end{definition}

We do not expect to be able to reconstruct a geometric space from any map $f:X\to Y$, but only the \emph{covering maps}.  Then we must make a choice of what it means for one affine to cover another.  Reasonable choices for notions of covering maps are \emph{fppf, \'etale, smooth} or \emph{Zariski} morphisms.  We will refer to \cite[section 2.1]{GaitsgoryRozenblyum} for definitions. 

\begin{definition}
    A prestack $F\in \prstk= \Fun(\Aff^{\op},\Spaces)$ is called a (fppf/\'etale/Zariski) \emph{stack} if it takes colimits of finite sets to limits, and $F$ satisfies descent along all (fppf/\'etale/Zariski) maps in $\Aff$.  Denote by $\Stk\subset \prstk$ the full subcategory on the stacks.
\end{definition}
Stack should be thought of as gluings of coordinate patches that are local with respect to covers, and so are \emph{generalized geometric spaces}.  There is a chain of fully faithful inclusions\footnote{$\Aff$ includes into $\Stk$ only when the chosen notion of topology is \emph{subcanonical}.  This will be true in our case. }
\begin{equation}\label{eq:glueaffine}
    \Aff\subset \Sch\subset \Stk\subset \prstk,  
\end{equation}
which tells us that the geometric spaces we began with still behave the same when considered as stacks.

\begin{warning}
The chosen notion of cover (fppf/ \'etale/ Zariski) \textbf{does} result in a different category $\Stk$, and so, in principle, our notation should reflect this.  We choose to leave this choice implicit, only referring to the chosen notion of cover when necessary. 
\end{warning}

\subsection{Convolution}\label{subsection:Convolution}

Consider two stacks equipped with a map $p:X\to Y$ satisfying descent with respect to $\nQcoh$, and consider the following diagram with projection maps:
\begin{equation}
\begin{tikzcd}
&X\times_Y X\times_Y X\ar[dl,"p_{12}"]\ar[d,"p_{13}"]\ar[dr,"p_{23}"]&\\
X\times_Y X&X\times_Y X&X\times_Y X\,.
\end{tikzcd}    
\end{equation}
One obtains a \emph{convolution} monoidal structure on $\nQcoh(X\times_Y X)$, defined by pulling back sheaves along the maps $p_{12}$ and $p_{23}$, and then pushing forward along remaining factor, to give 
\begin{equation}
    M\star N:= {p_{13}}_*(p_{12}^*(M)\otimes p_{23}^*(N)). 
\end{equation}
Denote by $\nQcoh^\star(X\times_Y X)$ the category of quasi-coherent sheaves with this monoidal structure. 

\begin{example}
     Let $G$ be an algebraic group.  There is an isomorphism $G\simeq \pt\times_{\rB G}\pt$.  This induces a convolution monoidal structure 
    \begin{equation}
        \nQcoh^\star(G):=\nQcoh^\star(\pt\times _{\rB G}\pt). 
    \end{equation}
    In the case of finite group $G$, this can be identified with the usual monoidal structure on the category of $G$-graded $n$-vector spaces $\nVect_G$. 
\end{example}

\begin{example}
For a sheaf $X$, the \emph{free loop space} is defined to be 
\begin{equation}
    \mathcal{L}X:=X\times_{X\times X}X\simeq \Hom(\rB \mathbb{Z},X). 
\end{equation}
  There is a convolution monoidal structure on the category of higher quasi-coherent sheaves over $\cL X$ which we denote by
  \begin{equation}
      \nQcoh^\star(\mathcal{L} X):=\nQcoh^\star(X\times_{X\times X}X). 
\end{equation}
\end{example}

\subsection{Correspondences and integral transforms}\label{sec:integraltransforms}
We now introduce the correspondence categories, which can be used to encode the additional functoriality of sheaf theories.  See \cite[Section 10.1]{stefanich:thesis} for further details. 
\begin{definition}
For $\cC$ a category with pullbacks, $\Corr(\cC)$ is the category with objects coinciding with $\cC$, and morphisms between $U,W$ in $\cC$ given by \emph{correspondences}: diagrams in $\cC$ of the form 
\begin{equation}\label{eq:spanstack}
    \begin{tikzcd}[row sep=0.5]
        & V \arrow[rd,"f"] \arrow[ld,"g",swap]& \\
        U & & W\,.
    \end{tikzcd}
\end{equation}
Composition is given by pullback.  $\Corr(\cC)$ is equipped with inclusions $\iota_\cC: \cC \to \Corr(\cC)$ and $\iota^R_\cC: \cC^{\op} \to \Corr(\cC)$.  When $\cC$ is monoidal, $\Corr(\cC)$ also inherets a monoidal structure.  
\end{definition}
\begin{definition}
The higher correspondence category $\nCorr(\cC)$ is defined inductively for $n> 2$: objects coincide with those of $\cC$.  For objects $U,W$ in $\cC$, the morphism $(n-1)$-category is 
\begin{equation}
\Hom_{\nCorr}(U,W):=\mathbf{(n-1)Corr}(\Hom_{\Corr(\cC)}(U,W))
\end{equation}
Furthermore, for each $m\leq n$ there are natural transformations 
\begin{align}
\iota^{m,n}_\cC: \mathbf{mCorr}(\cC) \to  \mathbf{nCorr}(\cC)&&(\iota^{m,n}_\cC)^R:  \mathbf{mCorr}(\cC)^{\op} \to  \mathbf{nCorr}(\cC).
\end{align}
See \cite[Section 11.1]{stefanich:thesis} for details.  
\end{definition}
\begin{definition}
    A morphism of stacks $f:X\to Y$ is called \emph{affine schematic} if for every affine scheme $S$ with a map $S\to Y$, the pullback $S\times_Y X$ is representable by an affine scheme.  Denote by $\prstk_{\mathrm{rep}}\subset \prstk$ the wide subcategory on the affine-schematic morphisms.
\end{definition}
For $n\geq 2$, $\nQcoh$ can be extended to a functor with domain $\mathbf{(n+1)Corr}(\prstk)$ \cite[Theorem 14.2.9]{stefanich:thesis}.  This, together with the dualizability properties of $\nCorr$ \cite{HaugsengIteratedSpans,AKSZ2022}, allow us to construct extended TQFTs.  

Consider a correspondence of prestacks as in \Cref{eq:spanstack}.  An object $K \in \nQcoh^\otimes(V)$ can be used to  define a ``pull-push'' morphism $\nQcoh^\otimes(U)\to \nQcoh^\otimes(W)$ by the formula $F\mapsto g_*(f^*(F) \otimes K)$.  In context of sheaves of functions, $f^*$ is given by integration, and so the transform is implemented by convolution against the \emph{integral kernel} $K$\footnote{The classical \emph{Fourier transform}, and more generally, Fourier-Mukai transform \cite{Mukai1981}, can be realized this way. }.  For this reason, these functors are called \emph{integral transforms}.  A theorem of \cite{ben2010integral} states that every functor between categories of sheaves on sufficiently nice spaces arises this way.  We will require the following generalization. 

\begin{proposition}[{\cite[Proposition 4.8]{StefanichSheaf}}]\label{Prop:BFN}
    Suppose that $X,Y,Z$ are prestacks and consider $X\to Y\leftarrow Z$.  Suppose that Y is $n$-affine.  Then
    \begin{equation}
\nQcoh^\otimes(X)\otimes_{\nQcoh^\otimes(Y)}\nQcoh^\otimes(Z)\simeq \nQcoh^\star(X\times_Y Z). 
    \end{equation}
\end{proposition}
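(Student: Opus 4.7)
The plan is to reduce the statement, using the universal property encoded in $n$-affineness of $Y$, to the integral transform identity for affine schemes. First I would construct the natural comparison functor
\begin{equation*}
\Phi\colon \nQcoh^\otimes(X)\otimes_{\nQcoh^\otimes(Y)}\nQcoh^\otimes(Z)\longrightarrow \nQcoh^\star(X\times_Y Z),
\end{equation*}
obtained from pullback along the two projections of the fiber product: these are $\nQcoh^\otimes(X)$- and $\nQcoh^\otimes(Z)$-linear respectively and agree on the $\nQcoh^\otimes(Y)$-action, so they assemble into $\Phi$ via the universal property of relative tensor products. The target carries the convolution monoidal structure defined in \S\ref{subsection:Convolution}, and the goal is to prove $\Phi$ is a monoidal equivalence.

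Next I would verify the statement when $X$, $Y$, $Z$ are all affine, say $X=\Spec A$, $Y=\Spec B$, $Z=\Spec C$. Here the claim reduces to $\Mod^n(A)\otimes_{\Mod^n(B)}\Mod^n(C)\simeq \Mod^n(A\otimes_B C)$, a multiplicativity property of the iterated-module functor $\Mod^n$. Since both sides of the desired equivalence are, by their construction, right Kan extensions from $\Aff$, the affine-affine-affine case propagates formally to arbitrary prestacks $X$ and $Z$ with $Y$ still affine.

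The final and most delicate step is to weaken the assumption on $Y$ from affine to $n$-affine. Expanding the relative tensor product via the two-sided bar construction
\begin{equation*}
\nQcoh^\otimes(X)\otimes_{\nQcoh^\otimes(Y)}\nQcoh^\otimes(Z)\simeq \big|\nQcoh^\otimes(X)\otimes\nQcoh^\otimes(Y)^{\otimes\bullet}\otimes\nQcoh^\otimes(Z)\big|
\end{equation*}
and identifying each simplicial level by Künneth with $\nQcoh^\otimes(X\times Y^\bullet\times Z)$, one obtains a simplicial diagram whose geometric realization must be identified with $\nQcoh^\star(X\times_Y Z)$ equipped with convolution. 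This is exactly the content of $n$-affineness: monadicity of $\Gamma(Y,-)\colon \nQcoh^\otimes(Y)\to \nPr^\mathrm{L}_\st$, together with Barr--Beck in the presentable $(n+1)$-categorical setting of \cite{stefanich:thesis}, translates into the required descent statement. The main obstacle is precisely this descent step: one must verify that the monadicity supplied by $n$-affineness interacts correctly with the convolution structure on $\nQcoh^\star(X\times_Y Z)$ coming from the three projections, and this is where the careful higher-categorical framework of \cite{stefanich:thesis} is essential.
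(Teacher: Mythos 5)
The paper does not actually prove this statement: it is imported verbatim as \cite[Proposition 4.8]{StefanichSheaf}, so there is no in-paper argument to measure your proposal against. Judged on its own terms, your outline has the right overall shape (comparison functor, affine base case, bar resolution plus monadicity), but two of your reduction steps conceal the actual mathematical content and, as written, do not go through.

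The first gap is the claim that the affine case ``propagates formally to arbitrary prestacks $X$ and $Z$'' because both sides are right Kan extensions from $\Aff$. For a general prestack, $\nQcoh^\otimes(X)\simeq \lim_{S\in\Aff^{\op}_{/X}}\nQcoh^\otimes(S)$ is a \emph{limit}, whereas the relative tensor product $-\otimes_{\nQcoh^\otimes(Y)}-$ is a \emph{colimit} (a geometric realization of the bar construction); these do not commute in general, and there is no formal reason the limit of the affine-level equivalences assembles into the desired one. This commutation is precisely where the substance of the theorem lives: in the $n=1$ analogue of \cite{ben2010integral} one needs perfection/compact generation of the stacks to make it work, and in Stefanich's setting for $n\geq 2$ it is the adjointability and passage-to-adjoints properties of $\nPr^{\mathrm{L}}_{\st}$ (the same machinery invoked in \Cref{lemma:inductivedescent} and \Cref{thm:MoritaEquivalence}) that convert the relevant limits into colimits. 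Your sketch never engages with this. The second gap is circularity in the final step: identifying the simplicial levels of the bar construction ``by K\"unneth'' with $\nQcoh^\otimes(X\times Y^\bullet\times Z)$ is itself an instance of the proposition being proved (the absolute case, where the middle term is a point), and that case is already nontrivial --- indeed false for $n=1$ and general non-perfect stacks --- so it cannot be taken as given. A smaller point: the convolution structure of \S\ref{subsection:Convolution} is only defined on $\nQcoh(X\times_Y X)$ for a single map $X\to Y$; for distinct $X$ and $Z$ the object $\nQcoh^\star(X\times_Y Z)$ is a $(\nQcoh^\star(X\times_YX),\nQcoh^\star(Z\times_YZ))$-bimodule rather than a monoidal category, so ``monoidal equivalence'' is not the right target for $\Phi$ in general.
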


\begin{corollary}\label{cor:nQCohDual}
Suppose that $Y$ is n-affine, and consider map of stacks $X\to Y$.  Then $\nQcoh^\otimes(X)$ is canonically self dual as a $\nQcoh^\otimes(Y)$-module.  
\end{corollary}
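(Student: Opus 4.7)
The plan is to reduce the statement to the canonical self-duality of every object in the correspondence category, and then transport it through $\nQcoh$ to the module category. The first step is to apply Proposition \ref{Prop:BFN} together with the $n$-affineness of $Y$, which yields the identification
\begin{equation*}
\nQcoh^\otimes(X)\otimes_{\nQcoh^\otimes(Y)}\nQcoh^\otimes(X)\simeq \nQcoh^\star(X\times_Y X).
\end{equation*}
Under this identification, a self-duality of $\nQcoh^\otimes(X)$ as a $\nQcoh^\otimes(Y)$-module amounts to producing coevaluation and evaluation maps to and from $\nQcoh^\star(X\times_Y X)$ satisfying the zigzag identities.

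Next, I would produce these maps from the geometry. Writing $p: X \to Y$ for the structure map and $\Delta: X \to X\times_Y X$ for the diagonal, the natural candidates are
\begin{align*}
\mathrm{coev} &:= \Delta_* \circ p^* : \nQcoh^\otimes(Y) \longrightarrow \nQcoh^\star(X\times_Y X), \\
\mathrm{ev} &:= p_* \circ \Delta^* : \nQcoh^\star(X\times_Y X) \longrightarrow \nQcoh^\otimes(Y).
\end{align*}
These are the integral transforms corresponding to the correspondences
\begin{equation*}
Y \xleftarrow{\,p\,} X \xrightarrow{\,\Delta\,} X\times_Y X
\quad\text{and}\quad
X\times_Y X \xleftarrow{\,\Delta\,} X \xrightarrow{\,p\,} Y
\end{equation*}
in $\nCorr(\prstk)$. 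The key observation is that composing these two spans in either order produces a span equivalent to the identity on $X$ in the slice correspondence category over $Y$, since pulling the diagonal back along itself recovers $X$ via the universal property of the fiber product. Hence every object of $\prstk_{/Y}$ is tautologically self-dual with respect to the fiber-product monoidal structure in $\nCorr$.

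The final step is to apply the symmetric monoidal extension of $\nQcoh$ to $\mathbf{(n+1)Corr}(\prstk)$ recalled in \S\ref{sec:integraltransforms}. Being a monoidal functor, it transports the correspondence-level self-duality of $X$ into genuine duality data at the level of categories, and the resulting coevaluation and evaluation coincide with the geometric formulas above once the BFN identification is invoked. The main obstacle I anticipate is bookkeeping: one must check that the $n$-affineness of $Y$ is precisely what allows the correspondence-level duality to descend to duality in $\Mod_{\nQcoh^\otimes(Y)}(\nPr^{\mathrm{L}}_{\st})$, and that the zigzag diagrams commute after translating through Proposition \ref{Prop:BFN}. This reduces to a base-change computation comparing the composite span $X\times_Y X \leftarrow X \to X\times_Y X$ with the two sides of the triangle identities, but some care is needed to keep the symmetric monoidal structures compatible on both sides.
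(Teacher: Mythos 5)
Your proposal is correct and takes essentially the same route as the paper: the paper's proof simply invokes \Cref{Prop:BFN} together with the argument of Ben-Zvi--Francis--Nadler (Corollary 4.8) or Stefanich (Proposition 10.3.3), and what you have written is precisely an unpacking of that argument --- the identification $\nQcoh^\otimes(X)\otimes_{\nQcoh^\otimes(Y)}\nQcoh^\otimes(X)\simeq\nQcoh^\star(X\times_YX)$, the coevaluation $\Delta_*p^*$ and evaluation $p_*\Delta^*$, and the zigzag identities obtained by transporting the tautological self-duality of $X$ in the correspondence category over $Y$ through the monoidal functor $\nQcoh$. No gap; the only caveat (which the paper shares) is that the pushforwards must define morphisms in the relevant presentable setting, which is exactly what the cited results supply.
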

\begin{proof}
The result follows from Proposition \ref{Prop:BFN} using the proof of \cite[Corollary 4.8]{ben2010integral} (or alternatively \cite[Proposition 10.3.3]{stefanich:thesis}). 
\end{proof}

\begin{corollary}\label{corr:IntegralFunctor}
    Suppose that $X,Y,Z$ are stacks and consider morphisms $X\to Y\leftarrow Z$.  Suppose also that Y is $n$-affine, then
    \begin{equation}
\Fun_{\nQcoh^\otimes(Y)}(\nQcoh^\otimes(X),\nQcoh^\otimes(Z))\simeq \nQcoh^\star(X\times_Y Z). 
    \end{equation}
\end{corollary}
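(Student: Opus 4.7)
The plan is to combine the self-duality established in \Cref{cor:nQCohDual} with the tensor-product computation of \Cref{Prop:BFN}, using the general principle that $\Fun_A(M,N) \simeq M^\vee \otimes_A N$ whenever $M$ is a dualizable $A$-module in a closed symmetric monoidal $(\infty,n+1)$-category.

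First I would reduce the Hom-category to a tensor product. Working inside the symmetric monoidal $(\infty,n+1)$-category $\Mod_{\nQcoh^\otimes(Y)}(\npPr^\mathrm{L}_\st)$, the $n$-affineness of $Y$ guarantees that the relative tensor product over $\nQcoh^\otimes(Y)$ exists and behaves as expected. By \Cref{cor:nQCohDual}, $\nQcoh^\otimes(X)$ is a self-dual object of this module category (via the pushforward along $X \to Y$ providing the evaluation and coevaluation). The standard adjunction for dualizable modules then gives a canonical equivalence
\begin{equation}
\Fun_{\nQcoh^\otimes(Y)}(\nQcoh^\otimes(X),\nQcoh^\otimes(Z)) \simeq \nQcoh^\otimes(X)^{\vee} \otimes_{\nQcoh^\otimes(Y)} \nQcoh^\otimes(Z) \simeq \nQcoh^\otimes(X) \otimes_{\nQcoh^\otimes(Y)} \nQcoh^\otimes(Z),
\end{equation}
where the second equivalence uses the self-duality.

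Second, I would invoke \Cref{Prop:BFN} directly: under the hypothesis that $Y$ is $n$-affine, the relative tensor product on the right-hand side is identified with the category of sheaves on the fiber product equipped with the convolution product,
\begin{equation}
\nQcoh^\otimes(X) \otimes_{\nQcoh^\otimes(Y)} \nQcoh^\otimes(Z) \simeq \nQcoh^\star(X \times_Y Z).
\end{equation}
Composing the two equivalences yields the claimed identification.

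The main subtlety, and what I would spend the most care on, is verifying that the equivalences are compatible as monoidal (or at least $\nQcoh^\otimes(Y)$-linear) statements, and in particular tracking why the symmetric product on each factor gets replaced by the convolution product $\star$ on the fiber product. This is where \Cref{Prop:BFN} does the heavy lifting: pullback of the structure sheaf along the projections $X \times_Y Z \to X$ and $X \times_Y Z \to Z$, followed by pushforward along the diagonal, produces exactly the convolution kernel realizing a sheaf on the fiber product as an integral transform. Once this is in place, the corollary follows formally, and in fact echoes the argument for \Cref{cor:nQCohDual} but with the second copy of $\nQcoh^\otimes(X)$ replaced by $\nQcoh^\otimes(Z)$; I would model the write-up on that proof.
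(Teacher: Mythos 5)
Your proposal is correct and follows essentially the same route as the paper: the paper's proof likewise uses the self-duality of $\nQcoh^\otimes(X)$ over $\nQcoh^\otimes(Y)$ from \Cref{cor:nQCohDual} to rewrite $\Fun_{\nQcoh^\otimes(Y)}(\nQcoh^\otimes(X),-)$ as $\nQcoh^\otimes(X)\otimes_{\nQcoh^\otimes(Y)}(-)$ and then (implicitly) evaluates at $\nQcoh^\otimes(Z)$ and invokes \Cref{Prop:BFN}. Your write-up is in fact slightly more explicit than the paper's about the final application of \Cref{Prop:BFN}, but the argument is the same.
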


\begin{proof}
    It follows from self-duality of $\nQcoh^\otimes(X)$ that we have an equivalence of functors 
    \begin{align}
        \Fun_{\nQcoh^\otimes(Y)}(\nQcoh^\otimes(X),-)&\simeq \nQcoh^\otimes(X)^\vee\otimes_{\nQcoh^\otimes(Y)}(-)\\ \notag
        &\simeq \nQcoh^\otimes(X)\otimes_{\nQcoh^\otimes(Y)}(-)\,.
    \end{align}
\end{proof}

\section{$(n+1)$-Dimensional Gauging and  Symmetries}\label{section:dynamicalgauging}

\subsection{Constructing the TQFT in Arbitrary Dimensions}\label{subsection:generalTQFT}

For a theory in $(n+1)$d with a $G$-symmetry   we now construct a once categorified $(n+1)$d TQFT, which we view as a partially defined-$(n+2)$d TQFT.  Apriori, this $(n+2)$d TQFT can be evaluated on thickenings of a $(n+1)$-manifold by an interval.  This is sufficient to allow an interpretation as the $(n+2)$d {SymTFT} for an $(n+1)$d boundary theory.  

\begin{warning}\label{warning:shiftindex}
In what follows, the once-categorified $(n+1)$d TQFT will be constructed from $\npRep(G)$ rather than $\nRep(G)$, as one would have expected from reading \S\ref{subsection:buldTFT}.  The reason for this index shift is because the ``Morita TQFTs'' are constructed from $\nRep(G)$, but for technical reasons, the TQFTs in \Cref{thm:generalTQFT} will be valued in $\mathbf{(n+1)Pr}^\rL_{\st}$, and the functor $\Mod:\Mor(\nPr^\mathrm{L}_\st)\to \mathbf{(n+1)}\Pr^\mathrm{L}_\st$ induces an index shift, since $\Mod_{\nRep(G)}\simeq \mathbf{(n+1)}\Rep(G)$.  

We show in  \Cref{prop:MoritaGeneralTFT} that in the case of  $n=1$ the TQFT constructed from \Cref{thm:generalTQFT} agrees with the one constructed from \Cref{prop:buildTQFT}.  The two formalisms also agree for general $n$.
\end{warning}

\begin{theorem}\label{thm:generalTQFT}
    Let $G$ be an affine algebraic group over a characteristic $0$ field $\mathbb{K}$.  For all $n\geq 1$, there is a fully extended once-categorified $(n+1)$-dimensional TQFT 
   \begin{equation}
       \cZ_{n+2}^G:\Bord^{\mathrm{fr}}_{n+1}\to \mathbf{(n+1)}{\Pr^\mathrm{L}_\st}^{(n+2)\op}. 
   \end{equation} 
  For $0\leq i\leq n$, the value of $\cZ_{n+2}$ on a closed $i$-manifold $M^{i}$ is given by 
    \begin{equation}
        \cZ^G_{n+2}(M^{i})\simeq \mathbf{(n-i+1)}\Qcoh^\otimes(\mathrm{Maps}(M^{i},\rB G)), 
    \end{equation}
higher quasi-coherent sheaves on the moduli space of flat $G$ connections on $M^{i}$.  In particular, 
\begin{equation}
    \cZ_{n+2}^G(\pt)=\mathbf{(n+1)}\Rep(G). 
\end{equation}
\end{theorem}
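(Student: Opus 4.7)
The plan is to apply the framed cobordism hypothesis, assigning to the point the object $\npRep(G)\simeq \npQcoh^\otimes(\rB G)$ (cf.~\Cref{example:RepQcohBG}) in the target $(\infty,n+2)$-category $\mathbf{(n+1)}\Pr^\rL_\st$.  Because the theorem only claims a once-categorified $(n+1)$-dimensional TQFT, what I need is $(n+1)$-dualizability of this object, i.e.\ all duals and adjoints up to level $n+1$, rather than the full dualizability that a bona fide $(n+2)$-dimensional TQFT would require.

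To verify the dualizability, the key input is that $\rB G$ is $n$-affine for every $n\geq 1$ whenever $G$ is an affine algebraic group over a characteristic zero field, by \cite{Gaitsgory_2015} for $n=1$ and its higher categorical generalization in \cite{stefanich:thesis}.  Together with \Cref{Prop:BFN} and \Cref{cor:nQCohDual}, $n$-affineness lets me produce the evaluation and coevaluation morphisms for the duality, and then their iterated adjoints, by realizing each successively as an integral kernel supported on a suitable fiber product of copies of $\rB G$.  For $n=1$ a clean statement is available via \cite[Corollary 4.1.6]{stefanich2023}, recovering the object-level dualizability used in \Cref{prop:buildTQFT}; for higher $n$, I would appeal to the general $n$-affine dualizability machinery of \cite{stefanich:thesis,StefanichSheaf}.

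To compute the value on a closed $i$-manifold $M^i$, I would use that $M\mapsto \mathrm{Maps}(M,\rB G)$ carries homotopy pushouts of spaces to pullbacks of stacks, while $\nQcoh^\otimes$ carries such pullbacks over an $n$-affine base to relative tensor products of module categories (\Cref{Prop:BFN}).  The cobordism hypothesis expresses the value of $\cZ^G_{n+2}$ on $M^i$ as an iterated trace/tensor construction on the value at the point, and building $M^i$ from handles therefore produces, on the sheaf side, exactly the sequence of integral transforms that evaluate to $\mathbf{(n-i+1)}\Qcoh^\otimes(\mathrm{Maps}(M^i,\rB G))$.  As a sanity check, for $M^1=S^1$ the procedure yields $\mathbf{n}\Qcoh^\otimes(\mathcal{L}\rB G)\simeq \mathbf{n}\Qcoh^\otimes(G/G)$, the expected horizontal trace of the point value.

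The main obstacle is the $(n+1)$-dualizability verification for large $n$, which requires producing adjoints to successive evaluation/coevaluation morphisms at every categorical level.  Each such adjoint amounts to a fresh application of the $n$-affine machinery to recognize a morphism of module categories as an integral kernel on a fiber product, while carefully tracking the index shift between $\nQcoh$ and $\nPr^\rL_\st$ (cf.~\Cref{warning:shiftindex}).  No genuinely new geometric input is required beyond Stefanich's $n$-affine story, but organizing the adjunctions so that they are consistent across all categorical levels, and matching them with the framed bordism generators, is the most delicate part of the argument.
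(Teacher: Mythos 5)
Your proposal takes a genuinely different route from the paper, and the route you choose leaves the central step unestablished. You propose to apply the cobordism hypothesis directly to the object $\npQcoh^\otimes(\rB G)$ in $\mathbf{(n+1)}\Pr^\rL_\st$, which requires verifying its $(n+1)$-dualizability there. You correctly identify this as "the most delicate part of the argument," but you do not actually carry it out: the reference you give for $n=1$ (\cite[Corollary 4.1.6]{stefanich2023}, smoothness and properness of $\Rep(G)$) only yields the $2$-dualizability used in \Cref{prop:buildTQFT}, and for higher $n$ you gesture at "the general $n$-affine dualizability machinery" without a concrete statement. Iterated adjunctibility of evaluation/coevaluation data at every categorical level up to $n+1$ is a substantial open-ended verification — indeed the paper itself flags (in the remark following \Cref{lemma:ConstructWall}) that even upgrading twice-categorified boundaries to once-categorified ones is not established in general and points to \cite{StefanichDualizabilityStacks, Hall_2015} as merely "relevant." As written, your argument has a genuine gap at its core.

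The paper sidesteps the dualizability question entirely. Its proof composes two functors: first, $\mathrm{Maps}(-,\rB G):\Bord^{\mathrm{fr}}_{n+1}\to \mathbf{(n+2)Corr}(\prstk_{\mathrm{rep}})$, which exists by \cite[Corollary 4.4.1]{AKSZ2022} because in correspondence categories every object is fully dualizable for formal reasons (the only geometric input is that $\rB G$ has affine diagonal, so the functor factors through the affine-schematic morphisms); second, the symmetric monoidal extension $\npQcoh:\mathbf{(n+2)Corr}(\prstk_{\mathrm{rep}})\to \mathbf{(n+1)}{\Pr^\rL_\st}^{(n+2)\op}$ of \cite[Theorem 14.2.9]{stefanich:thesis}. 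Dualizability is thus inherited from the correspondence category rather than checked in $\mathbf{(n+1)}\Pr^\rL_\st$, and the value on a closed $i$-manifold $M^i$ is read off immediately as $\npQcoh$ applied to $\mathrm{Maps}(M^i,\rB G)$, rather than reconstructed through handle decompositions and \Cref{Prop:BFN} as you propose. This factorization also explains the $(n+2)\op$ appearing in the target, which a direct cobordism-hypothesis argument would not naturally produce. If you want to rescue your approach, you would need an actual theorem asserting $(n+1)$-dualizability of $\npQcoh^\otimes(\rB G)$ in $\mathbf{(n+1)}\Pr^\rL_\st$; absent that, the correspondence-category construction is the correct path.
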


In this theorem, the lower $(n+2)$ index in $\cZ^G_{n+2}$ denotes the ``physical'' dimension of the TFT.  The upper index $(n+2)\mathrm{op}$ on $\mathbf{(n+1)}{\Pr^\mathrm{L}_\st}^{(n+2)\op}$ indicates that the direction of $(n+2)$-cells are reversed from $\mathbf{(n+1)}\Pr^\mathrm{L}_\st$.

\begin{proof}
Let $X$ be a prestack.  By \cite[Corollary 4.4.1]{AKSZ2022}, there is a functor $\Bord_{n+1}^{\mathrm{fr}}\xrightarrow{\mathrm{Maps}(-,X)} \mathbf{(n+1)Corr}(\prstk)$. 
If $X$ has affine diagonal, this functor factors through $\mathbf{(n+1)Corr}(\prstk_{\mathrm{rep}})$.  Then for any stack $X$ with affine diagonal, we obtain a fully extended once-categorified $(n+1)$-dimensional TQFT given by the composition
$$\Bord_{n+1}^{\mathrm{fr}}\xrightarrow{\mathrm{Maps}(-,X)} \mathbf{(n+2)Corr}(\prstk_{\mathrm{rep}})\xrightarrow{\npQcoh} \mathbf{(n+1)}{\Pr^\mathrm{L}_\st}^{(n+2)\op},$$
 where the right map is the unique extension of $\nQcoh$ given by \cite[Theorem 14.2.9]{stefanich:thesis}.  The value on closed $i$-manifolds follows from the definition of the above composition and $\nQcoh$ (see \cite{StefanichSheaf}).  In particular, the stack $\rB G$ determines a theory in this way, which we denote by $\cZ^{G}_{n+2}$. 
\end{proof}

\begin{notation}
    The object $\Spec(\mathbb{K})\in \prstk$ determines a fully-extended once-categorified $(n+1)$-dimensional TQFT as in \Cref{thm:generalTQFT}, which we will call the \emph{trivial theory}, and denote by $\cZ_{n+2}^\mathrm{triv}$.  A \emph{boundary} is a morphism of theories with target $\cZ^\mathrm{triv}_{n+2}$. 
\end{notation}

\begin{lemma}\label{lemma:inductivedescent}
   Suppose that $f:X\to Y$ is a map of prestacks satisfying descent with respect to $\nQcoh$ for $n\geq 2$.  Then $f^*:\npQcoh^\otimes(Y)\to \npQcoh^\otimes(X)$ is monadic.  
\end{lemma}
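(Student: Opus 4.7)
The plan is to apply the Barr--Beck--Lurie monadicity theorem in $\mathbf{(n+1)}\Pr^\rL_\st$. Existence of the right adjoint $f_*$ is automatic: $\npQcoh^\otimes(-)$ takes values in presentable stable higher categories, and $f^*$, arising from the universal property of the right Kan extension defining $\npQcoh$ from $\Mod^{n+1}:\CAlg \to \mathbf{(n+1)}\Pr^\rL_\st$, preserves small colimits; the adjoint functor theorem then produces $f_*$. What remains is conservativity of $f^*$ together with the creation of the appropriate split simplicial (co)limits, both of which I would deduce from the hypothesis of $\nQcoh$-descent at level $n$.

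The bridge between the two levels is the iterated-module presentation. On affines one has $\npQcoh^\otimes(\Spec A) = \Mod^{n+1}(A) \simeq \Mod_{\nQcoh^\otimes(\Spec A)}(\mathbf{(n+1)}\Pr^\rL_\st)$, and this identification globalizes along the right Kan extension to a natural equivalence $\npQcoh^\otimes(Z) \simeq \Mod_{\nQcoh^\otimes(Z)}(\mathbf{(n+1)}\Pr^\rL_\st)$ for an arbitrary prestack $Z$. Since $\Mod_{(-)}$ is a right adjoint in $\mathbf{(n+1)}\Pr^\rL_\st$, it preserves limits of commutative algebra objects; applying it to the descent equivalence
\begin{equation*}
\nQcoh^\otimes(Y) \xrightarrow{\sim} \lim_{\Delta^{\op}} \nQcoh^\otimes(X^\bullet/Y)
\end{equation*}
yields
\begin{equation*}
\npQcoh^\otimes(Y) \xrightarrow{\sim} \lim_{\Delta^{\op}} \npQcoh^\otimes(X^\bullet/Y),
\end{equation*}
which is precisely $\npQcoh$-descent for $f$. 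Monadicity of $f^*$ is equivalent to this totalization formula: conservativity is the statement that levelwise equivalence on the \v{C}ech nerve detects equivalences upstairs, while creation of the relevant limits is read off directly from the totalization.

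The hard part will be justifying the interchange $\Mod_{(-)} \circ \lim \simeq \lim \circ \Mod_{(-)}$ in the higher-categorical setting, and propagating the identification $\npQcoh(Z) \simeq \Mod_{\nQcoh(Z)}(\mathbf{(n+1)}\Pr^\rL_\st)$ along the right Kan extension to non-affine $Z$ (in particular to $Y$ and to each term of the \v{C}ech nerve). These are standard but technically delicate manipulations in the machinery of higher presentable categories developed in \cite{stefanich:thesis, StefanichSheaf}, and the hypothesis $n\geq 2$ is almost certainly used to ensure that $\npQcoh$ extends to the correspondence category so that this interchange is well-behaved. Once these compatibility statements are in hand, the lemma follows from the template above.
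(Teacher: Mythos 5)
Your template (enriched Barr--Beck--Lurie) is the same as the paper's, but two of your intermediate steps do not hold as stated, and they are exactly where the content of the lemma lives. First, the adjoint is on the wrong side. Monadicity of $f^*$ requires $f^*$ to admit a \emph{left} adjoint; the paper obtains this from \cite[Corollary 14.2.10]{stefanich:thesis}, a genuinely categorified phenomenon (for $n\geq 2$ the pushforward becomes left adjoint to pullback), not from the adjoint functor theorem, which only hands you the \emph{right} adjoint. With only a right adjoint in hand, the \v{C}ech totalization formula you write down is the statement of \emph{comonadic} descent for $f^*$, and your assertion that ``monadicity of $f^*$ is equivalent to this totalization formula'' conflates the monadic and comonadic sides. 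The paper's proof is precisely about crossing that bridge: the descent limit diagram is shown to be adjointable, and the passage-to-adjoints property of $\npQcoh^\otimes(Y)$ \cite[Theorem 8.5.3]{stefanich:thesis} converts the limit over $\Delta^{\op}$ into a colimit of the adjointed diagram, exhibiting the unit $\nQcoh^\otimes(Y)$ as a colimit of objects in the image of the left adjoint $f_*$, which is what gives conservativity of $f^*$.

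Second, the identification $\npQcoh^\otimes(Z)\simeq \Mod_{\nQcoh^\otimes(Z)}\bigl(\mathbf{(n+1)}\Pr^{\rL}_{\st}\bigr)$ for an arbitrary prestack $Z$ does \emph{not} globalize along the right Kan extension: that equivalence is by definition the statement that $Z$ is $n$-affine (see the paper's \S 3.3), and it fails for general prestacks. Relatedly, $\Mod_{(-)}$ is not a right adjoint and does not preserve limits of commutative algebra objects; the comparison functor $\Mod_{\lim A_\bullet}\to \lim \Mod_{A_\bullet}$ being an equivalence is essentially the descent/affineness assertion you are trying to establish, so invoking the interchange $\Mod_{(-)}\circ\lim\simeq\lim\circ\Mod_{(-)}$ is close to circular. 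You flag this interchange as ``the hard part,'' which is the right instinct, but the proposed justification (right adjointness) is not available, and the lemma cannot be closed along this route without the left-adjoint existence and passage-to-adjoints inputs that the paper uses instead.
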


The following is adapted from the proof of \cite[Proposition 14.3.3]{stefanich:thesis}.
\begin{proof}
   We will confirm the conditions of enriched Barr-Beck-Lurie \cite[Theorem 7.4.10]{stefanich:thesis}.  By \cite[Corollary 14.2.10]{stefanich:thesis}, $f^*$ admits a left adjoint (and preserves colimits since it is left adjoint).  Then it is sufficient to check that $f^*$ is conservative.  To do so it suffices to check that $f_*$ generates $\npQcoh^\otimes(Y)$ under colimits and tensors.  Since $f_*$ is a map of $\npQcoh^\otimes(Y)$ modules, it suffices to confirm that the monoidal unit $\nQcoh^\otimes(Y)$ is a colimit of objects in the image of $f_*$.  Since $f$ satisfies descent, the augmented simplicial object $\lim \nQcoh^\otimes(X^\bullet/Y)$ is a limit diagram.  Applying \cite[Proposition 10.3.1, Theorem 14.2.9]{stefanich:thesis}, we see that this diagram is adjointable.  By \cite[Theorem 8.5.3]{stefanich:thesis}, $\npQcoh^\otimes(Y)$ satisfies the passage to adjoints property, so $\nQcoh^\otimes(Y)$ is the colimit of the adjoint of the diagram $\nQcoh^\otimes(X^\bullet/Y)$.  Then $\nQcoh^\otimes(Y)$ is the colimit of objects in the image of $f_*$, so $f^*$ is conservative, and the result follows. 
\end{proof}

\begin{corollary}\label{cor:BGnAff}
    Let $G$ be an affine algebraic group over characteristic $0$ field $\mathbb{K}$.  Then $\rB G$ is $n$-affine for all $n\geq 1$.
\end{corollary}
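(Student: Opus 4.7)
I proceed by induction on $n$. The base case $n = 1$ is the content of the example immediately preceding the corollary, which invokes \cite[Theorem 2.2.2]{Gaitsgory_2015} for affine algebraic $G$ in characteristic zero.

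For the inductive step, assume $\rB G$ is $n$-affine for some $n \geq 1$; the goal is to upgrade this to $(n+1)$-affineness. Consider the atlas $q \colon \pt \to \rB G$, which is affine-schematic and faithfully flat since $G$ is affine (its \v{C}ech nerve is $G^{\bullet+1}$, a cosimplicial diagram of affine schemes). The first task is to show that $q$ satisfies descent with respect to $\nQcoh$. By the inductive hypothesis, $\nQcoh^\otimes(\rB G) \simeq \Mod_{\noQcoh^\otimes(\rB G)}(\nPr^\mathrm{L}_{\st})$, and this monadic identification amounts, via the Barr-Beck-Lurie theorem applied to the adjunction $q^\ast \dashv q_\ast$, to the equivalence $\nQcoh^\otimes(\rB G) \xrightarrow{\sim} \lim_{\Delta^{\op}} \nQcoh^\otimes(G^{\bullet+1})$, i.e., $\nQcoh$-descent along $q$.

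Once descent has been established, Lemma \ref{lemma:inductivedescent} applies (with $n$ replaced by $n+1 \geq 2$) and yields monadicity of $q^\ast \colon \npQcoh^\otimes(\rB G) \to \npQcoh^\otimes(\pt)$ in $\mathbf{(n+2)}\Pr^\mathrm{L}_{\st}$. Since $\pt = \Spec(\mathbb{K})$ is trivially $m$-affine for every $m$, we have $\npQcoh^\otimes(\pt) \simeq \npVect$, and the global sections functor for $\pt$ is the identification with the monoidal unit of $\mathbf{(n+2)}\Pr^\mathrm{L}_{\st}$. Composing with the monadicity of $q^\ast$, one deduces that the global sections functor $\Gamma_{\rB G} \colon \npQcoh^\otimes(\rB G) \to \npPr^\mathrm{L}_{\st}$ is itself monadic, which is exactly $(n+1)$-affineness of $\rB G$, closing the induction.

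The main obstacle is bridging from $n$-affineness of $\rB G$ to $\nQcoh$-descent along $q$; this relies on the passage-to-adjoints property of $\npPr^\mathrm{L}_{\st}$ \cite[Theorem 8.5.3]{stefanich:thesis} together with the higher Barr-Beck-Lurie theorem \cite[Theorem 7.4.10]{stefanich:thesis}. Both of these already appear in the proof of Lemma \ref{lemma:inductivedescent}, so the key technical input is reused inductively at each level to feed the next step of the induction.
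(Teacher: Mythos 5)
Your overall skeleton matches the paper's: the case $n=1$ is quoted from \cite{Gaitsgory_2015}, and for higher $n$ one feeds descent along the atlas $q\colon \Spec(\mathbb{K})\to \rB G$ into \Cref{lemma:inductivedescent} and then composes with $n$-affineness of the point. The divergence, and the gap, is in how you establish $\nQcoh$-descent along $q$. You claim it is a reformulation of the inductive hypothesis: that the identification $\nQcoh^\otimes(\rB G)\simeq \Mod_{\noQcoh^\otimes(\rB G)}(\nPr^{\mathrm{L}}_{\st})$ ``amounts to'' the totalization formula $\nQcoh^\otimes(\rB G)\simeq \lim_{\Delta^{\op}}\nQcoh^\otimes(G^{\bullet+1})$ via Barr--Beck--Lurie applied to $q^*\dashv q_*$. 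This conflates two different adjunctions. The module-category description is the monadicity statement for the global-sections adjunction attached to the projection $p\colon \rB G\to \Spec(\mathbb{K})$, whereas the \v{C}ech totalization is the descent (comonadic) description attached to the atlas $q\colon \Spec(\mathbb{K})\to \rB G$. That these two descriptions of $\nQcoh^\otimes(\rB G)$ agree is not a formal restatement of either one; it is essentially the content of an affineness theorem. Indeed the two conditions are logically independent in general: fppf descent for $\Qcoh$ along $\pt\to\rB G$ holds for any fppf group scheme, while $1$-affineness of classifying prestacks can fail outside the affine finite-type characteristic-zero setting, so neither is recoverable from the other by a formal Barr--Beck manipulation. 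One could perhaps salvage a direct argument (conservativity of $q^*$, a base-change check, and \Cref{Prop:BFN} to identify the cosimplicial terms), but none of that is in your proposal, and it is the pivotal step.

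The fix is the paper's route: descent along $q$ need not be bootstrapped from the inductive hypothesis at all. The map $\Spec(\mathbb{K})\to\rB G$ is fppf, since its base change along a cover of $\rB G$ is $G\to\Spec(\mathbb{K})$, which is faithfully flat because $G$ is affine over a field; and $\nQcoh$ satisfies fppf descent for every $n$ by \cite[Example 4.6]{StefanichSheaf}. This gives the required $(n-1)$-descent along $q$ uniformly in $n$, with no induction on affineness, after which \Cref{lemma:inductivedescent}, the $n$-affineness of $\Spec(\mathbb{K})$, and closure of monadic functors under composition finish the argument exactly as in your remaining steps.
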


\begin{proof}
    The case $n=1$ follows from \cite[Proposition 2.2.2]{Gaitsgory_2015}.  Let $n\geq 2$.  Since $\Spec(\mathbb{K})$ is $n$-affine and composition of monadic morphisms is monadic, it suffices to confirm that $\nQcoh^\otimes(\rB G)\to \nQcoh(\Spec(\mathbb{K}))$ is monadic. We will show this via the methods in \Cref{lemma:inductivedescent}, which means we would like to show that $\Spec(\mathbb{K})\to \rB G$ satisfies $(n-1)$-descent. To show this, it suffices to confirm that $\Spec(\mathbb{K})\to \rB G$ is fppf.  To show fppf, it suffices to confirm that the base change along a cover of $\rB G$ is faithfully flat.  Then $\Spec(\mathbb{K})\to \rB G$ is fppf since $G\to \Spec(\mathbb{K})$ is.  The result follows from \Cref{lemma:inductivedescent}, since $\nQcoh$ satisfies fppf descent.  
\end{proof}

\begin{proposition}\label{prop:MoritaGeneralTFT}

In the case $n=1$, the theory constructed in \Cref{thm:generalTQFT} can be identified with the TQFT constructed in \Cref{prop:buildTQFT} under the inclusion $\Mor(\Pr^{\mathrm{L}}_\st)\to \mathbf{2Pr}^\mathrm{L}_\st$. 
\end{proposition}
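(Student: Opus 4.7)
The plan is to invoke the uniqueness clause of the cobordism hypothesis after bringing both theories into a common target. Both are fully extended framed $(1{+}1)$-dimensional once-categorified TQFTs with source $\Bord_2^{\mathrm{fr}}$; their targets are two different $(\infty,3)$-categories, namely $\Mor(\Pr^\rL_\st)$ and $\mathbf{2}\Pr^\rL_\st$, related by the symmetric monoidal inclusion $\iota: \Mor(\Pr^\rL_\st) \to \mathbf{2}\Pr^\rL_\st$ that sends an algebra $A$ to its category of modules $\Mod_A$.

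First I would post-compose $\cZ^{\mathrm{Mor}}$ from \Cref{prop:buildTQFT} with $\iota$ to land in $\mathbf{2}\Pr^\rL_\st$. Since $\iota$ is symmetric monoidal and preserves dualizability, the composite $\iota \circ \cZ^{\mathrm{Mor}} : \Bord_2^{\mathrm{fr}} \to \mathbf{2}\Pr^\rL_\st$ is a fully extended framed TQFT whose value on the positively-framed point is $\iota(\Rep(G)) = \Mod_{\Rep(G)}$, an object which remains $2$-dualizable in $\mathbf{2}\Pr^\rL_\st$. By the uniqueness part of the cobordism hypothesis, $\iota \circ \cZ^{\mathrm{Mor}}$ is determined up to a contractible space of choices by this value.

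Next I would compute the value of $\cZ^G_3$ on the point. Directly from \Cref{thm:generalTQFT} this equals $\mathbf{2}\Qcoh^\otimes(\rB G)$. By \Cref{cor:BGnAff}, $\rB G$ is $2$-affine, so $\mathbf{2}\Qcoh^\otimes(\rB G) \simeq \Mod_{\Qcoh^\otimes(\rB G)}$ taken inside $\mathbf{2}\Pr^\rL_\st$, and combining with \Cref{example:RepQcohBG} this is $\Mod_{\Rep(G)}$. Thus both TQFTs assign $\Mod_{\Rep(G)}$ to the point, and uniqueness then delivers an equivalence $\iota \circ \cZ^{\mathrm{Mor}} \simeq \cZ^G_3$.

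The main obstacle will be checking that the AKSZ-style composition $\Bord^{\mathrm{fr}}_2 \to \mathbf{2Corr}(\prstk_{\mathrm{rep}}) \to \mathbf{2}\Pr^\rL_\st$ used to construct $\cZ^G_3$ is genuinely the cobordism-hypothesis theory associated to $\Mod_{\Rep(G)}$, rather than some a priori different fully extended theory that merely happens to share this value on the point. This reduces to verifying that the duality data inherited from the span construction, interpreted through the $2$-affineness of $\rB G$ and through $\iota$, coincides with the canonical duality data on $\Mod_{\Rep(G)}$; such identifications are part of Stefanich's framework \cite{stefanich:thesis}, but demand careful bookkeeping.
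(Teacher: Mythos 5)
Your proposal is correct and follows essentially the same route as the paper: compose the Morita theory with the fully faithful functor $\Mod:\Mor(\Pr^{\mathrm{L}}_\st)\to \mathbf{2Pr}^\mathrm{L}_\st$, identify both point-values with $\Mod_{\Rep(G)}\simeq \mathbf{2}\Rep(G)$ using affineness of $\rB G$, and invoke the uniqueness clause of the cobordism hypothesis. The ``main obstacle'' you flag at the end is not actually an obstacle: the framed cobordism hypothesis asserts that evaluation at the point is an equivalence from the space of fully extended framed theories onto the core of the fully dualizable objects, so any such theory --- however it was constructed, including via the AKSZ-style span composite --- is canonically equivalent to the one generated by its point-value, and no separate comparison of duality data is required (the space of duality data on a dualizable object is contractible).
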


\begin{proof}
In order to make the identification with the TQFT constructed in \Cref{prop:buildTQFT} we note that
there is a fully faithful functor $\Mod:\Mor(\nPr^\mathrm{L}_\st)\to \mathbf{(n+1)}\Pr^\mathrm{L}_\st$.  Under this inclusion, the image of $\nRep(G)$ is given by $\Mod_{ \nRep(G)}\simeq \mathbf{(n+1)}\Rep(G)$, using that $\rB G$ is $n$-affine (\Cref{cor:BGnAff}).  By cobordism hypothesis, the composition of the TQFT constructed in \Cref{prop:buildTQFT} with $\Mod$ agrees with the one constructed in \Cref{thm:generalTQFT}. 
\end{proof}

\begin{lemma}\label{lemma:ConstructWall}
 Any correspondence in $\prstk$
   \begin{equation}
       \begin{tikzcd}[row sep=0.5]
           &X\ar[dl,swap,"\alpha"]\ar[dr,"\beta"]&\\
           Y&&Z
       \end{tikzcd}
   \end{equation}
   determines a twice-categorified domain wall between TQFTs constructed from $Y$ and $Z$.  If $\alpha,\beta$ are affine schematic, then the resulting domain wall can be extended to a once-categorified domain wall. 
\end{lemma}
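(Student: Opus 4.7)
The plan is to lift the correspondence to a natural transformation between the two underlying mapping-stack functors, and then postcompose with $\nQcoh$.

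First I would observe that a correspondence $Y\leftarrow X\to Z$ is, by definition, a $1$-morphism $Y\to Z$ in $\Corr(\prstk)$, and via the iterated inclusion $\iota^{2,n+2}_{\prstk}$ it lifts to a $1$-morphism in $\mathbf{(n+2)Corr}(\prstk)$. Since the functor $\mathrm{Maps}(M,-)\colon \prstk\to \prstk$ preserves pullbacks, applying it pointwise to the correspondence and using naturality in $M$ produces a $1$-morphism
\begin{equation}
\mathrm{Maps}(-,Y)\longrightarrow \mathrm{Maps}(-,Z)
\end{equation}
of functors $\Bord^{\mathrm{fr}}_{n+1}\to \mathbf{(n+2)Corr}(\prstk)$, whose value at each closed manifold $M$ is the correspondence $\mathrm{Maps}(M,Y)\leftarrow \mathrm{Maps}(M,X)\to \mathrm{Maps}(M,Z)$. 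This is the same mechanism used in the proof of \Cref{thm:generalTQFT} to produce the TQFTs themselves; the correspondence $X$ plays the role of the ``wall datum'' between $Y$ and $Z$.

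Postcomposing with the extension of $\npQcoh$ from \cite[Theorem 14.2.9]{stefanich:thesis} converts this natural transformation into a transformation between the TQFTs $\cZ^Y_{n+2}$ and $\cZ^Z_{n+2}$. Because we are working over $\mathbf{(n+2)Corr}(\prstk)$ rather than over $\mathbf{(n+2)Corr}(\prstk_{\mathrm{rep}})$, the wall is only produced at one categorification level above that of the TQFTs, which is the sense in which the resulting wall is twice-categorified: only its higher-codimensional components are controlled.

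For the refinement when $\alpha$ and $\beta$ are affine schematic, I would verify that the induced maps $\mathrm{Maps}(M,\alpha)$ and $\mathrm{Maps}(M,\beta)$ remain affine schematic for every closed manifold $M$ appearing in the bordism category. The input is stability of affine-schematic morphisms under arbitrary base change, combined with a cellular decomposition expressing $\mathrm{Maps}(M,-)$ as an iterated finite limit of copies of the identity functor. Granting this, the correspondence factors through $\mathbf{(n+2)Corr}(\prstk_{\mathrm{rep}})$, and the finer extension of $\npQcoh$ available on this subcategory lands in the same $(n+2)$-category as the TQFTs themselves, producing the promised once-categorified domain wall.

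The main obstacle I anticipate is the uniform verification that $\mathrm{Maps}(M,-)$ sends affine-schematic maps to affine-schematic maps. For $M$ a point this is tautological, but for general compact $M$ one must invoke a cell decomposition and check that the resulting iterated pullbacks along affine-schematic maps remain affine schematic. This is standard in derived algebraic geometry but requires care to state precisely in the fully extended bordism setting.
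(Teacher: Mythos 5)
Your proposal takes a genuinely different route from the paper, and the route you choose is the one with the hard, unresolved step. The paper's proof never touches mapping stacks: it observes that the single correspondence $Y\leftarrow X\to Z$ is a $1$-morphism in $\mathbf{(n+1)Corr}(\prstk)$ (resp.\ in $\mathbf{(n+2)Corr}(\prstk_{\mathrm{rep}})$ when $\alpha,\beta$ are affine schematic), that such $1$-morphisms are automatically \emph{fully adjunctible} by \cite[Proposition 11.1.9]{stefanich:thesis} (or \cite{HaugsengIteratedSpans}), that the symmetric monoidal functor $\npQcoh$ of \cite[Theorem 14.2.9]{stefanich:thesis} carries it to a fully adjunctible $1$-morphism in $\mathbf{(n+1)}\Pr^{\mathrm{L}}_{\st}$ between the objects defining $\cZ^{Y}_{n+2}$ and $\cZ^{Z}_{n+2}$, and then invokes the cobordism hypothesis \emph{with singularities} to convert this algebraic datum into the domain wall. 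The difference between ``twice-'' and ``once-categorified'' is exactly the difference between parts (i) and (ii) of the cited theorem, i.e.\ which truncation ${\mathbf{(n+1)}\Pr^{\mathrm{L}}_{\st}}^{\leq n-1}$ versus ${\mathbf{(n+1)}\Pr^{\mathrm{L}}_{\st}}^{\leq n}$ the extension of $\npQcoh$ lands in; representability of the legs is used only to lift the one given correspondence into $\mathbf{(n+2)Corr}(\prstk_{\mathrm{rep}})$, not to control mapping stacks. What this buys is that all geometric verification is confined to a single diagram, with the entire bordism-theoretic coherence outsourced to Lurie's relative cobordism hypothesis.

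The gap in your version is the first step, which you dispatch with ``using naturality in $M$.'' Producing the domain wall as a transformation $\mathrm{Maps}(-,Y)\to\mathrm{Maps}(-,Z)$ of functors out of $\Bord^{\mathrm{fr}}_{n+1}$ requires the AKSZ construction $X\mapsto\mathrm{Maps}(-,X)$ to be functorial in its target with respect to \emph{correspondences}, as a fully coherent (op)lax natural transformation of symmetric monoidal functors valued in $\mathbf{(n+2)Corr}(\prstk)$. The reference the paper relies on, \cite[Corollary 4.4.1]{AKSZ2022}, produces the functor for a fixed target prestack and does not supply this relative statement; assembling the levelwise correspondences $\mathrm{Maps}(M,Y)\leftarrow\mathrm{Maps}(M,X)\to\mathrm{Maps}(M,Z)$ into a coherent transformation is essentially equivalent in difficulty to the relative cobordism hypothesis you are implicitly avoiding. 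Relatedly, the obstacle you flag at the end --- that $\mathrm{Maps}(M,-)$ should preserve affine-schematic morphisms --- is an artifact of your route and is neither needed nor checked in the paper; in the paper's argument the affine-schematic hypothesis on $\alpha,\beta$ enters only once, to place the given span in $\prstk_{\mathrm{rep}}$. If you want to salvage your approach you must either prove the coherent functoriality of $\mathrm{Maps}$ in correspondences or, more economically, replace your first two paragraphs by the dualizability-plus-cobordism-hypothesis argument.
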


\begin{proof}
   For all $n\geq 1$, the above correspondence determines a fully-adjunctible $1$-morphism in $\mathbf{(n+1)Corr}(\prstk)$ by \cite[Proposition 11.1.9]{stefanich:thesis} (or alternatively \cite{HaugsengIteratedSpans}).  Applying the functor  $\npQcoh:\mathbf{(n+1)Corr}\to \mathbf{(n+1)}\Pr^\mathrm{L}_\st$ of \cite[Theorem 14.2.9 (i)]{stefanich:thesis}, we obtain a fully-adjunctible $1$-morphism in ${\mathbf{(n+1)}\Pr^\mathrm{L}_\st}^{\leq n-1}$. 
   
   If $\alpha,\beta$ are affine schematic, then the above correspondence determines a fully adjunctivle morphism in $\mathbf{(n+2)Corr}(\prstk_\mathrm{rep})$.  Applying the functor $\npQcoh:\mathbf{(n+2)Corr}\to {\mathbf{(n+1)}\Pr^\mathrm{L}_\st}$ of \cite[Theorem 14.2.9 (ii)]{stefanich:thesis}, we obtain a fully-adjunctible $1$-morphism in ${\mathbf{(n+1)}\Pr^\mathrm{L}_\st}^{\leq n}$.  

   In both cases, the result follows from cobordism hypothesis with singularities \cite[Section 4.3]{LurieTFT}. 
\end{proof}

\begin{corollary}
   Let $G$ be an affine algebraic group over characteristic $0$ field.  For every $n\geq 1$, any closed subgroup $H\subset G$ determines a twice-categorified boundary for $\cZ^G_{n+2}$.  Endomorphisms of this boundary coincide with $\End^\mathrm{L}_{\nRep(G)}(\nRep(H))$. 
\end{corollary}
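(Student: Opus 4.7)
The plan is to exhibit the boundary via a natural correspondence of stacks associated to the inclusion $H\hookrightarrow G$, and then to compute its endomorphisms using the integral-transform machinery of \S\ref{sec:integraltransforms}. First, the closed embedding $H\hookrightarrow G$ induces a morphism $\rB\iota:\rB H\to \rB G$; together with the structure map $\rB H\to \pt$ this produces a correspondence
\begin{equation*}
\rB G \xleftarrow{\rB\iota} \rB H \longrightarrow \pt
\end{equation*}
in $\prstk$. Applying \Cref{lemma:ConstructWall} directly produces a twice-categorified domain wall $\cZ^G_{n+2}\to \cZ^{\mathrm{triv}}_{n+2}$, which is exactly the desired twice-categorified boundary for $\cZ^G_{n+2}$. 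Note that the fiber of $\rB\iota$ over $\pt\to \rB G$ is $G/H$, which is affine only when $H$ is reductive, so the correspondence is not in general affine-schematic, and \Cref{lemma:ConstructWall} will not upgrade it to a once-categorified boundary.

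Second, to compute the endomorphism $n$-category we proceed in two compatible ways. By \Cref{prop:MoritaGeneralTFT}, the $\mathbf{(n+1)}\Pr^{\mathrm{L}}_\st$-valued TQFT of \Cref{thm:generalTQFT} agrees with the Morita TQFT built from $\nRep(G)$, and under this identification our twice-categorified boundary corresponds to $\nRep(H)\simeq \nQcoh^\otimes(\rB H)$ viewed as an $\nRep(G)$-module via pullback along $\rB\iota$ (using \Cref{example:RepQcohBG}). By construction of $\Mor(\nPr^{\mathrm{L}}_\st)$, the endomorphism $n$-category of this 1-morphism is tautologically $\End^{\mathrm{L}}_{\nRep(G)}(\nRep(H))$. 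As a cross-check, self-composing the correspondence over $\rB G$ yields $\rB H\times_{\rB G}\rB H = H\backslash G/H$, and \Cref{corr:IntegralFunctor} with $X=Z=\rB H$ and $Y=\rB G$ (valid by $n$-affineness of $\rB G$, \Cref{cor:BGnAff}) gives
\begin{equation*}
\End^{\mathrm{L}}_{\nRep(G)}(\nRep(H)) \simeq \nQcoh^\star(H\backslash G/H),
\end{equation*}
with the convolution monoidal structure, matching the correspondence computation.

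The step we expect to require the most care is the bookkeeping of the categorical-level shift between the two models of the TQFT: $\cZ^G_{n+2}$ lives in $\mathbf{(n+1)}\Pr^{\mathrm{L}}_\st$, so its boundary endomorphisms present naturally as an object of that $(n+1)$-category, whereas $\End^{\mathrm{L}}_{\nRep(G)}(\nRep(H))$ lives as an $n$-categorical algebra in $\nPr^{\mathrm{L}}_\st$. Aligning these requires tracking the inclusion $\Mod:\Mor(\nPr^{\mathrm{L}}_\st)\to \mathbf{(n+1)}\Pr^{\mathrm{L}}_\st$ (cf.~\Cref{warning:shiftindex}), and in particular checking that the equivalence of \Cref{prop:MoritaGeneralTFT} is compatible with the boundary just constructed, not merely with the values of the TQFT on a point, so that endomorphisms of corresponding boundaries are identified.
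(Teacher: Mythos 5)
Your proposal is correct and follows essentially the same route as the paper: the correspondence $\rB G \leftarrow \rB H \to \Spec(\mathbb{K})$ fed into \Cref{lemma:ConstructWall} gives the twice-categorified boundary, and full faithfulness of $\Mod:\Mor(\nPr^{\mathrm{L}}_\st)\to \mathbf{(n+1)}\Pr^{\mathrm{L}}_\st$ identifies its endomorphisms with $\End^{\mathrm{L}}_{\nRep(G)}(\nRep(H))$. Your added observations (the failure of affine-schematicity when $G/H$ is not affine, and the cross-check via \Cref{corr:IntegralFunctor}) are consistent with the paper's surrounding remarks but not needed for the argument.
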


\begin{proof}
By \Cref{lemma:ConstructWall}, the following correspondence determines a twice-categorified domain wall between $\cZ^G_{n+2}$ and $\cZ^\mathrm{triv}_{n+2}$.
    \begin{equation}
       \begin{tikzcd}[row sep=0.5]
           &\rB H\ar[dl]\ar[dr]&\\
           \rB G&&\Spec(\mathbb{K}). 
       \end{tikzcd}
   \end{equation}   
   Under the functor $\Mod:\Mor(\nPr^\mathrm{L}_\st)\to \mathbf{(n+1)}\Pr^\mathrm{L}_\st$, the above morphism in $\mathbf{(n+1)}\Pr^\mathrm{L}_\st$ is identified with the $\nRep(G)$-module category $\nRep(H)$.  Since the functor $\Mod$ is fully faithful, endomorphisms of the above boundary can equivalently be calculated in $\Mod_{\nRep(G)}$ as endomorphisms of $\nRep(H)$. 
\end{proof}

\begin{rem}
    As in the case $n=1$, it may be possible to show that many of the $\nRep(H)$ extend to give once-categorified boundaries, instead of twice-categorified.  Relevant dualizability results appear in \cite{StefanichDualizabilityStacks, Hall_2015}. 
\end{rem}

\subsection{Electric symmetries}\label{subsection:extractingelectric}

In this section we prove \Cref{mainthm:Wilson} and \Cref{prop:linkingelectric}, and we expand upon the physical relevance in terms recovering the 1-form center symmetry. We end the first part of this section by identifying the $(n-1)$-form electric symmetry of $(n+1)$d flat $G$-gauge theory, and in the second part we identify the 1-form center symmetry.

Before obtaining these results we first expound on the theory we develop in \S\ref{subsection:electricsym}  and \S\ref{subsection:morita} in the case where groups are involved. The reader is directed to those sections for a more general treatment of what is in this section.
In the following, we will assume that $G$ is an affine algebraic group over characteristic $0$ field and $H,H'\subset G$ are closed subgroups, unless explicitly stated otherwise.  

\begin{notation}
     Let $G$ be an algebraic group and $H,H'\subset G$ subgroups.  Define the (stacky) \emph{double coset} to be $H\backslash G/H':= \rB H\times_{\rB G}\rB H'$.  In the following, the symbol $H\backslash G/H$ will refer to the stack quotient in this sense.  This induces a convolution monoidal structure, which we denote by 
    \begin{equation}
        \nQcoh^\star(H\backslash G/H):= \nQcoh^\star(\rB H\times_{\rB G}\rB H)
    \end{equation}
\end{notation}

\begin{example}
   For $G$ an affine algebraic group over characteristic $0$ field and $H,H'\subset G$ closed subgroups the following equivalence follows from \Cref{corr:IntegralFunctor}.  
   \begin{equation}
       \Fun^\rL_{\nRep(G)}(\nRep(H),\nRep(H'))\simeq \nQcoh^\star(\rB H\times_{\rB G}\rB H')\simeq \nQcoh^\star(H\backslash G/H'). 
   \end{equation}
  There is a standard formula for $H\backslash G/H'$ given by:
\begin{equation}\label{eq:doublecoset}
     H\backslash G/ H' = \bigsqcup_{[g]\in \pi_0(H\backslash G/H')} \rB(H\cap gH'g^{-1})\,.
\end{equation}
The notation $\pi_0(H\backslash G/H')$ indicates the 0-homotopy component, which recovers the ordinary quotient, rather than the stacky double coset.  
\end{example}

We now give one more example of physical interest involving a specific choice of $G$.  Note that, over $\mathbb{C}$, we have $\mathbb{G}_m=\mathbb{C}^\times$ and $\mu_n=\mathbb{Z}_n$. 
\begin{example}
Let $G=\mathbb{G}_m$ and $H=\mu_j$, $H'=\mu_k$.  Then we can compute 
\begin{align}
    \Fun^\rL_{\nRep(\mathbb{G}_m)}(\nRep( \mu_j),\nRep ( \mu_k))&= \nQcoh^\star( \mathbb{G}_m\times \mu_{\mathrm{gcd}(j,k)})\\ \notag 
    &\simeq \nQcoh^\star( \mathbb{G}_m)\otimes_{\nVect}\nRep( \mu_{\mathrm{gcd}(j,k)})
\end{align}
For $j=k$, one obtains 
\begin{equation}
\End_{\nRep(\mathbb{G}_m)}(\nRep( \mu_k))= \nQcoh^\star( \mathbb{G}_m)\otimes_\nVect \Rep( \mu_k)\,.
\end{equation}
\end{example}

\begin{lemma}\label{lemma:HGDescent}
   Let $G$ be an affine algebraic group over characteristic $0$ field, and $H\subset G$ a closed subgroup.  Then the inclusion $\rB H\to \rB G$ satisfies descent.  
\end{lemma}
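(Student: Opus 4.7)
The plan is to show that the map $\iota:\rB H\to \rB G$ is fppf, after which descent for $\nQcoh$ (and its higher analogues) follows from the standard fppf descent property of quasi-coherent sheaves, exactly as used in the proof of \Cref{cor:BGnAff} for the atlas $\Spec(\mathbb{K})\to \rB G$.

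To verify that $\iota$ is fppf, one checks fppf-ness after base change along the fppf cover $\pi:\Spec(\mathbb{K})\to \rB G$. The relevant Cartesian square is
\begin{equation}
    \begin{tikzcd}
        G/H \ar[r] \ar[d] & \rB H \ar[d, "\iota"] \\
        \Spec(\mathbb{K}) \ar[r, "\pi"] & \rB G,
    \end{tikzcd}
\end{equation}
where the fiber product is identified with the homogeneous space $G/H$ since $\iota$ classifies the trivial $H$-reduction of the universal $G$-torsor. By the classical theory of quotients of affine algebraic groups by closed subgroups (Chevalley), $G/H$ is representable by a smooth quasi-projective $\mathbb{K}$-scheme; in particular $G/H\to \Spec(\mathbb{K})$ is faithfully flat and of finite presentation. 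Since fppf-ness is local on the target for the fppf topology, it follows that $\iota$ itself is fppf.

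Given this, descent along $\iota$ for $\Qcoh$ is immediate from fppf descent of quasi-coherent sheaves; the higher cases $n\geq 2$ then follow from \Cref{lemma:inductivedescent}, precisely as in the inductive step of \Cref{cor:BGnAff}. The only nontrivial ingredient is the representability of $G/H$ by a scheme in the chosen algebro-geometric setting, which I would expect to be the main point requiring careful citation; everything else is a formal consequence of the descent machinery already developed in \S\ref{subsection:descent}.
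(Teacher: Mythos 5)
Your proposal is correct and follows essentially the same route as the paper: check (faithful) flatness of $\rB H\to \rB G$ by base change along the atlas $\Spec(\mathbb{K})\to \rB G$, identify the fiber with $G/H$, and conclude since $G/H\to\Spec(\mathbb{K})$ is faithfully flat over a field. Your version is slightly more careful in citing representability of $G/H$ and the finite-presentation part of fppf, which the paper's one-line argument leaves implicit.
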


\begin{proof}
We wish to confirm that $\rB H\to \rB G$ is faithfully flat, and so satisfies descent by \cite[Example 4.6]{StefanichSheaf}.  To do so it is sufficient to confirm that base change $G/H=\rB H\times_{\rB G}\Spec({\mathbb{K}})$ along the cover $\Spec(\mathbb{K})\to \rB G$ is faithfully flat over $\Spec({\mathbb{K}})$. But we know that $G/H\to \Spec({\mathbb{K}})$ is faithfully flat since $G/H$ is non-empty and $\mathbb{K}$ is a field.  
\end{proof}

The following result is a specialization of \Cref{thm:MoritaEquivalence}, formulated in a form that is directly applicable to the physical setting considered in this section.
\begin{proposition}\label{corr:GroupMorita}
   Let $G$ be an affine algebraic group over characteristic $0$ field $\mathbb{K}$, and $H\subset G$ a closed subgroup.  Suppose that \underline{\textbf{either}}: $n=1$ and $H,G$ are reductive \underline{\textbf{or}} $n\geq 2$.  Then there is a Morita equivalence 
   \begin{equation}\label{eq:moritagroups}
       \Fun_{\nRep(G)}(\nRep(H),-):\Mod_{\nRep(G)}\xrightarrow{\sim}\Mod_{\nQcoh^\star(H\backslash G/H)}
   \end{equation}
   where $\nQcoh^\star(H\backslash G/H)=\nQcoh^\star(\rB H\times_{\rB G}\rB H)\simeq \End^\rL_{\Rep(G)}(\Rep(H))$.  
\end{proposition}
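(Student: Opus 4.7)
The proposition is presented as a specialization of \Cref{thm:MoritaEquivalence}, so my plan is to apply that general Morita principle with $X = \rB H$ and $Y = \rB G$, then to pin down the resulting endomorphism algebra. Since $\rB G$ is $n$-affine by \Cref{cor:BGnAff}, \Cref{cor:nQCohDual} tells me that $\nQcoh^\otimes(\rB H)$ is self-dual as a $\nQcoh^\otimes(\rB G)$-module. Applying \Cref{corr:IntegralFunctor} then yields
\begin{equation*}
\End^{\rL}_{\nRep(G)}(\nRep(H)) \simeq \nQcoh^\otimes(\rB H)\otimes_{\nQcoh^\otimes(\rB G)}\nQcoh^\otimes(\rB H) \simeq \nQcoh^\star(H\backslash G/H),
\end{equation*}
which already identifies the target algebra in the claimed Morita equivalence.

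To promote this identification to an equivalence of module $(n+1)$-categories, the strategy is to exhibit $\nRep(H)$ as a Morita generator of $\Mod_{\nRep(G)}$ through enriched Barr-Beck-Lurie, in the style of \Cref{lemma:inductivedescent}. Concretely, I would consider pullback along $\rB H \to \rB G$ one categorical level up and check that it is monadic: the adjoint is cocontinuous by $n$-affineness of $\rB G$, and conservativity reduces to the map $\rB H \to \rB G$ satisfying descent for $\nQcoh$. The latter is precisely \Cref{lemma:HGDescent}, combined with the propagation of descent up categorical dimension used to prove \Cref{cor:BGnAff}. Once monadicity is in hand, the Barr-Beck resolution produces $\Mod_{\nRep(G)} \simeq \Mod_{\End(\nRep(H))}$, and this equivalence is implemented concretely by $\Fun_{\nRep(G)}(\nRep(H),-)$.

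The main obstacle is the dichotomy in the hypotheses. For $n \geq 2$, the iterated $n$-affineness and the fact that flatness propagates up categorical dimension make the monadicity argument automatic, requiring no further assumption on $G$ or $H$ beyond being affine algebraic. In contrast, for $n = 1$ one is working at the level of $\Pr^{\rL}_{\st}$, where pullback along $\rB H \to \rB G$ is not automatically conservative: for non-reductive groups, induced representations may fail to compactly generate $\Rep(G)$, so the monad $\nQcoh^\star(H\backslash G/H)$ need not recover all of $\Mod_{\Rep(G)}$. Imposing reductivity on both $H$ and $G$ (in characteristic $0$) restores complete reducibility of $\Rep(G)$ and faithful exactness of restriction, which is exactly the hypothesis that allows Barr-Beck to apply in this edge case.
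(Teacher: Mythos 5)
Your overall strategy matches the paper's: both specialize \Cref{thm:MoritaEquivalence} (whose proof is precisely the enriched Barr--Beck--Lurie argument you sketch) to $X=\rB H$, $Y=\rB G$, using \Cref{cor:BGnAff} for $n$-affineness, \Cref{lemma:HGDescent} for descent, and \Cref{corr:IntegralFunctor} to identify the endomorphism algebra with $\nQcoh^\star(H\backslash G/H)$. For $n\geq 2$ your argument is complete and agrees with the paper.

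The gap is in your account of the $n=1$ case. \Cref{thm:MoritaEquivalence} at $n=1$ requires, beyond descent, that $f:\rB H\to\rB G$ be \emph{affine schematic}: this is what makes the Bar/\v{C}ech diagram adjointable (pushforward colimit-preserving and satisfying base change), which is the step of the monadicity argument that does not come for free at the bottom categorical level. Note that descent alone (\Cref{lemma:HGDescent}) holds for any closed subgroup of an affine algebraic group in characteristic $0$, so your reasoning as written would prove the statement with no reductivity hypothesis at all --- a sign that an ingredient is missing. The paper's actual use of reductivity is via the Matsushima criterion: $G/H\simeq\Spec(\mathbb{K})\times_{\rB G}\rB H$ is affine precisely when $H$ is reductive (for $G$ reductive), and this affineness of the base change along the cover $\Spec(\mathbb{K})\to\rB G$ is what makes $\rB H\to\rB G$ affine schematic. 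Your proposed explanation --- failure of conservativity of restriction, or of compact generation by induced representations, remedied by complete reducibility --- is not the operative mechanism: restriction along a closed subgroup is conservative and exact regardless of reductivity, and what actually fails without reductivity is adjointability of the descent diagram, i.e.\ the good behaviour of the induction (pushforward) functor, not conservativity of pullback.
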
 


\begin{proof}

$\rB G$ is $n$-affine for all $n\geq 1$ by \Cref{cor:BGnAff}.  When $n\geq 2$, the result follows from \Cref{thm:MoritaEquivalence} and \Cref{lemma:HGDescent}.  In the case $n=1$, we must additionally confirm that $\rB H\to \rB G$ is affine schematic.  

It is sufficient to confirm that the base change along an affine cover of $\rB G$ is affine.  We consider base change of the cover $\Spec(\mathbb{K})\to \rB G$.  The result follows since $G/H\simeq \Spec(\mathbb{K})\times_{\rB G}\rB H$ is affine by Matsushima criterion.    
\end{proof}

\begin{rem}
    The previous result can likely be extended beyond affine groups using \cite[Proposition 11.2.1]{Gaitsgory_2015}. 
\end{rem}

\begin{example}
For $H=H'=\{1\}$, one gets 
\begin{equation}
\End^\rL_{\nRep(G)}(\nVect,\nVect)\simeq \nQcoh^\star(\pt\times_{\rB G}\pt)\simeq \nQcoh^\star(G), 
\end{equation}
This gives an analog of the usual Morita equivalence between $\Rep(G)$ and $\Vect(G)$ in arbitrary dimension. 
\begin{equation}
   \Fun_{\nRep(G)}(\nVect,-):\Mod_{\nRep(G)}\xrightarrow{\sim}\Mod_{\nQcoh^\star(G)}.  
\end{equation}

\end{example}
 
\begin{corollary}
    There is an equivalence of Drinfeld centers 
    \begin{equation}
         \FZ(\nQcoh^\star(H\backslash G/H))\simeq \FZ(\nRep(G)). 
    \end{equation}
    In particular, we get
     \begin{equation}
        \FZ(\nQcoh^\star(G))\simeq \FZ(\nRep(G)). 
    \end{equation}
\end{corollary}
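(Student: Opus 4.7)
The plan is to derive this as a direct consequence of Proposition~\ref{corr:GroupMorita}, invoking the Morita invariance of the Drinfeld center. Under the stated hypotheses on $G$, $H$, and $n$, Proposition~\ref{corr:GroupMorita} supplies an equivalence of $(n+1)$-categories of modules
\[
\Mod_{\nRep(G)} \simeq \Mod_{\nQcoh^\star(H\backslash G/H)},
\]
witnessing that the two monoidal $n$-categories $\nRep(G)$ and $\nQcoh^\star(H\backslash G/H)$ are Morita equivalent.

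Next, I would apply the intrinsic description of the Drinfeld center: for any monoidal $n$-category $\cC$ with good relative tensor products,
\[
\FZ(\cC) \simeq \End_{\Mod_\cC}(\cC),
\]
where $\cC$ is regarded as the free rank-one module over itself (equivalently, $\FZ(\cC)\simeq \End_{\cC\otimes \cC^{\op}}(\cC)$). For $n=1$ this is the identification already invoked in \eqref{eq:codim2} via \cite[Theorem 5.3.1.30]{LurieHA}; for higher $n$ the analogous statement is available in the presentable stable $n$-categorical framework of \cite{stefanich:thesis}. The right-hand side depends only on the Morita class of $\cC$ together with the canonical regular module, so the Morita equivalence of the previous step upgrades to an equivalence of braided monoidal categories
\[
\FZ(\nRep(G)) \simeq \FZ(\nQcoh^\star(H\backslash G/H)).
\]
The second, specialized equivalence then falls out by taking $H = \{1\}$: the double coset stack degenerates to $\{1\}\backslash G/\{1\} = \pt\times_{\rB G}\pt \simeq G$, so $\nQcoh^\star(\{1\}\backslash G/\{1\}) = \nQcoh^\star(G)$ by the very definition of the convolution product.

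The main — and essentially the only — obstacle is bookkeeping: one must confirm that the notion of ``$\FZ$'' being used in all three expressions is the Morita-invariant one, namely the center of $\Mod_\cC$ (or endomorphisms of $\cC$ as a $\cC$-bimodule), and that this formulation is valid at the level of presentable stable $n$-categories. Once the definition is fixed this way, Morita invariance is formal and no additional geometric input beyond Proposition~\ref{corr:GroupMorita} is required.
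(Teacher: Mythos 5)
Your proposal coincides with the paper's own proof, which deduces the corollary in one line from \Cref{corr:GroupMorita} together with the Morita invariance of the Drinfeld center supplied by \cite[Theorem 5.3.1.30]{LurieHA}, and the specialization to $\nQcoh^\star(G)$ via $H=\{1\}$ is exactly the paper's subsequent example. One minor slip: $\End_{\Mod_\cC}(\cC)$ with $\cC$ the free rank-one \emph{one-sided} module recovers $\cC$ itself rather than its center; the correct Morita-invariant description is the bimodule one, $\FZ(\cC)\simeq \End_{\Bimod_{\cC|\cC}}(\cC)\simeq \End_{\cC\otimes\cC^{\op}}(\cC)$, which you do state parenthetically and which is the form used in \eqref{eq:codim2}.
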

\begin{proof}
    The results follow from \cite[Theorem 5.3.1.30]{LurieHA} and \Cref{corr:GroupMorita}. 
\end{proof}

We defer the definition of the Drinfeld center to \S\ref{subsection:electricsym}. One can recall, by \Cref{eq:codim2}, that the codimension-2 operators in this bulk SymTFT are labeled by objects in $\FZ(\nRep(G))$.
The previous two results can be interpreted physically as the fact  that, for all closed subgroups $H\subset G$, the categories $\nQcoh^\star(H\backslash G/H)$ label boundary conditions for the same $(n+2)$-dimensional SymTFT.  The following proposition gives a way to calculate this Drinfeld center explicitly. 

\begin{notation}
    For an algebraic group $G$, the free loop space $\mathcal{L}\rB G=\rB G\times_{\rB G\times \rB G}\rB G$ is identified with the (stacky) \emph{conjugation quotient} $G/G$.  In the following, the symbol $G/G$ will refer to the stack quotient in this sense.
\end{notation}

\begin{proposition}\label{prop:center}
     Let $G$ be an affine algebraic group over characteristic $0$ field, and $H\subset G$ a closed subgroup. There is an equivalence of braided categories
   \begin{equation}
       \FZ(\nQcoh^\star(H\backslash G/H))\simeq \nQcoh^\star(G/G)\simeq \FZ(\nRep(G)). 
   \end{equation}
\end{proposition}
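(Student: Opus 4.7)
The first equivalence $\FZ(\nQcoh^\star(H\backslash G/H)) \simeq \FZ(\nRep(G))$ is already immediate from the preceding corollary: the Morita equivalence of \Cref{corr:GroupMorita} implies an equivalence of Drinfeld centers by \cite[Theorem 5.3.1.30]{LurieHA}, so no new work is required here.

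For the main content, namely the identification $\FZ(\nRep(G)) \simeq \nQcoh^\star(G/G)$, my plan is to exploit the identification $\nRep(G) \simeq \nQcoh^\otimes(\rB G)$ from \Cref{example:RepQcohBG} and compute the Drinfeld center as the endomorphisms of $\nQcoh^\otimes(\rB G)$ viewed as a bimodule over itself. Concretely, since $\nQcoh^\otimes(\rB G)$ is \emph{symmetric} monoidal, it is canonically isomorphic to its opposite algebra, and the Drinfeld center is
\begin{equation}
\FZ(\nQcoh^\otimes(\rB G)) \simeq \End_{\nQcoh^\otimes(\rB G)\otimes \nQcoh^\otimes(\rB G)}(\nQcoh^\otimes(\rB G)),
\end{equation}
with $\rB G$ acting on itself through the diagonal map $\Delta \colon \rB G \to \rB G \times \rB G$.

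The next step is to apply \Cref{corr:IntegralFunctor} to the correspondence $\rB G \xrightarrow{\Delta} \rB G \times \rB G \xleftarrow{\Delta} \rB G$. For this I need $\rB G \times \rB G$ to be $n$-affine; this follows from \Cref{cor:BGnAff} together with the fact that the product of $n$-affine stacks is $n$-affine (the global sections functors compose monadically, and $\nQcoh^\otimes$ converts products to tensor products over $\nPr^\mathrm{L}_\st$). Invoking \Cref{corr:IntegralFunctor} then yields
\begin{equation}
\End_{\nQcoh^\otimes(\rB G \times \rB G)}(\nQcoh^\otimes(\rB G)) \simeq \nQcoh^\star(\rB G \times_{\rB G \times \rB G} \rB G) = \nQcoh^\star(G/G),
\end{equation}
which is the desired equivalence of underlying monoidal categories.

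Finally, I would check that the braiding is preserved. The Drinfeld center carries its canonical $E_2$-structure, and $\nQcoh^\star(G/G)$ carries the $E_2$-structure arising from the loop space $\mathcal{L}\rB G$ via the pair-of-pants cobordism (equivalently, from the $E_2$-algebra structure induced by functoriality of $\nQcoh$ applied to the $S^1$-action on $\mathcal{L}\rB G$). The integral transform equivalence constructed above intertwines these because both are implemented by the same correspondence datum---the pinch map on $S^1$---under the TQFT $\cZ^G_{n+2}$ of \Cref{thm:generalTQFT}. The main obstacle is this last compatibility of braided structures; once the $n$-affineness of $\rB G \times \rB G$ is verified, the underlying equivalence of categories is a direct application of the integral transform formula, but tracing through the $E_2$-coherence requires invoking the functoriality of $\npQcoh$ on higher correspondences as in \cite[Theorem 14.2.9]{stefanich:thesis} rather than just the $1$-categorical input.
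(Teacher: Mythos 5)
Your argument is correct, but it takes a genuinely different route from the paper for the equivalence $\FZ(\nRep(G))\simeq \nQcoh^\star(G/G)$. The paper's proof is a one-line appeal to its general \Cref{Thm:BFNCenter} (proved via a Bar resolution of $\nQcoh^\star(X\times_Y X)$ and descent) applied to $f\colon \rB H\to \rB G$, with \Cref{lemma:HGDescent} supplying the descent hypothesis; this yields $\FZ(\nQcoh^\star(H\backslash G/H))\simeq \nQcoh^\star(\mathcal{L}\rB G)$ directly for \emph{every} closed $H$, and the case $H=G$ gives the statement about $\FZ(\nRep(G))$ without ever invoking Morita invariance. You instead handle the $H\backslash G/H$ side by Morita invariance of the center (exactly as in the corollary preceding the proposition) and compute $\FZ(\nRep(G))$ by hand as $\End_{\nQcoh^\otimes(\rB G\times\rB G)}(\nQcoh^\otimes(\rB G))$ via the diagonal correspondence and \Cref{corr:IntegralFunctor}. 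This shortcut is available precisely because $\nRep(G)$ is symmetric monoidal, so $A\otimes A^{\mathrm{rev}}\simeq \nQcoh^\otimes(\rB G\times \rB G)$ is again sheaves on an $n$-affine stack and a single application of the integral transform formula suffices; for the convolution algebra $\nQcoh^\star(H\backslash G/H)$ this fails, which is why the paper's Bar-resolution argument (or your detour through Morita invariance) is genuinely needed there. Two small points: your justification that $\rB G\times \rB G$ is $n$-affine can be made cleaner by noting $\rB G\times \rB G\simeq \rB(G\times G)$ with $G\times G$ affine algebraic, so \Cref{cor:BGnAff} applies verbatim; and your treatment of the braiding is a sketch, but it is at the same level of detail as the paper, which defers the $E_2$-compatibility to \cite[Theorem 5.3]{ben2010integral}.
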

   
\begin{proof}
The result follows from \Cref{Thm:BFNCenter} and \Cref{lemma:HGDescent}. 
\end{proof}

The adjoint conjugation quotient stack $G/G$ has a presentation 
\begin{equation}
   \mathcal{L}\rB G\simeq G/G\simeq \bigsqcup_{[g]\in \pi_0(G/G)}\rB C_G(g). 
\end{equation}
Where $\pi_0(G/G)$ is the collection of \emph{conjugacy classes} of $G$ and $C_G(g)$ is the \emph{centralizer} of a representative of the chosen conjugacy class $[g]\in G/G$.  Using that $\nQcoh$ takes colimits to limits, the above formula gives a decomposition of $\FZ(\nRep(G))$ into a product of representations of centralizers over conjugacy classes. 
\begin{equation}\label{eqn:DrinfeldDecomp}
   \FZ(\nRep(G))\simeq \nQcoh^\star\left(\bigsqcup_{[g]\in \pi_0(G/G)}\rB C_G[g]\right)\simeq \prod_{[g]\in \pi_0(G/G)}\nRep(C_G[g])\,.
\end{equation}


With these details in place, we can now finish the proof of \Cref{mainthm:Wilson}.

\begin{proposition}\label{lemma:thmBpart1}
Let $G$ be an affine algebraic group over characteristic $0$ field, and $H\subset G$ a closed subgroup.  There is a functor $\rB^{n}\Rep(H)\to \Mod_{\nRep(G)}$.
\end{proposition}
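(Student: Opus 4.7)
The plan is to obtain the desired functor by one additional delooping of the functor already produced by \Cref{mainthm:Wilson}, using $\nRep(H)$ as a basepoint for $\Mod_{\nRep(G)}$. Concretely, the final clause of \Cref{mainthm:Wilson} supplies a pointed functor of $n$-categories
\begin{equation}
F : \rB^{n-1}\Rep(H) \longrightarrow \End^{\mathrm{L}}_{\nRep(G)}(\nRep(H)),
\end{equation}
sending the unique object on the left to the monoidal unit $\mathrm{id}_{\nRep(H)}$ on the right. Physically, this realizes $\Rep(H)$ as the category of Wilson line operators on the boundary determined by $\nRep(H)$.

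Next, I would note that choosing $\nRep(H)$ as a basepoint of the $(n+1)$-category $\Mod_{\nRep(G)}$ yields a tautological fully faithful embedding of the one-object sub-$(n+1)$-category
\begin{equation}
\iota : \rB \End^{\mathrm{L}}_{\nRep(G)}(\nRep(H)) \hookrightarrow \Mod_{\nRep(G)}.
\end{equation}
Since $\Rep(H)$ is symmetric monoidal, hence $E_\infty$, its iterated deloopings all exist; in particular, $F$ admits a delooping $\rB F : \rB^{n}\Rep(H) \to \rB\End^{\mathrm{L}}_{\nRep(G)}(\nRep(H))$. The composition $\iota \circ \rB F$ then defines the required functor $\rB^{n}\Rep(H) \to \Mod_{\nRep(G)}$.

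The main point to verify is that $F$ is genuinely pointed monoidal, so that the delooping $\rB F$ is well-defined. This amounts to checking that, under the identification $\End^{\mathrm{L}}_{\nRep(G)}(\nRep(H)) \simeq \nQcoh^\star(H\backslash G/H)$ from \Cref{corr:GroupMorita}, the functor $F$ arises from pushforward along the inclusion of the identity coset $\rB H \hookrightarrow H\backslash G/H = \bigsqcup_{[g]} \rB(H \cap gHg^{-1})$, a symmetric monoidal functor sending the unit to the unit. I expect this compatibility to be the only substantive verification; once it is in hand, the remainder of the argument is formal delooping and the basepoint inclusion $\iota$.
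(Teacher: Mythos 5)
Your proposal is essentially the paper's own argument: the functor is built from the inclusion of the identity coset $\rB H\hookrightarrow H\backslash G/H$, giving $\nRep(H)\to\nQcoh^\star(H\backslash G/H)\simeq\End^{\mathrm{L}}_{\nRep(G)}(\nRep(H))$ via \Cref{corr:GroupMorita}, followed by delooping into the one-object subcategory of $\Mod_{\nRep(G)}$ based at $\nRep(H)$. Two caveats. First, your framing is circular as written: the final clause of \Cref{mainthm:Wilson}, which you cite as \emph{supplying} $F$, is in the paper a \emph{consequence} of this very proposition, so you cannot take $F$ as given --- the "verification" you defer to the end (that $F$ is the identity-coset pushforward) is in fact the entire construction, and should be promoted to the first step. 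Second, to feed $\rB^{n-1}\Rep(H)$ into the pushforward $\nRep(H)\to\nQcoh^\star(H\backslash G/H)$ you need the canonical map $\rB^{n-1}\Rep(H)\to\Mod^{n-1}_{\Rep(H)}$ together with the identification $\Mod^{n-1}_{\Rep(H)}\simeq\nRep(H)$, which uses $1$-affineness of $\rB H$ (\Cref{cor:BG1affine}); this step is implicit but unstated in your write-up. With those two repairs the argument coincides with the paper's.
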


\begin{proof}
The $\nRep(G)$-module $\nRep(H)$ has symmetries given by $$\End^\mathrm{L}_{\nRep(G)}(\nRep(H)) \simeq \nQcoh^\star(H\backslash G/H)\,,$$ thanks to \Cref{corr:GroupMorita}. Using \Cref{eq:doublecoset} and inspecting the identity conjugacy class in $H\backslash G/ H$, leads to an inclusion $\nRep(H)\rightarrow \nQcoh^\star(H\backslash G/H)$. Using the Morita equivalence in \Cref{eq:moritagroups} and delooping, we obtain a functor $\rB \nRep(H)\rightarrow \Mod_{\nRep(G)}$. Using affineness gives the identification $\nRep(H)\simeq  \Mod^{n-1}_{\Rep(H)}$, from which we obtain the following sequence of functors
\begin{equation}
    \rB^n \Rep(H)\rightarrow \rB \Mod^{n-1}_{\Rep(H)} \simeq \rB \nRep(H) \rightarrow \Mod_{\nRep(G)}.
\end{equation}
\end{proof}

In particular since $\End^\mathrm{L}_{\nRep(G)}(\nRep(H))$ is the category of all symmetries with respect to the $\nRep(G)$-module $\nRep(H)$, \Cref{lemma:thmBpart1} exhibits a $(n-1)$-form  $\Rep(H)$ symmetry on $\nRep(H)$ by the external perspective in \Cref{ex:zeroform}. One can view this as giving an inclusion of codimension-$n$ Wilson line operators of the $(n+1)$-dimensional boundary associated to $\nRep(H)$. If $H = G$, which is to say that there is no symmetry breaking, we get the expected $(n-1)$-form  $\Rep(G)$ symmetry.

\subsubsection{The 1-form center symmetry}

We now present how to observe the 1-form center symmetry for $G$-gauge theory on the boundary. From the perspective of the bulk TQFT, the objects that implement this symmetry link with the $(n-1)$-form $\Rep(H)$ symmetry found in the previous section.

\begin{lemma}\label{lemma:Hfiber}
Let $G$ be an affine algebraic group over characteristic $0$ field and $H$ a closed subgroup.  There is a pullback square  
   \begin{equation}
     \begin{tikzcd}[row sep= small, column sep= 3cm]
H \ar[r]\ar[d]\ar[dr, phantom, "\lrcorner", pos=0.18] & \mathrm{pt} \ar[d] \\
\rB H\times_{\rB H\times \rB G}\rB H \ar[r,"\delta=\id_{\rB H}\times _{\pi_{\rB G}} \id_{\rB H}"'] & \rB H\times_{\rB G}\rB H. 
\end{tikzcd}
   \end{equation} 
\end{lemma}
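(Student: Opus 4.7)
The plan is to reorganize $P := \rB H \times_{\rB H \times \rB G} \rB H$ so that the map $\delta$ is exhibited as the base change of the diagonal $\Delta_{\rB H} \colon \rB H \to \rB H \times \rB H$. Once that reorganization is in hand, the fiber of $\delta$ over the distinguished point $\mathrm{pt} \to \rB H \times_{\rB G} \rB H$ becomes the based loop space of $\rB H$, which is $H$.

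The first and main step is the identification
\begin{equation*}
P \;\simeq\; (\rB H \times_{\rB G} \rB H) \times_{\rB H \times \rB H} \rB H,
\end{equation*}
where the map $\rB H \to \rB H \times \rB H$ is the diagonal and $\rB H \times_{\rB G} \rB H \to \rB H \times \rB H$ is the projection forgetting the isomorphism of associated $G$-bundles. I would derive this from the general identity
\begin{equation*}
X \times_{Y \times Z} X \;\simeq\; (X \times_Y X) \times_{X \times X} (X \times_Z X),
\end{equation*}
valid for any map $(f,g) \colon X \to Y \times Z$ in an $\infty$-category with finite limits, applied to $X = \rB H$ and $(f,g) = (\id, \pi)$, together with the obvious equivalence $\rB H \times_{\rB H} \rB H \simeq \rB H$. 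Under this reorganization, $\delta$ becomes projection onto the first factor, hence the base change of $\Delta_{\rB H}$ along $\rB H \times_{\rB G} \rB H \to \rB H \times \rB H$.

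The second step is associativity of fiber products. The composition $\mathrm{pt} \to \rB H \times_{\rB G} \rB H \to \rB H \times \rB H$ picks out the pair of trivial $H$-bundles, so it factors as $\mathrm{pt} \to \rB H \xrightarrow{\Delta_{\rB H}} \rB H \times \rB H$. It follows that
\begin{equation*}
\mathrm{pt} \times_{\rB H \times_{\rB G} \rB H} P \;\simeq\; \mathrm{pt} \times_{\rB H \times \rB H} \rB H,
\end{equation*}
with both legs factoring through the diagonal. The right-hand side computes the automorphism group scheme of the trivial $H$-bundle over a test scheme, namely $H$ itself, completing the identification.

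The principal obstacle is the first step: verifying the general pullback identity and matching up the various maps requires care with the $\infty$-categorical structure of the limit, tracking the components $(\id, \pi)$ against the diagonal and projection in the rearranged pullback. A transparent check on $T$-points (including automorphisms), or a direct appeal to the universal property, makes this rigorous, after which the remaining manipulations are purely formal.
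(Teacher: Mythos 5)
Your proof is correct, but it is organized differently from the one in the paper. The paper's argument writes $\pt = \pt\times_\pt\pt$ and commutes limits \emph{levelwise} over the defining cospans: the fiber of $\delta$ over the basepoint is computed term by term as $(\rB H\times_{\rB H}\pt)\times_{(\rB H\times \rB G)\times_{\rB G}\pt}(\rB H\times_{\rB H}\pt)\simeq \pt\times_{\rB H}\pt\simeq H$, i.e.\ it lands directly on the based loop space $\Omega\,\rB H$. You instead use the interchange identity $X\times_{Y\times Z}X\simeq (X\times_YX)\times_{X\times X}(X\times_ZX)$ to exhibit $\delta$ as the base change of the diagonal $\Delta_{\rB H}$ along $\rB H\times_{\rB G}\rB H\to \rB H\times\rB H$, and then conclude by the pasting law that the fiber is $\pt\times_{\rB H\times\rB H}\rB H\simeq (H\times H)/\Delta H\simeq H$. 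Both routes are purely formal manipulations of finite limits; the paper's is shorter, while yours isolates a structural fact (that $\delta$ is a pulled-back diagonal) that makes the geometric meaning of the fiber --- a bundle equipped with two trivializations, hence a canonically trivialized torsor under $\Aut(\mathrm{triv})\simeq H$ --- more transparent and potentially reusable. One small imprecision: $\pt\times_{\rB H\times\rB H}\rB H$ is not literally the automorphism group of the trivial bundle (that would be $\pt\times_{\rB H}\pt$); it is the stack of bundles with two trivializations, which is canonically identified with $H$ by composing one trivialization with the inverse of the other. The conclusion is unaffected.
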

\begin{proof}
Write $\pt=\pt\times_\pt \pt$.  Since limits commute, the pullback of the above diagram is given by 
\begin{equation}
    (\rB H\times_{\rB H}\pt)\times_{(\rB H\times \rB G)\times_{\rB G}\pt}(\rB H\times_{\rB H}\pt)\simeq \pt\times_{\rB H}\times \pt\simeq H.
\end{equation}
\end{proof}

\begin{theorem}\label{thm:fiber}
    Let $G$ be an affine algebraic group over characteristic $0$ field and $H\subset G$ a closed subgroup.  There is a an injective map 
    \begin{equation}
        \rZ(H)\to \nVect \otimes_{\nQcoh^\star(H\backslash G/H)}\FZ(\nQcoh^\star(H\backslash G/H)), 
    \end{equation}
    of the center $\rZ(H)$ into the fiber of $\FZ(\nQcoh^\star(H\backslash G/H))\to \nQcoh^\star(H\backslash G/H)$. \end{theorem}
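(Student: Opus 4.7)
The plan is to compute the fiber as $\nQcoh^\star(H)$ using the integral transform machinery of \S\ref{sec:integraltransforms} together with \Cref{lemma:Hfiber}, and then exhibit the desired embedding via the closed inclusion $\rZ(H)\hookrightarrow H$.

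First, I would use \Cref{prop:center} to rewrite $\FZ(\nQcoh^\star(H\backslash G/H))$ as $\nQcoh^\star(G/G)$, and \Cref{corr:GroupMorita} to present $\nQcoh^\star(H\backslash G/H)$ as $\End_{\nRep(G)}(\nRep(H))$. Under this Morita picture, the forgetful functor $\varpi$ factors as $\nQcoh^\star(G/G)\to \nRep(G)\to \nQcoh^\star(H\backslash G/H)$, where the first arrow is the standard forgetful for $\FZ(\nRep(G))\to \nRep(G)$ and the second is the $\nRep(G)$-action on $\nRep(H)$ packaged as a kernel via the correspondence $H\backslash G/H\xleftarrow{\Delta}\rB H \xrightarrow{\pi}\rB G$. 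Bundling these, $\varpi$ is pull-push through the combined correspondence
\begin{equation*}
    H\backslash G/H \xleftarrow{\ } \rB H\times_{\rB G}(G/G) \xrightarrow{\ } G/G,
\end{equation*}
obtained by base-changing $\rB G\to G/G$ along $\rB H\to \rB G$. A direct pullback computation identifies $\rB H\times_{\rB G}(G/G)\simeq \rB H\times_{\rB H\times \rB G}\rB H$, matching the stack appearing in \Cref{lemma:Hfiber}.

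Next, I would apply \Cref{Prop:BFN} with $\rB G$ as the $n$-affine base (\Cref{cor:BGnAff}) to translate tensor products of $\nQcoh^\star$-categories into $\nQcoh^\star$ of fiber products of stacks. Carrying this through gives
\begin{equation*}
    \nVect \otimes_{\nQcoh^\star(H\backslash G/H)} \nQcoh^\star(G/G) \simeq \nQcoh^\star\Big(\pt\times_{H\backslash G/H}\bigl(\rB H\times_{\rB G}(G/G)\bigr)\Big).
\end{equation*}
By \Cref{lemma:Hfiber}, the iterated pullback on the right is simply $H$, so the fiber category is $\nQcoh^\star(H)$ equipped with its convolution monoidal structure.

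Finally, the closed embedding $i\colon \rZ(H)\hookrightarrow H$ of the center as a subgroup induces a fully faithful monoidal pushforward $i_*\colon \nQcoh^\star(\rZ(H))\hookrightarrow \nQcoh^\star(H)$, which realizes $\rZ(H)$ as a commutative sub-object of the fiber and yields the desired injective map. The main obstacle will be carefully tracking the $\nQcoh^\star(H\backslash G/H)$-module structure on $\FZ(\nQcoh^\star(H\backslash G/H))$ through the correspondence presentation of $\varpi$; passing through the $n$-affine base $\rB G$, rather than trying to verify $n$-affineness of $H\backslash G/H$ directly, is the essential technical move that makes \Cref{Prop:BFN} applicable.
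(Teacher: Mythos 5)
There is a genuine gap: you have computed the fiber of only \emph{half} of the bulk-to-boundary map, and then skipped the step that actually ties your objects to the Drinfeld center. By \Cref{Thm:BFNCenter}, the functor $\varpi$ is the composite $\delta_*\pi^*$ through the correspondence $\mathcal{L}\rB G \xleftarrow{\pi} \rB H\times_{\rB H\times\rB G}\rB H \xrightarrow{\delta} H\backslash G/H$ (you identify this correspondence correctly). Its fiber is therefore the \emph{iterated} pullback
\begin{equation*}
\fib(\delta_*\pi^*)\;\simeq\;\nQcoh^\star(\mathcal{L}\rB G)\times_{\nQcoh^\star(\rB H\times_{\rB H\times\rB G}\rB H)}\fib(\delta_*),
\end{equation*}
whereas the BFN-plus-\Cref{lemma:Hfiber} computation you perform identifies only the middle term $\fib(\delta_*)\simeq\nQcoh^\star(H)$ — note that the stack appearing on the right of your displayed equivalence is $\rB H\times_{\rB G}(G/G)$, not $G/G$, which is the tell that you have silently replaced $\nQcoh^\star(G/G)$ by its image under $\pi^*$. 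The full fiber is the subcategory of objects of $\nQcoh^\star(H)$ that additionally descend to $\mathcal{L}\rB G=G/G$, and this extra condition is exactly the content of the theorem: it is what certifies that the $\rZ(H)$-labelled operators are genuine bulk lines ending on the boundary. To close the gap you must produce, as the paper does, a \emph{compatible pair} of maps — $z\mapsto$ (skyscraper on the conjugacy class $[z]$ in $\nQcoh^\star(\mathcal{L}\rB G)$) and $z\mapsto$ (skyscraper at $z$ in $\nQcoh^\star(H)\simeq\fib(\delta_*)$) — and check they agree after mapping into $\nQcoh^\star(\rB H\times_{\rB H\times\rB G}\rB H)$, so that the universal property of the pullback yields the lift to $\fib(\delta_*\pi^*)$. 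Your closed embedding $\rZ(H)\hookrightarrow H$ supplies only the second leg.

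A secondary problem is your invocation of \Cref{Prop:BFN}: it identifies relative tensor products over the \emph{symmetric} monoidal category $\nQcoh^\otimes(Y)$ of an $n$-affine stack $Y$, whereas your tensor is taken over the \emph{convolution} category $\nQcoh^\star(H\backslash G/H)$. Routing through the $n$-affine base $\rB G$ does not repair this, since that would only control tensors over $\nRep(G)$; what is actually needed (and what the paper asserts) is $n$-affineness of $H\backslash G/H$ and $\rB H\times_{\rB H\times\rB G}\rB H$ themselves, so the ``essential technical move'' you highlight does not do the work you attribute to it.
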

    
\begin{proof}
We will make use of the characterization of the map $\FZ(\nQcoh^\star(H\backslash G/H))\to \nQcoh^\star(H\backslash G/H)$ given in \Cref{Thm:BFNCenter}, where the morphisms $\delta,\pi$ are defined in \Cref{eqn:deltapi}.  We will proceed by first obtaining maps $\rZ(H)\to \fib(\delta_*)$ and $\rZ(H)\to \nQcoh(\mathcal{L}\rB G)$, and then showing that this lifts to the desired fiber given by $\fib(\delta_* \pi^*)$ in the following double pullback square:  
\begin{equation}
    \begin{tikzcd}
\fib(\delta_*\pi^*)\ar[d]\ar[r]\ar[dr,phantom,very near start, "\lrcorner"]&\fib(\delta_*)\ar[d]\ar[r]\ar[dr,phantom, very near start, "\lrcorner"]&\nVect\ar[d]\\
       \nQcoh^\star(\mathcal{L}\rB G) \ar[r,swap,"\pi^*"]&\nQcoh^\star(\rB H\times _{\rB H\times \rB G}\rB H)\ar[r,swap,"\delta_*"]&\nQcoh^\star(H\backslash G/H)\,.
    \end{tikzcd}
\end{equation}
There is a map $\rZ(H)\to \nQcoh^\star(\mathcal{L}\rB G)$ taking $z\in \rZ(H)$ to the skyscraper sheaf supported on the conjugacy class specified by $z$. 

The stacks $\rB H\times_{\rB H\times \rB G}\rB H$ and $H\backslash G/H$ are $n$-affine by \Cref{cor:BGnAff}.  By Proposition \ref{Prop:BFN} and Lemma \ref{lemma:Hfiber}, we can establish the following  equivalence. 
\begin{align}
\fib(\delta_*)&=\nVect\otimes_{\nQcoh^\star(H\backslash G/H)}\nQcoh^\star(\rB H\times_{\rB H\times \rB G}\rB H) \\\notag 
   &\simeq \nQcoh^\star(\pt \times _{\rB H\times_{\rB G}\rB H}(\rB H\times_{\rB H\times \rB G}\rB H))\\ \notag 
   &\simeq \nQcoh^\star(H). 
\end{align}
  Then there is a map $\rZ(H)\to \nQcoh^\star(H)\simeq \fib(\delta_*)$, taking $z\in \rZ(H)$ to the skyscraper sheaf supported at $z$.  To give a map $\rZ(H)\to \fib(\delta_*\pi^*)$ using the universal property, it is sufficient to show that the image of the two maps $f$ and $g$ coincide on  $\nQcoh^\star(\rB H \times_{\rB H\times \rB G}\rB H)$ in the following diagram:
\begin{equation}
    \begin{tikzcd}[row sep=small]
	{\mathrm{Z}(H)} \\[-10pt]
	& \fib(\delta_*\pi^*) & {\mathbf{nQCoh}^\star(H)} \\
    & &\\
	& {\mathbf{nQCoh}^\star(\mathcal{L}\mathrm{B}G)} & {\nQcoh^\star(\rB H \times_{\rB H\times \rB G}\rB H)\,.}
	\arrow[dotted, from=1-1, to=2-2]
	\arrow["f", curve={height=-24pt}, from=1-1, to=2-3]
	\arrow["g"', curve={height=10pt}, from=1-1, to=4-2]
	\arrow[from=2-2, to=2-3]
	\arrow[from=4-2, to=4-3]
	\arrow[from=2-2, to=4-2]
	\arrow[from=2-3, to=4-3]
\end{tikzcd}
\end{equation} 
    Since the map $\rB H\times _{\rB H\times \rB G}\rB H\to \rB H\times_{\rB G\times \rB G}\rB H\simeq \mathcal{L}\rB G$ is injective, we see that $\pi^*$ takes skyscraper sheaves on the conjugacy class $[z]\in \mathcal{L}\rB G$ to skyscraper sheaves on $[z]\in \rB H\times_{\rB H\times \rB G}\rB H$.

  The map $H\to \rB H\times_{\rB H\times \rB G}\rB H$ takes $z\in \rZ(H)$ to the conjugacy class $[z]$, and the map $\nQcoh^\star(H)\to \nQcoh^\star(\rB H\times_{\rB H\times \rB G}\rB H)$ is given by pushforward along this inclusion.  Then it takes skyscraper sheaves on $z\in \rZ(H)$ to skyscraper sheaves on $[z]\in \rB H\times_{\rB H\times \rB G}\rB H$, so the images of the two maps starting at $\rZ(H)$ and ending at  $\nQcoh^\star(\rB H\times_{\rB H\times \rB G}\rB H)$ coincide.  By the universal property of the fiber, this lifts to the desired map $\rZ(H)\to \fib(\delta_*\pi^*)$, which is identified with the fiber of $\FZ(\nQcoh^\star(H\backslash G /H))\to \nQcoh^\star(H\backslash G/H)$ by \Cref{Thm:BFNCenter}. 
\end{proof}

The objects in $\FZ(\nQcoh^\star(H\backslash G/ H))$ are codimension-2 in the $(n+2)$-dimensional TQFT defined in \Cref{thm:generalTQFT}. Being in the fiber of the map $\FZ(\nQcoh^\star(H\backslash G/ H))\rightarrow \nQcoh^\star(H\backslash G/ H)$  means that the objects end on the boundary labeled by the $\nRep(G)$-module $\nRep(H)$. Therefore they correspond to codimension-2 objects on the boundary, with labels in $\rZ(H)$. Finally, if $H=G$ i.e. when no symmetry is broken, then the boundary contains $\rZ(G)$ 1-form symmetry operators. This finishes the proof of \Cref{prop:linkingelectric}.

\begin{rem}
    Flat $G$-gauge theory should also have $1$-form symmetry labeled by other conjugacy classes which meet $H$.  These do not appear in the fiber since they are not trivialized on the boundary.  Rather, they become \emph{direct integral} objects, which include the vacuum line as a `summand'.  These direct integral objects have an ending operator on the boundary given by the projection to the vacuum line. 
\end{rem}

Having recovered the center symmetry and its breaking, we now invoke \Cref{corr:GroupMorita}. This result implies that the symmetry data of pure gauge theory and of gauge theories with additional matter content admit a unified description within a single bulk TQFT.  This provides another piece of evidence in support of \Cref{claim:mainclaim}.

\subsection{Magnetic Symmetry}\label{subsection:twistedsectors}

A module category for $\Rep(G)$ should be interpreted as a topological boundary conditions for the SymTFT as in \Cref{fig:2dSymTFT}.  Before discussing magnetic symmetries in general dimensions, we first turn to the classification of invertible modules of $\Rep(G)$.  This also provides the technical details of \Cref{lemma:invertiblemodule}. We will then lift the $\Rep(G)$-module categories to morphisms in higher categories, and interpret them as the magnetic symmetries exhibited by flat $G$-gauge theories\footnote{We call these ``magnetic symmetries'' because of the close relationship with magnetic symmetries of Yang--Mills theories discussed in \Cref{rem:MagneticYM}.}.   The following construction yields invertible modules for $\Rep(G)$.  Consider the functor 
\begin{align}
  \mathcal{F}:\CAlg\to \Spaces,
\end{align}
which associates to each algebra $A$ the space $\Mod_{\Mod_A}^\times$ of invertible $\Mod_A$-module categories.  $\mathcal{F}$ is a sheaf for the étale topology, and so defines an étale stack.  As a stack, this is the \emph{classifying stack for invertible module categories}.  

A point in the stack $\mathcal{F}$ is given by a point in the space $\mathcal{F}(\Spec(\mathbb{S}))$, where $\mathbb{S}$ is the sphere spectrum. $\mathcal{F}$ has a natural pointing given by $\Mod_{\mathbb{S}}\in \Mod_{\Mod_\mathbb{S}}^\times$.  It follows from \cite[Corollary 5.4.4]{StefanichLinear} that the pointed stack $\cF$ is equivalent to $\rB^2\GL_1$ as a pointed stack. 

\begin{construction}\label{contruction:invertible}
   Let $H$ be an affine algebraic group such that $\rB H$ is $1$-affine.  Then $\mathcal{F}(\rB H)\simeq \Mod_{\Rep(H)}^\times$, and we note the following equivalence: 
\begin{equation}
\rH^2(\rB H;\mathbb{C}^\times)\simeq \rH^2(\rB H,\GL_1)\simeq \Hom^{E_0}(\rB H,\rB^2\GL_1)\simeq \Hom^{E_0}(\rB H,\mathcal{F})\simeq \Mod_{\Rep(H)}^\times\,.
\end{equation}
For every $\alpha \in \rH^2(\rB H;\mathbb{C}^\times)$, denote its image in $\Mod_{\Rep(H)}^\times$ by $\Rep^\alpha(H)$. The analog of these categories in the fusion case were studied in \cite{ostrik2003module}. Similar results for the category of coherent sheaves over an affine group scheme have been obtained in \cite{GelakiAffine}.  By \cite[Corollary 5.4.4]{StefanichLinear} it follows from  that one can construct invertible $\Rep(H)$-module category in this way, and for $\alpha,\beta\in \rH^2(\rB H;\mathbb{C}^\times)$, we have fusion rules 
\begin{equation}
\Rep^\alpha( H)\otimes_{\Rep(H)}\Rep^{\beta}(H)\simeq \Rep^{\alpha+\beta}( H). 
\end{equation}
\end{construction}

\begin{lemma}\label{prop:moritaforn}
Let $G$ be an affine algebraic group over characteristic $0$ field, $H\subset G$ a subgroup, and $\alpha\in \rH^2(\rB H;\mathbb{C}^\times)$.  Then $\Rep^\alpha(H)$ is dualizable in $\Mod_{\Rep(G)}$.
\end{lemma}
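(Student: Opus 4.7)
The plan is to establish dualizability by passing through the intermediate algebra $\Rep(H)$. I would first observe that $\Rep^\alpha(H)$ is dualizable as a $\Rep(H)$-module, then note that $\Rep(H)$ itself is dualizable as a $\Rep(G)$-module, and finally combine these via composition of adjunctible 1-morphisms in the Morita 2-category $\Mor(\Pr^\mathrm{L}_\st)$.

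For the first step, \Cref{contruction:invertible} realizes $\Rep^\alpha(H)$ as an element of $\Mod_{\Rep(H)}^{\times}$, i.e.\ an invertible $\Rep(H)$-module with explicit inverse $\Rep^{-\alpha}(H)$, via the fusion rule $\Rep^\alpha(H)\otimes_{\Rep(H)}\Rep^{-\alpha}(H)\simeq\Rep(H)$. Every invertible object in a symmetric monoidal $\infty$-category is dualizable, with dual given by its inverse, so $\Rep^\alpha(H)$ is dualizable over $\Rep(H)$ with dual $\Rep^{-\alpha}(H)$. For the second step, \Cref{cor:BGnAff} tells us that $\rB G$ is $1$-affine, and the structural map $\rB H\to\rB G$ together with \Cref{cor:nQCohDual} yields that $\Qcoh(\rB H)=\Rep(H)$ is canonically self-dual as a $\Qcoh(\rB G)=\Rep(G)$-module.

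For the final step, I would lift both dualizabilities to $\Mor(\Pr^\mathrm{L}_\st)$. The algebra map $\Rep(G)\to\Rep(H)$ presents $\Rep(H)$ as a $(\Rep(H),\Rep(G))$-bimodule, i.e.\ a 1-morphism $\Rep(H)\to\Rep(G)$ in $\Mor(\Pr^\mathrm{L}_\st)$; this 1-morphism is adjunctible because $\Rep(H)$ is dualizable over $\Rep(G)$. Likewise, $\Rep^\alpha(H)$ as a $(\nVect,\Rep(H))$-bimodule is an adjunctible 1-morphism $\nVect\to\Rep(H)$ by step one. Since composition of adjunctible 1-morphisms in a 2-category is adjunctible, the composite $\Rep^\alpha(H)\otimes_{\Rep(H)}\Rep(H)\simeq\Rep^\alpha(H)$, which is precisely the restriction of scalars of $\Rep^\alpha(H)$ to a $\Rep(G)$-module, is adjunctible as a 1-morphism $\nVect\to\Rep(G)$. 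This is the dualizability of $\Rep^\alpha(H)$ in $\Mod_{\Rep(G)}$, with dual computable as $\Rep^{-\alpha}(H)\otimes_{\Rep(H)}\Rep(H)\simeq\Rep^{-\alpha}(H)$.

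I do not anticipate a serious obstacle, as the argument is essentially an assembly of previously established results. The only subtlety is keeping track of the sidedness conventions for bimodules in the Morita 2-category and invoking the correct composition-of-adjoints principle; both are standard in the references for Morita categories such as \cite{HaugsengMorita,JFS2017}.
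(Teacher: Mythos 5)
Your proposal is correct and follows essentially the same route as the paper's proof: establish that $\Rep(H)$ is self-dual over $\Rep(G)$ via $1$-affineness of $\rB G$ and the integral-transform duality, and then observe that the induction/restriction functor $\Rep(H)\otimes_{\Rep(H)}(-):\Mod_{\Rep(H)}\to\Mod_{\Rep(G)}$ carries the invertible (hence dualizable) module $\Rep^\alpha(H)$ to a dualizable object. The paper compresses your composition-of-adjunctibles step into the single phrase that this induction map preserves dualizability, so your write-up is just a more explicit version of the same argument.
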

\begin{proof}
    We note that by \cite[Theorem 2.2.2]{Gaitsgory_2015}, $\rB G$ is $1$-affine.  Then $\Rep(H)$ is canonically self-dual in $\Mod_{\Rep(G)}$ by \cite[Corollary 4.8]{ben2010integral}.  The induction map is implemented by 
   \begin{equation}
       \Rep(H)\otimes_{\Rep(H)}(-):\Mod_{\Rep(H)}\to \Mod_{\Rep(G)}, 
   \end{equation}
   and so this map takes the invertible $\Rep(G)$-modules $\Rep^\alpha(H)$ to dualizable objects in $\Mod_{\Rep(G)}$.
\end{proof}

We now present several examples and the corresponding fusion rules for the module categories, along with computations of their endomorphism categories.  We remark that these results parallel those of the fusion category case, and extend naturally to groups with richer structure.

\begin{example}\label{ex:dyonic}
Let $G$ be an affine algebraic group over characteristic $0$ field, and let $H=H'=G$ and $\alpha,\beta\in \rH^2(\rB G;\mathbb{C}^\times)$. One sees that
\begin{align}
\Fun^\rL_{\Rep(G)}(\Rep^\alpha(G),\Rep^\beta( G))&\simeq \Rep^{-\alpha}( G)\otimes_{\Rep(G)}\Rep^{\beta}(G)\\ \notag
&\simeq \Rep^{\beta-\alpha}( G).  
\end{align}
In particular,
\begin{equation}
    \Fun^\rL_{\Rep(G)}(\Rep^{\alpha}( G),\Rep^{\alpha}(G))\simeq \Rep(G)\,.
\end{equation}
\end{example}
Operators in $\Rep(G)$, considered as endomorphisms of $\Rep^\alpha(G)$, can be interpreted as Wilson lines, while the operators in $\Rep^{\beta-\alpha}(G)$, considered as morphisms from $\Rep^\alpha(G)$ to $\Rep^\beta(G)$, should be interpreted as \emph{dyonic} operators with magnetic charge $\alpha-\beta$.

\begin{proposition}
    There is a functor $\rB^{n-1} (\rH^2(\rB G; \mathbb{G}_m) )\rightarrow \Mod_{\nRep(G)}$, where $\rB^{n-1} (\rH^2(\rB G; \mathbb{G}_m))$ is the classifying space of a $(n-2)$-form  $\rH^2(\rB G; \mathbb{G}_m)$ symmetry.
\end{proposition}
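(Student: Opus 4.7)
The plan is to construct the functor by delooping the classifying map of \Cref{contruction:invertible} through the iterated endomorphism categories of $\nRep(G)$ inside $\Mod_{\nRep(G)}$. The input is the $\mathbb{E}_\infty$-map of grouplike spaces
\begin{equation*}
\rH^2(\rB G;\mathbb{G}_m) \longrightarrow \Mod^\times_{\Rep(G)} \subset \Mod_{\Rep(G)}, \qquad \alpha \mapsto \Rep^\alpha(G),
\end{equation*}
supplied by \Cref{contruction:invertible}, whose $\mathbb{E}_\infty$-structure on the left is the additive group structure on $\rH^2$ and on the right is the tensor product of invertible module categories.

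First I would compute iterated loop spaces of $\Mod_{\nRep(G)}$ at the basepoint $\nRep(G)$. Since $\rB G$ is $n$-affine by \Cref{cor:BGnAff}, $\nRep(G)\simeq \Mod^{n-1}_{\Rep(G)}$, and so
\begin{equation*}
\Omega_{\nRep(G)} \Mod_{\nRep(G)} \simeq \End_{\Mod_{\nRep(G)}}(\nRep(G)) \simeq \nRep(G).
\end{equation*}
Iteratively stripping off one layer of $\Mod$ at each step by computing endomorphisms of the monoidal unit yields
\begin{equation*}
\Omega^{k} \Mod_{\nRep(G)} \simeq \mathbf{(n-k+1)Rep}(G), \qquad 1 \leq k \leq n,
\end{equation*}
which for $k = n-1$ specializes to $\Omega^{n-1}\Mod_{\nRep(G)} \simeq \mathbf{2Rep}(G) = \Mod_{\Rep(G)}$, into whose Picard subspace the map of the previous paragraph lands.

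Next I would apply the loops--delooping adjunction. Since $\rH^2(\rB G;\mathbb{G}_m)$ is a discrete abelian group and hence $\mathbb{E}_\infty$-grouplike, and since the composite map into the Picard subspace of $\Omega^{n-1}\Mod_{\nRep(G)}$ is $\mathbb{E}_\infty$ (both tensor products are symmetric monoidal), iterated delooping produces a pointed functor
\begin{equation*}
\rB^{n-1}\rH^2(\rB G;\mathbb{G}_m) \longrightarrow \Mod_{\nRep(G)}.
\end{equation*}
Concretely, this functor sends the unique object to $\nRep(G)$ and sends the element $\alpha$ of the nontrivial $(n-1)$-stratum to the iterated identity on the invertible $\nRep(G)$-module $\Mod^{n-1}_{\Rep^\alpha(G)}$.

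The hard part will be verifying the iterated loop identification and the delooping at the level of $(\infty, n+1)$-categories rather than just on $\pi_0$. This should follow from the symmetric monoidal naturality of Stefanich's iterated module construction $\Mod^{n-1}(-)$ (cf.~\cite{StefanichLinear,stefanich:thesis}) together with standard higher delooping machinery; alternatively, one can proceed directly from the pointed $\mathbb{E}_\infty$-map $\rB G \to \cF \simeq \rB^2\GL_1$ of \Cref{contruction:invertible}, promoted along the $n$-affineness equivalence $\Mod^{n-1}_{\Rep(G)} \simeq \nRep(G)$ and delooped once the loop-space identification is in place.
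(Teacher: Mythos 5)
Your proof is correct and takes essentially the same route as the paper: the paper also reduces to the $n=1$ functor $\alpha\mapsto\Rep^\alpha(G)$ from \Cref{contruction:invertible}, applies $\Mod^{n-1}$ to source and target (using $n$-affineness to identify $\Mod^{n-1}(\Mod_{\Rep(G)})\simeq\Mod_{\nRep(G)}$), and precomposes with $\rB^{n-1}(\rH^2(\rB G;\mathbb{G}_m))\to\Mod^{n-1}(\rH^2(\rB G;\mathbb{G}_m))$ --- which is just the adjoint formulation of your identification $\Omega^{n-1}\Mod_{\nRep(G)}\simeq\Mod_{\Rep(G)}$ followed by delooping. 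Your explicit attention to the $\mathbb{E}_\infty$-structure required to deloop is a slightly more careful presentation of the same argument, not a different one.
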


\begin{proof} Consider the case when $n=1$ in which the category of symmetries is $\Mod_{\Rep(G)}$. To get to general $n$ dimensions we apply  $\Mod^{n-1}$ onto $\Mod_{\Rep(G)}$, and this gives $\Mod_{\nRep(G)}$. In particular, we have used the fact that since $\rB G$ is $1$-affine, by \Cref{cor:BG1affine}, we have $\nRep(G) = \Mod^{n-1}_{\Rep(G)}$. By applying $\Mod^{n-1}$ also to $\rH^2(\rB G; \mathbb{G}_m)$ in \Cref{eq:maglabels} we get a functor
\begin{equation}
    \mathbf{Mod}^{n-1}(\rH^2(\rB H;\mathbb{C}^\times)) \rightarrow \Mod_{\nRep(G)}
\end{equation}
from which we have a functor
\begin{equation}
  \rB^{n-1} (\rH^2(\rB H;\mathbb{C}^\times))\rightarrow \Mod^{n-1}(\rH^2(\rB H; \mathbb{C}^\times))\rightarrow \Mod_{\nRep(G)}\,.
\end{equation}
 \end{proof}
This result guarantees that the same construction/family of invertible $\Rep(G)$-module categories also exhibits a magnetic symmetry in higher dimensions. Furthermore, we see that the magnetic symmetry  is generated by a surface operator, as explained in  \S\ref{subsection:symgauge}.

We now use the contents of \S\ref{section:Equivariantization}, applied to the stack $\rB G$, to obtain the following higher analog of the fusion-categorical result \cite[Lemma 1.3.8]{decoppet2022morita}.  This will allow us to lift the magnetic symmetry operators to higher dimensions.  In the following statement, $\underline{\nVect}$ refers to the \emph{geometric category} of vector spaces, which is defined in \Cref{ex:geometricnVect}. 

\begin{corollary}\label{cor:BG1affine}
Let $G$ be an affine algebraic group over characteristic $0$ field.  There is an equivalence
\begin{equation}
 {\nRep}(G)\simeq {\Fun}(\rB G,\underline{\nVect})\simeq\Mod^{n-1}_{{\Rep}(G)}. 
\end{equation}
\end{corollary}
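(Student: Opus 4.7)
The first equivalence $\nRep(G)\simeq\Fun(\rB G,\underline{\nVect})$ is simply the definition of $\nRep(G)$ recorded in the glossary of \S\ref{subsection:glossary}, so there is nothing to verify there.

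For the second equivalence, my plan is to invoke \Cref{prop:equiv} (whose content is \Cref{lem:affinetomod} from the promised \S\ref{section:Equivariantization}) with the index $n$ replaced by $n-1$, which yields
\begin{equation*}
\nRep(G)\;=\;\Fun(\rB G,\underline{\nVect})\;\simeq\;\Mod_{\mathbf{(n-1)}\Rep(G)}.
\end{equation*}
The hypothesis of that result is $n$-affineness of $\rB G$; by \Cref{cor:BGnAff} the stack $\rB G$ is in fact $k$-affine for every $k\geq 1$, so the same identification is available at every lower level as well.

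Iterating once more gives $\mathbf{(n-1)}\Rep(G)\simeq\Mod_{\mathbf{(n-2)}\Rep(G)}$, and continuing through $k = n-1, n-2, \ldots, 2$ one telescopes the chain
\begin{equation*}
\nRep(G)\;\simeq\;\Mod_{\mathbf{(n-1)}\Rep(G)}\;\simeq\;\Mod_{\Mod_{\mathbf{(n-2)}\Rep(G)}}\;\simeq\;\cdots\;\simeq\;\Mod^{n-1}_{\Rep(G)},
\end{equation*}
terminating at the ordinary category $\Rep(G) = \Qcoh^\otimes(\rB G)$ and producing the desired identification.

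The main technical step is \Cref{lem:affinetomod}, which packages the equivariantization statement for an $n$-affine stack into the language of functors valued in the geometric category $\underline{\nVect}$; with that and \Cref{cor:BGnAff} in hand, the corollary is a matter of iterating and keeping track of indices. The only point that warrants care is ensuring that each application of \Cref{lem:affinetomod} takes place in the correct ambient $k$-categorical setting so that the geometric categories $\underline{\mathbf{k}\Vect}$ and the successive $\Mod$ constructions match up; this is guaranteed by the fact that $\rB G$ is simultaneously $k$-affine for all $k$.
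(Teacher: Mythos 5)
Your proposal is correct and follows essentially the same route as the paper: the paper proves the corollary by a single application of \Cref{lem:affinetomod} (with $m=1$, using \Cref{cor:BGnAff} for $j$-affineness of $\rB G$ at every level), whose own proof is exactly the telescoping chain $\Fun(\rB G,\underline{\nVect})\simeq\nQcoh^\otimes(\rB G)\simeq\Mod_{\mathbf{(n-1)}\Qcoh^\otimes(\rB G)}\simeq\cdots\simeq\Mod^{n-1}_{\Rep(G)}$ that you spell out by iterating the one-step statement. The only difference is whether the iteration is packaged inside the lemma or performed explicitly in the corollary's proof.
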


\begin{proof}
$\rB G$ $j$-affine for $j\geq 1$ by \Cref{cor:BGnAff}.  Using \Cref{lem:affinetomod} with $X=\rB G$ and $Y =\Spec(\mathbb{K})$ gives the result. This also establishes \Cref{prop:equiv}.
\end{proof}

\section{Main proofs}\label{section:technicalproofs}
In this section we give the proofs of the mathematical facts that are needed to obtain the physical results in \S\ref{section:symTFT} and \S\ref{section:dynamicalgauging}. In particular the main results in this section are the following:
\begin{itemize}
\item We prove the general form of \Cref{cor:BG1affine}, pertaining to equivariantization, in \Cref{lem:affinetomod}. The result  was used to construct the category $\nRep(G)$ for the purpose of defining bulk TQFTs in \S\ref{subsection:buldTFT} and \S\ref{subsection:generalTQFT}.
\item We define and compute the Drinfeld center for the category of higher quasi-coherent sheaves. This leads to \Cref{Thm:BFNCenter}.
\item We prove a generalized version of \Cref{corr:GroupMorita} in \Cref{thm:MoritaEquivalence}.
\end{itemize}

\subsection{Geometric Categories}\label{subsection:geometriccats}
The category $\Stk$ is particularly well behaved: it is an $\infty$-topos \cite[Proposition 6.2.2.7]{LurieHTT}.  This means that it serves as a good replacement for the category $\Spaces$.  In particular, we can do $\infty$-category theory \emph{internally} to $\Stk$ \cite{martini2022yonedas,martini2024colimits,martini2025presentability,2Topoi}.  This gives the notion of \emph{geometric categories}: categories which have geometric spaces of objects and morphisms, and whose categorical structure respects this geometry\footnote{This data must also include higher morphisms, making it a \emph{simplicial object} in $\Stk$, and satisfy \emph{Segal} and \emph{Rezk completness} conditions.}.  

\begin{definition}\label{def:valuedsheaves}
    Let $\cD$ be an $\infty$-category.  A \emph{$\cD$-valued sheaf} on $\Stk$ is a limit-preserving functor $\Stk^{\op}\to \cD$.  Denote by $\Sh_\cD(\Stk)\subset \Fun(\Stk^{\op},\cD)$ the full subcategory on the $\cD$-valued sheaves.  
\end{definition}
A geometric category is equivalently characterized as a sheaf of categories on $\Stk$ \cite[Section 3.5]{martini2022yonedas}, or as a functor $\Aff^{\op}\to \Cat_\infty$ which satsifies descent with respect to the chosen notion of cover \cite[Proposition 1.1.12]{LurieDAGV}.  The theory of internal $(\infty,n)$-categories has not yet been fully developed, but nonetheless, we proceed analogously. 
\begin{definition}
   The category of \emph{geometric} $n$-categories is the full subcategory 
   \begin{equation}
    \mathbf{n}\Cat_\Stk\subset \Fun(\Stk^{\op},\Cat_{(\infty,n)})
\end{equation}
on the limit preserving functors.  We analogously define geometric (presentable/stable/monoidal) $n$-categories as sheaves of (presentable/stable/monoidal) $n$-categories on $\Stk$, potentially after universe enlargement.\footnote{Presentable and stable internal categories are studied in \cite{martini2025presentability}.} To use the terminology of \Cref{def:valuedsheaves}, these are $n$-category valued sheaves on $\Stk$.
\end{definition}

\begin{example}\label{ex:geometricnVect}
The functor 
   \begin{equation}
       \nQcoh:\Stk^{\op}\to \CAlg(\nPr^\mathrm{L}_\st)
   \end{equation} 
   defines a symmetric monoidal, presentable, stable \'etale geometric $n$-category by \cite[Corollary 14.3.5]{stefanich:thesis}, and an fppf geometric $n$-category\footnote{With the same adjectives.} by \cite[Example 4.6]{StefanichSheaf}, which we denote by $\underline{\nVect}$. 
One should think of this as the \emph{geometric category of $n$-vector spaces}.  By this, we mean that its objects are $n$-vector spaces, but the geometric category $\underline{\nQcoh}$ additionally remembers how to smoothly deform between $n$-vector spaces. 
   
\end{example}

\begin{example}
The functor
    \begin{equation}
        \nQcoh^\otimes(X\times -):\Aff^{\op}\to \CAlg(\nPr^{\mathrm{L}}_\st), 
    \end{equation} 
    for any $X\in \Stk$, defines a symmetric monoidal, presentable stable (\'etale/fppf) geometric $n$-category, which we denote by $\underline{\nQcoh}^\otimes(X)$.  One should think of this as the geometric category which encodes families of $n$-vector spaces parameterized by $X$.  
\end{example}

\begin{example}
    Let $G$ be an algebraic group.  One should think of $\underline{\Qcoh^\otimes}(\rB G)$ as the \emph{geometric category of $G$-representations}. 
\end{example}
Let $\underline{\Fun}$ denote the internal hom for geometric $n$-categories. 
\begin{lemma}\label{lemma:QCohXY}
Consider $X,Y\in \Stk$.  There is a symmetric monoidal equivalence 
\begin{equation}
\underline{\Fun}(X,\underline{\nQcoh}^\otimes(Y))\simeq \underline{\nQcoh}^\otimes(X\times Y)\,.
\end{equation}
\end{lemma}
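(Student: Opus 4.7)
The plan is to exhibit the claimed equivalence at the level of test affines and then upgrade it to a symmetric-monoidal equivalence of geometric $n$-categories. For $S \in \Aff$, the universal property of the internal hom in the Cartesian-closed $\infty$-category $\mathbf{n}\Cat_\Stk$ gives a natural equivalence
\begin{equation*}
\underline{\Fun}(X,\underline{\nQcoh}^\otimes(Y))(S) \simeq \Hom_{\mathbf{n}\Cat_\Stk}(S \times X,\underline{\nQcoh}^\otimes(Y)).
\end{equation*}
Here $X$ and $S$ are viewed as geometric $n$-categories via Yoneda, and $S \times X$ is their product in $\Stk \subset \mathbf{n}\Cat_\Stk$.

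Since $\underline{\nQcoh}^\otimes(Y)$ is a sheaf on $\Stk$ in the sense of \Cref{def:valuedsheaves}, and $S \times X$ is again an object of $\Stk$, the Yoneda-style evaluation collapses the right-hand side to the value of the sheaf at $S \times X$:
\begin{equation*}
\Hom_{\mathbf{n}\Cat_\Stk}(S \times X,\underline{\nQcoh}^\otimes(Y)) \simeq \underline{\nQcoh}^\otimes(Y)(S \times X) \simeq \nQcoh^\otimes(Y \times S \times X).
\end{equation*}
The final identification uses that $\underline{\nQcoh}^\otimes(Y)(T) = \nQcoh^\otimes(Y \times T)$ on affines, extended to stacks by right Kan extension; compatibility with $\nQcoh$'s own definition via Kan extension from $\Aff$ (cf.\ \S\ref{subsection:QCoh} and \cite[Theorem 14.2.9]{stefanich:thesis}) lets these two Kan extensions compose. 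Rearranging the product and applying the same definition on the other side gives $\nQcoh^\otimes(Y \times S \times X) \simeq \underline{\nQcoh}^\otimes(X \times Y)(S)$, proving the equivalence underlying geometric $n$-categories. Naturality in $S$ is automatic since every step is natural.

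For the symmetric-monoidal enhancement, one observes that both sides carry tensor products computed pointwise from $\nQcoh^\otimes(X \times Y \times S)$: the internal hom inherits a symmetric monoidal structure from the target $\underline{\nQcoh}^\otimes(Y)$ (no structure is required on the source $X$), and this pointwise tensor product matches the tensor product on $\underline{\nQcoh}^\otimes(X \times Y)$ directly from its definition. The chain of equivalences above is therefore compatible with tensor products, yielding a symmetric monoidal equivalence.

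The main obstacle is the first step, since the theory of internal $(\infty,n)$-categories in $\Stk$ is not yet fully developed, as the paper itself notes. For $n=1$ the internal-hom adjunction follows from the fact that $\Stk$ is an $\infty$-topos together with \cite{martini2022yonedas,martini2025presentability}. For general $n$, I would proceed inductively on $n$, invoking the presentability and Cartesian closure of $\mathbf{n}\Cat_\Stk$ levelwise (or sidestep the issue by verifying the equivalence only after passing to sections, which is all that is used in the subsequent applications).
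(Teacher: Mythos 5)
Your argument is correct and is essentially the paper's proof unwound pointwise: the paper simply observes that the limit-preserving functor $\nQcoh$ carries the tensor $X\times Y$ in $\Stk$ to the cotensor $\underline{\Fun}(X,\underline{\nQcoh}^\otimes(Y))$ in $\CAlg(\nPr^{\mathrm{L}}_{\st})$, and your evaluation on test affines via the (enriched) Yoneda lemma together with the compatibility of the two Kan extensions is exactly the verification of that cotensor identity. The only caution is notational: $\Hom_{\mathbf{n}\Cat_\Stk}(S\times X,\underline{\nQcoh}^\otimes(Y))$ must be read as the hom-$(\infty,n)$-category (the $\Cat_{(\infty,n)}$-enriched hom), not the mapping space, for the Yoneda step to return all of $\nQcoh^\otimes(Y\times S\times X)$ rather than its groupoid core.
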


\begin{proof}
$\underline{\nQcoh}^\otimes(X)$ defines a limit preserving functor $\nQcoh:\Stk^{\op}\to \nPr^\mathrm{L}_\st$.  In particular, it takes tensors in $\Stk$ to cotensors in $\CAlg(\mathbf{nPr}^\rL_\st)$.  The tensor of $X$ on $Y$ is given by $X\times Y$, and the cotensor of $X$ on $\underline{\nQcoh}^\otimes(Y)$ is given by $\underline{\Fun}(X,\underline{\nQcoh}^\otimes(Y))$, giving the desired equivalence. 
\end{proof}

\subsection{Equivariantization}\label{section:Equivariantization}

We now prove a generalization of the fusion-categorical equivariantization procedure developed in \cite{DGNO}.  This makes use of the external perspective on symmetries presented in \S\ref{subsection:external}, and will serve as the first step in our gauging procedure. 


\begin{warning}
    The following are \textbf{not} equivalances of geometric categories, only of underlying presentably monoidal $(\infty,n)$-categories.  It would be interesting to extend these results to be fully geometric.  We do not do so here, since internal higher algebra is not sufficiently developed at the time of writing.  We refer to \cite[Section 2.5]{martini2025presentability} for the current state of the art, and \cite{martini2022yonedas,martini2022cocartesian,martini2024colimits, 2Topoi} for other relevant constructions in internal categories. 
\end{warning}

\begin{lemma}\label{lem:affinetomod}
 Let $X\in \Stk$ be an $m$-affine stack.  There is an equivalence 
    \begin{equation}
    \Fun(X, \underline{\nVect})\simeq \Mod^{n-m}_{\mathbf{m}{\Qcoh}(X)}\,.
    \end{equation}
\end{lemma}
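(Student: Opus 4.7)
The plan is to first identify the left-hand side with $\nQcoh(X)$ via the internal-hom computation, and then apply the definition of $m$-affineness iteratively.

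First I would apply \Cref{lemma:QCohXY} with $Y = \Spec(\mathbb{K})$. Since $\underline{\nVect} = \underline{\nQcoh}^\otimes(\Spec(\mathbb{K}))$ and $X \times \Spec(\mathbb{K}) \simeq X$, this gives a symmetric monoidal equivalence $\underline{\Fun}(X,\underline{\nVect}) \simeq \underline{\nQcoh}^\otimes(X)$. Passing to the underlying presentable stable $n$-category (i.e.\ evaluating the sheaf at $\Spec(\mathbb{K})$) yields $\Fun(X,\underline{\nVect}) \simeq \nQcoh(X)$, reducing the lemma to an equivalence $\nQcoh(X) \simeq \Mod^{n-m}_{\mathbf{m}\Qcoh(X)}$ of underlying presentable stable $n$-categories. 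As flagged in the warning preceding the statement, we only need this at the level of underlying $(\infty,n)$-categories, which is precisely what the evaluation at a point captures.

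Next I would promote the hypothesis of $m$-affineness to $k$-affineness for all $m \leq k \leq n$. The required fact, extracted from \cite[Chapter 14]{stefanich:thesis}, is that $m$-affineness propagates up the tower $\mathbf{k}\Pr^\mathrm{L}_\st$: once the monadic identification $\mathbf{m}\Qcoh(X) \simeq \Mod_{\mathbf{m-1}\Qcoh(X)}(\mathbf{m}\Pr^\mathrm{L}_\st)$ is given, applying the categorified module construction $\Mod(-):\mathbf{k}\Pr^\mathrm{L}_\st\to \mathbf{(k+1)}\Pr^\mathrm{L}_\st$ (which preserves monadic adjunctions) delivers the analogous identification one level up. Iterating, we obtain for every $k$ in the range $m \leq k \leq n$ an equivalence
\begin{equation}
\mathbf{k}\Qcoh(X) \simeq \Mod_{\mathbf{k-1}\Qcoh(X)}(\mathbf{k}\Pr^\mathrm{L}_\st).
\end{equation}

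Finally I would unwind these identifications by descending induction on $k$: starting from $\nQcoh(X) \simeq \Mod_{\mathbf{n-1}\Qcoh(X)}$ and repeatedly substituting $\mathbf{k}\Qcoh(X) \simeq \Mod_{\mathbf{k-1}\Qcoh(X)}$ inside the module category, the tower collapses after exactly $n - m$ substitutions to give $\nQcoh(X) \simeq \Mod^{n-m}_{\mathbf{m}\Qcoh(X)}$, which combined with the first step completes the proof. The main obstacle I anticipate is the careful bookkeeping of the propagation step: one must check that the categorified $\Mod(-)$ functor correctly converts a monadic identification at level $k-1$ into one at level $k$, with the modules formed in the appropriate ambient $(n+1)$-category, so that the iterated substitution yields the single expression $\Mod^{n-m}_{\mathbf{m}\Qcoh(X)}$ rather than a nested tower of different module categories. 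By contrast, the external-to-internal comparison of Step 1 is essentially formal given \Cref{lemma:QCohXY}.
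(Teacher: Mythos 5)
Your proposal is correct and follows essentially the same route as the paper: identify $\Fun(X,\underline{\nVect})$ with $\nQcoh^\otimes(X)$ via \Cref{lemma:QCohXY} and passage to underlying categories, then iterate the monadic identification $\mathbf{k}\Qcoh^\otimes(X)\simeq \Mod_{\mathbf{(k-1)}\Qcoh^\otimes(X)}$ using the fact that $m$-affineness implies $j$-affineness for all $j\geq m$. Your additional bookkeeping of how the categorified $\Mod(-)$ construction propagates affineness up the tower is a more explicit rendering of the same fact the paper invokes in one line.
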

\begin{proof}
      We get a sequence of equivalences 
      \begin{equation}
          {\Fun}(X,\underline{\nVect})\simeq {\nQcoh}^\otimes(X)\simeq \Mod_{(\mathbf{n-1}){\Qcoh}^\otimes(X)}\simeq \hdots \simeq \Mod^{n-m}_{{\mathbf{m}\Qcoh^\otimes}(X)}\,, 
      \end{equation}
     by, respectively, using Lemma \ref{lemma:QCohXY} and passing to underlying (non-geometric) categories, and the fact that if $X$ is $m$-affine, then it is $j$-affine for $j\geq m$. 
\end{proof}

\subsection{The Drinfeld Center
}\label{subsection:electricsym}
We now build upon the definitions developed earlier in this section to introduce additional structures from higher sheaf theory. In particular, we proceed to define and compute the Drinfeld center for higher categories of quasi-coherent sheaves. Subsequently, we construct the corresponding bulk-to-boundary map for the center. These results culminate in the main result of this section, stated in \Cref{Thm:BFNCenter}. The work in this section also sets the stage for the computations that are done in \S\ref{subsection:extractingelectric} to explicitly study electric 1-form symmetries.

The Drinfeld center plays a key role in the fusion categorical approach to gauging.  We will require a generalization to the $\infty$-categorical setting.  The following is \cite[Definitions 5.3.1.2, and 5.3.1.12]{LurieHA}.
\begin{definition}\label{def:drinfeld}
Let $\cC$ be a (presentable/stable/enriched) monoidal $\infty$-category.  The \emph{Drinfeld center} of $\cC$ is terminal amongst (presentable/stable/enriched) monoidal categories $\cA$ equipped with a morphism $\mathcal{A}\otimes \mathcal{C}\to \mathcal{C}$ making the following diagram commute
\begin{equation}\label{eqn:deltapi}
        \begin{tikzcd}
& \mathcal{A}\otimes \mathcal{C}\ar[dr]&\\
\mathcal{C}\ar[ur,"\iota_\mathcal{    A}\otimes \id_\mathcal{C}"]\ar[rr,swap,"\id_\cC"]&&\mathcal{C},
\end{tikzcd}
\end{equation}   
where $\iota_\mathcal{A}:\mathds{1}\to \mathcal{A} $ is the unit morphism.  The Drinfeld center exists and inherits a natural braided monoidal structure by \cite[Corollary 5.3.1.15]{LurieHA}, and will be denoted by $\FZ(\cC)$.
\end{definition} 

\begin{notation}Consider $f:X\to Y$ in $\prstk$.  Inductively denote 
   \begin{align}
 X^{\times^{i+1}_Y}:=X\times_Y (X^{\times^i_Y})&&(X\times_Y X)^{\times^{i+1}_X}:= (X\times_Y X)\times_X (X\times_Y X)^{\times^i_X}.      
   \end{align} 
\end{notation}

\begin{lemma}\label{lemma:BFNPullback}
  The following is an equivalence in $\prstk$: 
    \begin{equation}
    X\times_{X\times X}((X\times_YX)^{\times^{i+1}_X})\simeq \mathcal{L}Y\times_Y X^{\times^{i+1}_Y}\,.
\end{equation}
\end{lemma}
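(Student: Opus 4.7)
My plan for this proof is to reduce the identity to the base case $i = 0$ via pasting of pullbacks, and then handle the base case by direct computation from the defining pullback presentations of $\mathcal{L}Y$ and $X\times_Y X$.

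First I would establish the base case $X\times_{X\times X}(X\times_Y X)\simeq X\times_Y \mathcal{L}Y$. Using the pullback square presenting $X\times_Y X\simeq (X\times X)\times_{Y\times Y}Y$, the pasting lemma gives $X\times_{X\times X}(X\times_Y X)\simeq X\times_{Y\times Y}Y$, where the resulting map $X\to Y\times Y$ is $\Delta_Y\circ f$ (with $f:X\to Y$ the structure map). Factoring this map through $Y\xrightarrow{\Delta_Y}Y\times Y$ and pasting once more produces $X\times_Y(Y\times_{Y\times Y}Y)\simeq X\times_Y\mathcal{L}Y$, as needed.

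For the general case, I would first identify $(X\times_Y X)^{\times^{i+1}_X}\simeq X^{\times^{i+2}_Y}$ by induction on $i$ from the recursive definitions; the induced map to $X\times X$ is the projection to the first and last factors of the resulting $(i+2)$-tuple. The key step is to rewrite $X^{\times^{i+2}_Y}\simeq (X\times_Y X)\times_Y X^{\times^i_Y}$ where the $X\times_Y X$ factor is identified with the outer pair (so the map to $X\times X$ factors through $X\times_Y X\to X\times X$) and $X^{\times^i_Y}$ accounts for the $i$ middle factors. Since the $X^{\times^i_Y}$ piece is transverse to $X\times X$, the pasting lemma yields
\[
X\times_{X\times X}\bigl((X\times_Y X)\times_Y X^{\times^i_Y}\bigr)\simeq \bigl(X\times_{X\times X}(X\times_Y X)\bigr)\times_Y X^{\times^i_Y},
\]
and applying the base case gives $(X\times_Y\mathcal{L}Y)\times_Y X^{\times^i_Y}\simeq \mathcal{L}Y\times_Y X^{\times^{i+1}_Y}$, which is the desired right-hand side.

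The main obstacle is arranging the decomposition $X^{\times^{i+2}_Y}\simeq (X\times_Y X)\times_Y X^{\times^i_Y}$ so that the first-and-last projection to $X\times X$ really does factor through the $X\times_Y X$ summand; this requires being careful about which of the many equivalent fiber-product presentations one uses, since the naive grouping of the first two and last $i$ factors produces the wrong map to $X\times X$. Once this bookkeeping is in place, the rest of the proof is formal manipulation of pullbacks, and no further $\infty$-categorical input beyond associativity and the pasting lemma is required.
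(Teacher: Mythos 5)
Your proof is correct and follows essentially the same route as the paper: the base case is handled by pasting the square exhibiting $X\times_Y X\simeq (X\times X)\times_{Y\times Y}Y$ against $X\to X\times X$, and the general case reduces to it via the identity $(X\times_Y X)^{\times^{i}_X}\simeq X^{\times^{i+1}_Y}$. Your explicit regrouping of $X^{\times^{i+2}_Y}$ into the outer pair and the middle factors is exactly the bookkeeping the paper compresses into the phrase ``and induction,'' and your caution about which projection to $X\times X$ survives the regrouping is well placed.
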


\begin{proof}
The case $i=0$ follows from the following diagram and pasting law.  The remaining cases follow from the identity $(X\times_Y X)^{\times_X^i}\simeq X^{\times^{i+1}_Y}$ (which is itself a consequence of the pasting law) and induction. 
   \begin{equation}
       \begin{tikzcd}
          &X\times_Y X\ar[d]\ar[r]\ar[dr,phantom,very near start, "\lrcorner"]&Y \ar[d]\\
          X\ar[r]&X\times X\ar[r]&Y\times Y
       \end{tikzcd}
   \end{equation}
\end{proof}

\noindent Suppose that $f:X\to Y$ is a map in $\prstk$.  There is a correspondence 
\begin{equation}
    \mathcal{L}Y:=Y\times_{Y\times Y}Y\xleftarrow{\pi:=f\times_{f\times \id}f}X\times_{X\times Y}X\xrightarrow{\delta:=\id\times _{\pi_Y}\id}X\times_YX
\end{equation}
where $\pi_Y:X \times Y\rightarrow Y$ is a projection.  Note that there is also an identification $\mathcal{L}Y\times_Y X\simeq X\times _{X\times Y}X$. The following is a straightforward generalization of \cite[Theorem 5.3]{ben2010integral}, and we will refer there for further details.  
\begin{theorem}\label{Thm:BFNCenter}
   Suppose that $f:X\to Y$ is a map in $\prstk$ satisfying $\nQcoh$-descent, and $Y$ is $n$-affine.  There is an equivalence of braided categories
   \begin{equation}
       \FZ(\nQcoh^*(X\times_Y X))\simeq \nQcoh^\star(\mathcal{L} Y). 
   \end{equation}
   The canonical map $\FZ(\nQcoh^\star(X\times_YX))\to \nQcoh^\star(X\times_YX)$ is identified with 
   \begin{equation}
\delta_*\pi^*:\nQcoh^\star(\mathcal{L}Y)\to \nQcoh^\star(X\times_YX). 
   \end{equation}
\end{theorem}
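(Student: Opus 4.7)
The plan is to adapt the argument of \cite[Theorem 5.3]{ben2010integral} to the $n$-categorical setting, using \Cref{Prop:BFN} and \Cref{thm:MoritaEquivalence} as the main technical inputs. The proof proceeds in three stages: reduce to computing $\FZ(\nQcoh^\otimes(Y))$ via Morita invariance; compute this center as $\nQcoh^\star(\cL Y)$ through a Hochschild-type identification; and finally recognise the forgetful functor as $\delta_*\pi^*$.

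First, since $f$ satisfies $\nQcoh$-descent and $Y$ is $n$-affine, \Cref{thm:MoritaEquivalence} (combined with \Cref{corr:IntegralFunctor}) realises $\nQcoh^\otimes(X)$ as an invertible bimodule implementing a Morita equivalence between the $E_1$-algebras $\nQcoh^\otimes(Y)$ and $\nQcoh^\star(X\times_Y X)$ in $\nPr^\rL_\st$. The Drinfeld center is a Morita invariant of $E_1$-algebras---it may be described intrinsically as natural endomorphisms of the identity on the module $n$-category---so one obtains a braided equivalence $\FZ(\nQcoh^\star(X\times_Y X)) \simeq \FZ(\nQcoh^\otimes(Y))$.

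Next, because $\nQcoh^\otimes(Y)$ is a commutative algebra in $\nPr^\rL_\st$, self-duality of $\nQcoh^\otimes(Y)$ over its enveloping algebra (\Cref{cor:nQCohDual} applied along the diagonal $Y\to Y\times Y$) produces the Hochschild-type identification
\[
    \FZ(\nQcoh^\otimes(Y)) \simeq \nQcoh^\otimes(Y) \otimes_{\nQcoh^\otimes(Y\times Y)} \nQcoh^\otimes(Y),
\]
and \Cref{Prop:BFN} applied to the span $Y\to Y\times Y\leftarrow Y$ identifies this with $\nQcoh^\star(Y\times_{Y\times Y}Y)=\nQcoh^\star(\cL Y)$ endowed with its convolution/braided structure. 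For the forgetful map, one observes that $\FZ(\cA)\to \cA$ sends a central object to its underlying object, which corresponds to the action on the monoidal unit of $\cA = \nQcoh^\star(X\times_Y X)$ (the diagonal sheaf on $X\times_Y X$). The $i=0$ case of \Cref{lemma:BFNPullback} supplies the geometric identity $X\times_{X\times X}(X\times_Y X) \simeq \cL Y\times_Y X \simeq X\times_{X\times Y} X$, which exhibits this action as arising from the correspondence $\cL Y \xleftarrow{\pi} X\times_{X\times Y}X \xrightarrow{\delta} X\times_Y X$ and hence as the integral transform $\delta_*\pi^*$.

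The hard step is this last identification. While the braided equivalence itself follows formally from Morita invariance and \Cref{Prop:BFN}, recognising the forgetful map concretely as $\delta_*\pi^*$ requires careful bookkeeping of the invertible bimodule data realising the Morita equivalence and of the higher simplicial structure of the \v Cech nerve of $f$. \Cref{lemma:BFNPullback} supplies the essential geometric input, but translating the abstract forget-the-half-braiding map into this explicit integral transform in the $n$-categorical setting is where most of the technical work lies.
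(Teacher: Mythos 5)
Your route is genuinely different from the paper's. The paper works directly with $A=\nQcoh^\star(X\times_YX)$: it resolves $A$ by its Bar construction over $B=\nQcoh^\otimes(X)$, writes $\FZ(A)\simeq \Hom_{\Bimod_{A|A}}(A,A)$ as a limit of terms $\Hom_{\Bimod_{B|A}}(A^{\times_B^{i+1}},B)$, identifies each term with $\nQcoh^\star(\cL Y\times_Y X^{\times_Y^{i+1}})$ via \Cref{Prop:BFN} and \Cref{lemma:BFNPullback}, and recognises the resulting cosimplicial limit as the \v{C}ech descent datum for $\cL Y\times_Y X\to \cL Y$. You instead transport the center across the Morita equivalence of \Cref{thm:MoritaEquivalence} and compute $\FZ(\nQcoh^\otimes(Y))$ by the loop-space formula. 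Your first half is conceptually cleaner, but it costs you in two places. First, \Cref{thm:MoritaEquivalence} carries the extra hypothesis ``$n\geq 2$ \textbf{or} $n=1$ and $f$ affine-schematic,'' which \Cref{Thm:BFNCenter} does not assume; your proof therefore does not cover the $n=1$ case for general $f$ satisfying descent. Second, applying \Cref{cor:nQCohDual} and \Cref{Prop:BFN} to the span $Y\to Y\times Y\leftarrow Y$ requires $Y\times Y$ to be $n$-affine, which is neither a hypothesis nor established in the paper; the paper's argument avoids this entirely by only ever invoking $n$-affineness of $Y$ itself.

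The more serious issue is the second assertion of the theorem. You correctly name the relevant correspondence and the $i=0$ case of \Cref{lemma:BFNPullback}, but you then declare the identification of the forgetful functor with $\delta_*\pi^*$ to be ``where most of the technical work lies'' without supplying it. This is not a bookkeeping afterthought: it is half the content of the theorem and is precisely what is used downstream in \Cref{thm:fiber}. Moreover, your setup makes it strictly harder than the paper's. In the paper's computation the forgetful functor is, essentially by construction, the projection of the limit onto its zeroth vertex $\Hom_{\Bimod_{B|A}}(A,B)\simeq \nQcoh^\star(X\times_{X\times Y}X)$ followed by the counit into $A$, and \Cref{Prop:BFN} identifies these two arrows with $\pi^*$ and $\delta_*$ respectively. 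In your setup the forgetful functor of $\FZ(\nQcoh^\otimes(Y))$ lands in $\nQcoh^\otimes(Y)$, and one must then conjugate by the invertible bimodule $\nQcoh^\otimes(X)$ to land in $\nQcoh^\star(X\times_YX)$; tracking the unit of $A$ through that conjugation is exactly the argument you have omitted. As it stands the proposal proves (under stronger hypotheses) the braided equivalence but not the identification of the bulk-to-boundary map.
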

\begin{proof}
    Denote $A=\nQcoh^\star(X\times_YX)$ and $B=\nQcoh^\otimes(X)$.  We first introduce two identifications necessary for our proof.  We consider the Bar construction of $A$: There is a simplicial complex $C^\bullet$ with $C^i\simeq A^{\times_B^{i+2}}$ the $(i+2)$-fold relative product over $B$, and $\colim_{\Delta^{\op}}C^\bullet\simeq A$.  Note that there is an induction/restriction adjunction 
    \begin{equation}\label{prf:ctr1}
        \begin{tikzcd}
            \Bimod_{B|A}\ar[r,shift left=2,"A\otimes_B-"{name=A}]&\Bimod_{A|A}\ar[l,shift left=2,""{name=B}]\ar[from=A,to=B,phantom,"\dashv"rotate=-90]
        \end{tikzcd}
    \end{equation}
Using (respectively) \cite[Theorem 5.3.1.30]{LurieHA}, the Bar construction, and the adjunction in \Cref{prf:ctr1},
    we have the equivalence:
    \begin{equation}
    \FZ(A)\simeq\Hom_{\Bimod_{A|A}}(A,A)\simeq \lim \Hom_{\Bimod_{A|A}}(A^{\times^{i+2}_B},A)
        \simeq \lim \Hom_{\Bimod_{B|A}}(A^{\times_B^{i+1}},B)\,.
    \end{equation}
    Furthermore, applying
     \Cref{Prop:BFN} and \Cref{lemma:BFNPullback}, we obtain the equivalence
    \begin{align}
         \lim \Hom_{\Bimod_{B|A}}(A^{\times_B^{i+1}},B)
        &\simeq \lim\nQcoh^\star(X\times_{X\times X}(X\times_YX)^{\times^{i+1}_X})\\ \notag
        &\simeq\lim \nQcoh^\star(\mathcal{L}Y\times_YX^{\times_Y^{i+1}})\\
        &\simeq \nQcoh^\star(\mathcal{L}Y), 
    \end{align}
    where the final step follows since the right hand side is identified with $\nQcoh$ of the \v{C}ech complex of $\mathcal{L}Y\times_YX\to \mathcal{L}Y$.  Since $X\to Y$ satisfies descent, so too does the map $\mathcal{L}Y\times_Y X\to \mathcal{L}Y$. 

Under the above identification, the canonical functor $\FZ(\nQcoh^\star(X\times_YX))\to \nQcoh^\star(X\times_YX)$ corresponds to  
\begin{equation}
    \nQcoh^\star(\mathcal{L}Y)\xrightarrow{\pi^*}\nQcoh^\star(\mathcal{L}Y\times_YX)\simeq \nQcoh^\star(X\times_{X\times Y} X)\xrightarrow{\delta_*}\nQcoh^\star(X\times_YX). 
\end{equation}
\end{proof}

\subsection{Morita Equivalences}\label{subsection:morita}
We now explain the Morita equivalence between two types of modules for higher quasi-coherent sheaves. The first type uses the symmetric monoidal structure on $\nQcoh$, and the latter uses the convolution monoidal structure. The physical contents as it pertains to gauge theory was discussed in \S\ref{subsection:extractingelectric}.  The following generalizes \cite[Theorem 1.3]{BFNMorita}. 
\begin{theorem}\label{thm:MoritaEquivalence}
    Suppose that $X,Y$ are prestacks, $Y$ is $n$-affine, and $f:X\to Y$ satisfies $\nQcoh$-descent.  Suppose that $n\geq 2$ \underline{\textbf{or}} $n=1$ and $f: X\to Y$ is affine-schematic.  There is a Morita equivalence 
    \begin{equation}
        \varphi:\Mod_{\nQcoh^\otimes(X)}\xrightarrow{\sim} \Mod_{\nQcoh^\star(X\times_YX)}. 
    \end{equation}
\end{theorem}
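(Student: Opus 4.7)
The plan is to adapt the proof of the Ben--Zvi--Francis--Nadler Morita theorem \cite{BFNMorita} to the iterated $\mathbf{(n+1)}\Pr^\rL_\st$ framework of \cite{stefanich:thesis}. The key structural input is that $\nQcoh^\otimes(X)$ functions simultaneously as a commutative algebra in $\Mod_{\nQcoh^\otimes(Y)}$ (via pullback along $f$) and as a compact generator that implements a Morita equivalence with $\nQcoh^\star(X\times_Y X)$, the latter identified with the relevant enveloping algebra.

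The first step is to invoke \Cref{Prop:BFN}, using $n$-affineness of $Y$, to identify the ``two-sided'' relative tensor product
\[
\nQcoh^\otimes(X) \otimes_{\nQcoh^\otimes(Y)} \nQcoh^\otimes(X) \simeq \nQcoh^\star(X\times_Y X),
\]
which exhibits $\nQcoh^\star(X\times_Y X)$ as $\End_{\nQcoh^\otimes(Y)}(\nQcoh^\otimes(X))$ in $\mathbf{(n+1)}\Pr^\rL_\st$. The second step is to promote $\nQcoh^\otimes(X)$ to a compact generator: for $n \geq 2$ this is \Cref{lemma:inductivedescent} applied to $f$, together with the observation in the proof of \Cref{cor:BGnAff} that $n$-descent for $\nQcoh$ suffices to make $f^*$ monadic in $\mathbf{(n+1)}\Pr^\rL_\st$. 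For $n = 1$ the iterative argument fails and we invoke affine-schematicness of $f$ to reduce to the original \cite[Theorem 1.3]{BFNMorita}.

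With these two inputs, the candidate equivalence is defined by base change along the bimodule $\nQcoh^\otimes(X)$, namely
\[
\varphi(M) := \nQcoh^\otimes(X) \otimes_{\nQcoh^\otimes(Y)} M,
\]
carrying its residual $\nQcoh^\star(X\times_Y X)$-action via the enveloping identification. The quasi-inverse sends $N$ to $N \otimes_{\nQcoh^\star(X\times_Y X)} \nQcoh^\otimes(X)$. One composition evaluates to $N$ directly by associativity and the BFN identification, while the other collapses the two-sided bar resolution $\{\nQcoh^\otimes(X^{\times_Y \bullet+1})\}$ down to $\nQcoh^\otimes(Y)$; convergence of this totalization is exactly the descent hypothesis on $f$ combined with monadicity.

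The main obstacle is the dimensional split in the statement. For $n \geq 2$ the argument is essentially formal: \Cref{lemma:inductivedescent} gives monadicity uniformly and the relative tensor products behave well because $\mathbf{(n+1)}\Pr^\rL_\st$ has sufficient room to realize them as colimits of Čech-style diagrams. For $n = 1$, however, the relative tensor product in $\Pr^\rL_\st$ is delicate, and one has to lean on affine-schematicness of $f$ so that $X\times_Y X$ is genuinely a prestack for which \Cref{Prop:BFN} applies; this is the step that must be handled separately from the $n \geq 2$ induction. A final bookkeeping subtlety is matching conventions so that $\varphi$ is genuinely $\mathbf{(n+1)}\Pr^\rL_\st$-linear on both sides, which should follow by inspecting the naturality of the bar/monad constructions in \cite[Chapter 7]{stefanich:thesis}.
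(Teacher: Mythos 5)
Your overall skeleton matches the paper's: both arguments rest on (i) the integral-transform identification $\nQcoh^\otimes(X)\otimes_{\nQcoh^\otimes(Y)}\nQcoh^\otimes(X)\simeq\nQcoh^\star(X\times_YX)\simeq\End^\rL_{\nQcoh^\otimes(Y)}(\nQcoh^\otimes(X))$ coming from \Cref{Prop:BFN} and self-duality (\Cref{cor:nQCohDual}), and (ii) descent along $f$ to collapse the \v{C}ech/bar diagram back to $\nQcoh^\otimes(Y)$, with the $n\geq 2$ versus $n=1$ dichotomy entering only through adjointability of that diagram. The paper packages this as monadicity of $\varphi\simeq\Fun_{\nQcoh^\otimes(Y)}(\nQcoh^\otimes(X),-)$ via enriched Barr--Beck--Lurie (limit/colimit preservation from self-duality, conservativity from descent together with the passage-to-adjoints property \cite[Theorem 8.5.3]{stefanich:thesis}), whereas you build an explicit quasi-inverse and check the two composites; those packagings are interchangeable.

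However, two of your specific inputs do not do the job you assign them. First, \Cref{lemma:inductivedescent} concerns monadicity of $f^*:\npQcoh^\otimes(Y)\to\npQcoh^\otimes(X)$, one categorical level above where the Morita equivalence lives; identifying $\npQcoh^\otimes(Y)$ with $\Mod_{\nQcoh^\otimes(Y)}$ would require $Y$ to be $(n+1)$-affine, which is not assumed (only $n$-affineness is), and in any case the functor whose monadicity is needed is $\Fun_{\nQcoh^\otimes(Y)}(\nQcoh^\otimes(X),-)$ on $\Mod_{\nQcoh^\otimes(Y)}$, not $f^*$. The paper instead runs the adjointability argument directly on the diagram $\nQcoh^\otimes(X^\bullet/Y)$ of $\nQcoh^\otimes(Y)$-modules, citing \cite[Theorem 14.2.9]{stefanich:thesis} for $n\geq 2$ and \cite[Proposition 10.3.1]{stefanich:thesis} for $n=1$ with $f$ affine-schematic; you need that step, not the lemma you quote. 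Second, for $n=1$ you propose to reduce to \cite[Theorem 1.3]{BFNMorita} verbatim, but that theorem carries perfect-stack hypotheses absent here; affine-schematicness of $f$ is used only to secure adjointability (base change/projection formula) at the $1$-categorical level, after which the argument is uniform in $n$. Finally, ``compact generator'' is not the operative notion in $\mathbf{(n+1)}\Pr^\rL_\st$: what is actually needed and proved is that $\nQcoh^\otimes(X)$ is self-dual over $\nQcoh^\otimes(Y)$ (so $\varphi$ preserves limits and colimits) and that $\varphi$ is conservative, the latter being exactly where descent is consumed.
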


\begin{proof}
    We aim to show that the functor $\varphi:=\Fun_{\nQcoh^\otimes(X)}(\nQcoh^\otimes(Y),-):\Mod_{\nQcoh^\otimes(X)}\to \mathbf{(n+1)}\Pr^{\mathrm{L}}_\st$ is monadic.  Tensoring by $\nQcoh^\otimes(X)$ provides a left adjoint.  By \Cref{cor:nQCohDual}, we see that $\nQcoh^\otimes(X)$ is self dual over $\nQcoh^\otimes(Y)$.  Then $\varphi\simeq \nQcoh^\otimes(Y)\otimes _{\nQcoh^\otimes(X)}(-)$ preserves limits and colimits.  We wish to show that $\varphi$ is conservative.  It suffices to show that $\nQcoh(Y)$ can be written as a colimit of objects in the image of the adjoint to $\varphi$.  

By enriched Barr-Beck-Lurie \cite[Theorem 7.4.10]{stefanich:thesis}, we conclude that $\varphi$ is monadic,  and so it factors as an isomorphism through $\Mod_{\End^\mathrm{L}_{\nQcoh^\otimes(Y)}(\nQcoh^\otimes(X))}$.  The result follows by using Proposition \ref{Prop:BFN} to identify $\End^{\mathrm{L}}_{\nQcoh^\otimes(Y)}(\nQcoh^\otimes(X))\simeq \nQcoh^\star(X\times_Y X)$.  
   
\end{proof}

\section{Conclusion and Discussion}\label{section:discussion}
We have highlighted the utility of geometric categories by developing a framework which recreates the symmetries of $G$-gauge theories.  This is accomplished by first equivariantizing the $G$-action to obtain the category $\nRep(G)$, from which we build a TQFT in $(n+2)$d.  $\nRep(G)$-modules correspond to boundaries of the TQFT, and their endomorphisms are then computed to determine the symmetry on the $(n+1)$d boundary.  We are able to identify the electric and magnetic symmetries of gauge theories in this context. 

This constitutes a nontrivial preliminary step toward a complete picture of categorical gauging.  Much remains to be understood, both on the mathematical and physical side.  We would like to recreate many of the results here for versions of $\nQcoh^\star(G)$, with multiplication twisted by $\omega\in \H^3(\rB G,\mathbb{G}_m)$, and explore Tambara-Yamagami in this language.

We have only discussed flat connections, avoiding the curvature which induces dynamics in the path-integral.  We would like to understand what additional data is associated to the physical boundary that accounts for dynamical gauging.  In light of \cite{StefanichSheaf}, many of the arguments of this work can be applied to any sheaf theory.  It would be desirable to construct a sheaf theory which encodes non-flat connections, and recreate the results there.  This would likely require a deeper understanding of internal higher algebra for the purpose of working within the smooth topos.  Perhaps in this case, we will observe that the operator that implements magnetic symmetry is directly related to the curvature of the gauge field.

\section*{Acknowledgments}
It is a pleasure to thank 
Andrea Antinucci,
André Henriques,
Patrick Kinnear,
Liang Kong,
 and Jackson Van Dyke
for helpful discussions. 
We thank Andrea Antinucci, Thomas Bartsch, Yuhan Gai, and Sakura Schäfer-Nameki for detailed feedback on an earlier draft.
We would like to thank the Isaac Newton Institute for Mathematical Sciences, Cambridge, for support and hospitality during the program:  Quantum field theory with boundaries, impurities, and defects, where work on this paper was undertaken.  DS would like to thank the University of Oxford for hosting him during the writing process.  This work was supported by EPSRC grant EP/Z000580/1.
DS is supported by VILLUM FONDEN, VILLUM Young Investigator grant 42125. 
MY is supported by the EPSRC Open Fellowship EP/X01276X/1.

\bibliographystyle{alpha}
\bibliography{ref.bib}
\end{document}